\documentclass[reqno,11pt]{amsart}
\usepackage{amsmath, latexsym, amsfonts, amssymb, amsthm, amscd}
\usepackage[rightcaption]{sidecap}
\usepackage{multirow,stmaryrd,appendix}
\usepackage{color}
\usepackage{graphicx,epsf,psfrag,caption,subcaption}
\usepackage[hidelinks, pdfstartview=FitV, pagebackref=false]{hyperref}

\usepackage{enumerate}
\usepackage[shortlabels]{enumitem}

\setlength{\oddsidemargin}{5mm}
\setlength{\evensidemargin}{5mm}
\setlength{\textwidth}{150mm}
\setlength{\headheight}{0mm}
\setlength{\headsep}{12mm}
\setlength{\topmargin}{0mm}
\setlength{\textheight}{220mm}
\setcounter{secnumdepth}{2}

\numberwithin{equation}{section}

\newtheorem{theorem}{Theorem}[section]
\newtheorem{lemma}[theorem]{Lemma}
\newtheorem{proposition}[theorem]{Proposition}

\newtheorem{rem}[theorem]{Remark}

\newtheorem{definition}[theorem]{Definition}

\newcommand{\ind}{\mathbf{1}}

\renewcommand{\tilde}{\widetilde}

\newcommand{\cF}{{\ensuremath{\mathcal F}} }
\newcommand{\cP}{{\ensuremath{\mathcal P}} }

\newcommand{\cN}{{\ensuremath{\mathcal N}} }

\newcommand{\cZ}{{\ensuremath{\mathcal Z}} }

\newcommand{\bP}{{\ensuremath{\mathbf P}} }
\newcommand{\bQ}{{\ensuremath{\mathbf Q}} }
\newcommand{\bE}{{\ensuremath{\mathbf E}} }


\DeclareMathSymbol{\leqslant}{\mathalpha}{AMSa}{"36} 
\DeclareMathSymbol{\geqslant}{\mathalpha}{AMSa}{"3E} 
\DeclareMathSymbol{\eset}{\mathalpha}{AMSb}{"3F}     
\newcommand{\dd}{\,\text{\rm d}}             
\newcommand{\suptwo}[2]{\sup_{\substack{#1 \\ #2}}} 
\newcommand{\sumtwo}[2]{\sum_{\substack{#1 \\ #2}}} 
\newcommand{\limtwo}[2]{\lim_{\substack{#1 \\ #2}}}     
\newcommand{\liminftwo}[2]{\liminf_{\substack{#1 \\ #2}}} 


\newcommand{\bbE}{{\ensuremath{\mathbb E}} }

\newcommand{\bbL}{{\ensuremath{\mathbb L}} }

\newcommand{\bbN}{{\ensuremath{\mathbb N}} }

\newcommand{\bbP}{{\ensuremath{\mathbb P}} }

\newcommand{\bbR}{{\ensuremath{\mathbb R}} }


\newcommand{\ga}{\alpha}
\newcommand{\gb}{\beta}
\newcommand{\gd}{\delta}
\newcommand{\gep}{\varepsilon}       

\newcommand{\gG}{\Gamma}

\newcommand{\gD}{\Delta}

\newcommand{\go}{\omega}

\newcommand{\gl}{\lambda}

\newcommand{\gs}{\sigma}

\makeatletter
\def\captionfont@{\footnotesize}
\def\captionheadfont@{\scshape}

\long\def\@makecaption#1#2{%
  \vspace{2mm}
  \setbox\@tempboxa\vbox{\color@setgroup
    \advance\hsize-6pc\noindent
    \captionfont@\captionheadfont@#1\@xp\@ifnotempty\@xp
        {\@cdr#2\@nil}{.\captionfont@\upshape\enspace#2}%
    \unskip\kern-6pc\par
    \global\setbox\@ne\lastbox\color@endgroup}%
  \ifhbox\@ne 
    \setbox\@ne\hbox{\unhbox\@ne\unskip\unskip\unpenalty\unkern}%
  \fi
  \ifdim\wd\@tempboxa=\z@ 
    \setbox\@ne\hbox to\columnwidth{\hss\kern-6pc\box\@ne\hss}%
  \else 
    \setbox\@ne\vbox{\unvbox\@tempboxa\parskip\z@skip
        \noindent\unhbox\@ne\advance\hsize-6pc\par}%
\fi
  \ifnum\@tempcnta<64 
    \addvspace\abovecaptionskip
    \moveright 3pc\box\@ne
  \else 
    \moveright 3pc\box\@ne
    \nobreak
    \vskip\belowcaptionskip
  \fi
\relax
}
\makeatother
\def\writefig#1 #2 #3 {\rlap{\kern #1 truecm
\raise #2 truecm \hbox{#3}}}


\newcommand{\tf}{\textsc{f}}
\newcommand{\tg}{\textsc{g}}
\newcommand{\tl}{\textsc{l}}

\newcommand{\var}{\mathrm{Var}}

\begin{document}

\title[Generalized pinning model]{
Localization,  big-jump regime  and  the effect disorder for
a class of generalized pinning models}

\author{Giambattista Giacomin}
\address{
  Universit\'e de Paris,   Laboratoire de Probabilit{\'e}s, Statistique et Mod\'elisation, UMR 8001,
            F- 75205 Paris, France
}

\author{Benjamin Havret}
\address{
  Universit\'e de Paris,   Laboratoire de Probabilit{\'e}s, Statistique et Mod\'elisation, UMR 8001,
            F- 75205 Paris, France
}

\begin{abstract}
One dimensional pinning models 
have been widely studied in the physical and mathematical literature, also in presence of disorder.
Roughly speaking, they undergo a transition between a delocalized phase 
and a localized one. In mathematical terms these models are  obtained by modifying the distribution of a discrete renewal process via a Boltzmann factor with an energy that contains only one body potentials. 
For some more complex models, notably pinning models based on higher dimensional renewals, it has been shown that 
other phases may be present. 

We study a generalization of the one dimensional pinning model in which the energy may depend in a nonlinear way on the contact fraction: this class of models contains the circular DNA case considered for example in  \cite{cf:bar2}.
We give a full solution of this generalized pinning model in absence of disorder and show that another transition appears. In fact the systems  may display
up to three different regimes: delocalization, partial localization and full localization.  What happens in the  partially localized regime 
can be explained in terms of the
``big-jump" phenomenon for  sums 
of heavy tail random variables under conditioning.

We then show that disorder  completely smears this second transition and we are back to the delocalization versus localization scenario. In fact we show that 
 the disorder, even if arbitrarily weak, is incompatible with the presence of a big-jump. 

\end{abstract}

\subjclass[2010]{60K35, 60K37, 82B44, 60K10}
\keywords{Pinning Model, Disordered Systems,  Critical Behavior, Big-Jump Phenomena,  Disorder Relevance.}

\maketitle


\section{Introduction of the model and results}

\subsection{Phase transitions, disorder and   pinning models}
\label{sec:preintro}

The pinning model, sometimes called Poland-Scheraga  model,  comes up  in a variety of  real world phenomena. For example in the context of  DNA denaturation (this is the Poland-Scheraga  framework \cite{cf:PSbook}), for polymers
in presence of a \emph{defect region} \cite{cf:Fisher,cf:GB,cf:dH},
 for one dimensional interfaces in two dimensional systems with suitable boundary conditions \cite{cf:Velenik}.  But pinning models  have also an intrinsic and theoretical interest, due in particular to the following
crucial features:

\medskip

\begin{itemize}
\item the model is solvable in its homogeneous version: with this respect we cite in particular \cite{cf:Fisher}, but, as pointed out in \cite[App.~A]{cf:GT06}, the solvability mechanism is in reality just the basics  of Renewal Theory developed in mathematics  since the 40s with seminal contributions by 
J.~L.~Doob, P.~Erd\"os, W.~Feller and many others (e.g. \cite[App.~A]{cf:GB} and references therein). Unless we specify otherwise, when we speak of pinning models, like here, we mean \emph{one dimensional} pinning models: there are several higher dimensional generalizations that can and have been considered (e.g. \cite{cf:GLsharp} and references therein), and a class is going to be very relevant to us  and will  soon be mentioned;
\item the model exhibits a transition between a delocalized and a localized regime which is  understood in depth thanks to  solvability. Notably,
the model depends on a real parameter $\ga \ge 0$ and the critical phenomenon depends on $\ga$ is such a way that 
the (de)localization transition can be of arbitrary order, i.e. from first order (no differentiability) to infinite order ($C^\infty$ transition).
In the physical literature   the parameter $c=1+\ga$ is typically used, but we are going to stick to  $\ga$ for the natural link with 
the stable law exponent of the inter-arrival distribution for the underlying renewal process.
\end{itemize}

\medskip

Connected to the features we just outlined two research directions are particularly relevant to us:
\medskip

\begin{enumerate}
\item the effect of disorder on pinning models has been widely studied (\cite{cf:DHV,cf:G,cf:IM} and references therein), both because of its importance from the modeling standpoint and because of the whole spectrum of critical phenomena  generated  by tuning the parameter $\ga$.  In fact,
 understanding 
 the effect of disorder on criticality is an important general issue: the stability of criticality under the introduction of disorder is expected to depend on the critical behavior in the homogeneous system. In a nutshell, less singular transitions are expected/predicted to be more stable. 
 This is notably the content of the so called 
 Harris criterion for  disorder irrelevance \cite{cf:Harris} to which we  come back below (see in particular Remark~\ref{rem:Harris}).
\item (one dimensional) homogeneous pinning models exhibit only one transition, the (de)localization one. But it has been shown that some \emph{generalized} pinning models may exhibit other transitions: this is in particular the case of the generalized Poland-Scheraga  (gPS) model that takes into account the fact that the two DNA strands may have different length and that the pairing between the two strands may not be perfect \cite{cf:GOO,cf:GO04,cf:EON11}. As pointed out in \cite{cf:GK}, the gPS model can be seen as  a pinning model based on a two dimensional renewal and its solvability nature (first pointed out in \cite{cf:GOO}) can once again be seen in renewal theory terms. The novel transition exhibited by the gPS is interpreted in the physical literature in analogy with \emph{condensation phenomena} \cite{cf:EON11}. In renewal theory terms it is a phenomenon    for conditioned sums of independent heavy tail random variables that goes under the name of \emph{big-jump} \cite{cf:AL,cf:BGK,cf:DDS08}. The big-jump phenomenon has attracted attention in the mathematical community also in connection with other condensation phenomena
(see \cite{cf:FLS} and references therein). We refer to \cite{cf:VBB}, and references therein, for more on big-jump regimes in physical systems.  
\end{enumerate}

\medskip

We consider a  generalization of the pinning model which is simpler than the ones just mentioned in (2). This model is  based on one dimensional renewals but, in the homogeneous set-up, it  exhibits 
a condensation/big-jump transition, in addition to  the (de)localization transition. The circular DNA models studied in \cite{cf:RuBr,cf:bar1,cf:bar2} have been one of the motivations of our work and appear as a particular case of the family we study.  

We study the effect of the disorder on this class of models and our main result 
is that condensation/big-jump transitions do not withstand the introduction of disorder. By this we mean  that  the transition  is completely washed out and  there is no condensation/big-jump  in presence of disorder, even an arbitrarily weak disorder.
In the Harris criterion language, 
disorder is therefore \emph{relevant}, and in a very drastic way, even if, as we will explain, big-jump transitions are rather smooth transitions and a direct application of Harris criterion \cite{cf:Harris} does not suggest disorder relevance. 

\subsection{The generalized pinning model}
The model is based on the discrete renewal process $\tau=(\tau_j)_{j=0,1, \ldots}$ with $\tau_0=0$,
that is, if we set $\eta_j:=\tau_{j}-\tau_{j-1}$ we have that  $(\eta _j)_{j=1,2, \ldots}$ is an IID sequence taking values in $\bbN:=\{1,2, \ldots\}$. By using $\eta$ for  $\eta_j$ we set $K(n):=\bP(\eta=n)$ and assume that
\begin{equation}
\label{eq:K}
K(n) \stackrel{n\to \infty} \sim \frac{C_K}{n^{1+\ga}}\, ,
\end{equation}
for $\ga>0$ and a positive constant $C_K$.
While mathematically not really necessary, we assume that $K(n)>0$ for every $n\in \bbN$: this  does simplify some proofs and it 
is assumed in part of the literature that we use. We point out that for 
the (bio-)physical interpretation of the model   $K(1)>0$ is very natural (e.g. \cite[Ch.~1]{cf:GB})
We stress that $\sum_{n \in \bbN} K(n)=1$ and note that $\bE[\eta] \in (1, \infty)$ if $\ga >1$, while 
$\bE[\eta]=\infty$ if $\ga\in (0,1]$. The generalization to regularly varying $K(\cdot)$, see e.g. \cite[App. A.4]{cf:GB} is possible \cite[Ch.~4]{cf:thesis-BH} and  in most cases it is straightforward. But it carries a certain burden of notations and technicalities that cannot be motivated in terms of new phenomena.

\medskip

We are going to see $\tau$ as a random subset of $\bbN \cup\{0\}$, which almost surely contains $0$ and infinitely many other points. In particular $\gd_n:= \ind_{n \in \tau}$ is the indicator function that there exists $j$ such that $\tau_j=n$ and 
$\vert \tau \cap (0, N]\vert= \sum_{n=1}^N \gd_n= \sup\{j=1,2, \ldots:\, \tau_j \le N \}$. 

The class of models we present is based on a function $(m, N) \mapsto \Psi (m, N)$ defined for $N \in \bbN$ and $m \in \{ 1, \ldots, N\}$. We give here the  conditions we require on $\Psi$:

\medskip

\begin{definition}
\label{def:Psi}
We assume that 
\begin{equation}
\label{eq:defPsi0}
\Psi(m, N)\, =\, Q(m, N)\exp\left( N H(m/N)\right)\, , 
\end{equation}
with
\begin{enumerate}
\item $H: [0,1] \to \bbR \cup \{-\infty\}$  concave and real analytic in  the interior  of its domain; moreover 
we assume that $H$ is continuous up to the boundary points, including (with abuse of notation) the possibility that 
$H(0)$ and/or $H(1)$ are equal to $-\infty$.
\item $Q(m, N)\ge 0$ and 
\smallskip
\begin{itemize}
\item for every $b>0$ there exists $c>0$ such that for every
$N$ and $m\le N$ we have 
\begin{equation}
\label{eq:defPsi3}
Q(m, N)\, \le \, c \exp\left(b N\right)\, ;
\end{equation}
\item  for every $u, v \in (0,1)$, $u< v$, and every $b>0$ there exists $c>0$ such that for every
$N$ and $m$ with $m/N \in [u,v]$ we have
\begin{equation}
\label{eq:defPsi1}
Q(m, N)\, \ge  \, c \exp\left(-b N\right)\, .
\end{equation}
\end{itemize}
\end{enumerate}
\end{definition}

\medskip

These conditions  readily imply that for $\rho \in (0,1)$
\begin{equation}
\label{eq:fromdefPsi}
\limtwo{N \to \infty:}{m/N \to \rho} \frac 1N \log \Psi(m, N)\, =\, H(\rho)\, .
\end{equation}
Moreover, we will say that $H$ is trivial if $H''(\rho)=0$ for every $\rho\in (0,1)$: this is the case of
$H$ affine. Whenever $H$ is not trivial, $H$ is strictly convex because it is analytic. 

\medskip

While a full analysis is possible, to keep reasonably concise in  the analysis of the (de)localization transition   we are going to 
assume at times  that  we have 
\begin{equation}
\label{eq:assumefordeloc}
H'(x)\stackrel{x\searrow 0}=H'(0) -c_H x+o(x)\,,
\end{equation}
 with $c_H>0$ and 
 $H'(0):=\lim_{x\searrow 0} H'(x)$. In particular, we are going to assume \eqref{eq:assumefordeloc} only when  $H'(0)< \infty$ (we will see that  $H'(0)< \infty$ is necessary and sufficient for the existence of a delocalization transition). 
Note that if  $H''(0)<0$ exists, as limit of $H''(x)$, then  \eqref{eq:assumefordeloc} holds true 
with $c_H= \vert H''(0)\vert$. 

\medskip

We are now ready to define the non disordered model, that is the probability law $\bP_{N,h}^\Psi$
 that depends also on the real parameter $h$
 \begin{equation}
 \label{eq:mod-without-dis}
 \bP_{N, h}^\Psi (\{A\})\, :=\, \frac{ Z_{N, h}^\Psi (\{A\})}{Z_{N, h}^\Psi}\, ,
 \end{equation}
 where $A \subset \{0, \ldots, N\}$,
 \begin{equation}
 \label{eq:ZA}
 Z_{N, h}^\Psi (\{A\})\, :=\, 
 \bE \left[ \exp\left( h \sum_{j=1}^N \gd_j \right) \Psi \left( \sum_{j=1}^N \gd_j, N \right) \ind_{\tau \cap [0, N]=A} \gd_N\right]\,,
 \end{equation}
 and $ Z_{N, h}^\Psi := \sum_A Z_{N, h}^\Psi (\{A\})$, that is $ Z_{N, h}^\Psi$ coincides with the right-hand side of
 \eqref{eq:ZA} without the restriction to  $\tau \cap [0, N]=A$. We write $\{A\}$, instead of simply $A$, because  $\{A\}$
 is an elementary event and 
 $\bP_{N,h}^\Psi$ is a probability on the discrete space  $\cP (\{0,1, \ldots, N\})$, with $\cP(\cdot)$ the set of all subsets of $\cdot$.  
 Note that $\bP_{N, h}^\Psi (\{A\})=0$ unless both $0$ and $N$ are in $A$.
 
 \smallskip
 
 For the disordered version of the model  we introduce the IID sequence $(\go_n)_{n \in \bbN}$ with law $\bbP$. 
 We assume that $\gl(s):= \log \bbE[\exp(s \go_1)] < \infty$ for every $s \in \bbR$ and, without loss of generality, we set  
  $\bbE[\go_1]=0$ and $\bbE[\go_1^2]=1$. Moreover the two random sequences  $\tau $ and $\go$ are independent.
  For every realization of the disorder sequence,
   the disordered model has partition function 
  \begin{equation}
 \label{eq:Zdis}
 Z_{N, \go, \gb, h}^\Psi \, :=\, 
 \bE \left[ \exp\left(  \sum_{j=1}^N (\gb \go_j +h)\gd_j \right) \Psi \left( \sum_{j=1}^N \gd_j, N \right) \gd_N\right]\,,
 \end{equation}
 where $\gb \ge 0$. Of course, the definition of  $\bP_{N, \go, \gb,h}^\Psi $ is immediately inferred by analogy with
 \eqref{eq:mod-without-dis}.
 
 Observe now that we can write 
 \begin{equation}
 \label{eq:basic-decomp}
 Z_{N, \go, \gb, h}^\Psi \,=\, \sum_{m=1}^N
   \exp(mh)  \Psi \left(m, N \right)
  \bE \left[ \exp\left( \gb  \sum_{j=1}^N  \go_j \gd_j \right) \ind_{ \tau_m=N} \right]\,,
\end{equation}
where we have used that $\{\vert\tau \cap (0, N]\vert=m$ and $N\in \tau \}=\{\tau_m=N\}$
and if $\gb=0$, that is in the non disordered case, this expression becomes even more explicit:
\begin{equation}
 \label{eq:basic-decomp2}
 Z_{N,  h}^\Psi 
  \,=\, \sum_{m=1}^N
   \exp(mh)  \Psi \left(m, N \right)
  \bP \left(  \tau_m=N\right)
  \,.
\end{equation} 
 
\subsection{Results} 

We start with a result that shows that the free energy density exists and that it can be represented in terms of 
the \emph{ free energy density } of a pinning model in which we have fixed the number of contacts. 

\medskip

\begin{theorem}
\label{th:GF}
For every $\gb\ge 0$ and  $\rho \in [0,1]$  we have that $\bbP(\dd \go)$-a.s. 
the limit
\begin{equation}
\label{eq:G}
\limtwo{N \to \infty:}{m/N \to \rho}
\frac1N \log  \bE \left[ \exp\left( \gb  \sum_{j=1}^N  \go_j \gd_j \right) \ind_{ \tau_m=N}\right]\,=:\, 
\tg(\gb, \rho)\,,
\end{equation}
exists and $\tg(\gb, \rho)\in \bbR$ is non random. Moreover $\tg(\gb, \cdot)$ is concave, $\lim_{\rho \searrow 0} \tg(\gb, \rho)=
 \tg(\gb, 0)=0$, $\lim_{\rho \nearrow 1} \tg(\gb, \rho)=\tg(\gb, 1)=\log K(1)$ and  $\lim_{\rho \nearrow 1} \partial_\rho \tg(\gb, \rho)=-\infty$.

Also the limit
\begin{equation}
\label{eq:FH}
\lim_{N \to \infty} \frac 1N \log Z^\Psi_{N , \go, \gb, h}\, =:\, \tf_H(\gb, h)\, ,
\end{equation}
exists for every $\gb\ge 0 $ and $h \in \bbR$,  $\bbP(\dd \go)$-a.s. and in $\bbL^1$. $\tf_H(\gb, h) \in \bbR$ is non random,
$\tf_H(\gb, \cdot)$ is non decreasing, convex and we have the conjugate variational formulas
\begin{equation}
\label{eq:FHG}
\tf_H(\gb, h)\,=\, \sup_{\rho \in [0,1]} \left( h \rho +H(\rho) + \tg(\gb, \rho)\right) \ \text{ and } \
\tg(\gb, \rho)\,=\, \inf_{h \in \bbR} \left( \tf_H(\gb, h)-\rho h -H(\rho) \right)\,.
\end{equation} 
\end{theorem}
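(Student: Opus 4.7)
The plan is to first establish existence and the qualitative properties of $\tg(\gb, \rho)$ via a concatenation inequality combined with Kingman's subadditive ergodic theorem, and then to deduce existence and the variational formula for $\tf_H(\gb, h)$ from the decomposition \eqref{eq:basic-decomp}. Introducing the conditioned partition function $\cZ_{m,N}(\go, \gb) := \bE[\exp(\gb \sum_{j=1}^N \go_j \gd_j)\ind_{\tau_m=N}]$, the essential structural input is the concatenation bound
\[
\cZ_{m_1+m_2,\, N_1+N_2}(\go, \gb) \,\geq\, \cZ_{m_1, N_1}(\go, \gb)\, \cZ_{m_2, N_2}(\theta^{N_1}\go, \gb),
\]
obtained by restricting to renewal paths for which $\tau_{m_1}=N_1$, where $\theta$ is the left shift on disorder.

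To construct $\tg(\gb, \rho)$ fix first a rational $\rho=p/q \in (0,1)$ and set $X_n := \log \cZ_{np, nq}(\go,\gb)$. The concatenation inequality makes $-X_n$ super-additive along the shift $\theta^{nq}$, so Kingman's theorem yields a.s.\ and $\bbL^1$ convergence of $X_n/(nq)$ to a deterministic $\tg(\gb, \rho)$, the determinism coming from ergodicity of $\theta$. The integrability needed for Kingman is provided by the one-big-jump lower bound
\[
\cZ_{m,N}(\go,\gb) \,\geq\, K(1)^{m-1} K(N-m+1)\, \exp\!\bigg(\gb\sum_{j=1}^{m-1}\go_j + \gb\go_N\bigg),
\]
together with \eqref{eq:K} and the assumption $\gl(s) < \infty$ for all $s$. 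To pass from rational to general $\rho \in (0,1)$ and to reconcile $m/N \to \rho$ with the exact rational ratios used in Kingman, one uses the adjustment $\cZ_{m+1, N+1}(\go, \gb) \geq K(1) e^{\gb \go_{N+1}} \cZ_{m,N}(\go,\gb)$ (append a unit jump), which shifts $m$ and $N$ by $O(1)$ at a cost that is $o(N)$ after division by $N$. This careful round-off bookkeeping is the principal technical obstacle of the proof. Concavity of $\rho \mapsto \tg(\gb, \rho)$ follows from the concatenation bound applied to pairs $(m_i, N_i)$ of distinct limiting ratios, passing to $N \to \infty$ with $N_1/N \to \gl \in (0,1)$.

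Next come the boundary values and the slope at $\rho=1$. The identity $\cZ_{N,N}(\go,\gb) = K(1)^N\exp(\gb\sum_{j=1}^N \go_j)$ together with the strong law and $\bbE[\go_1]=0$ gives $\tg(\gb, 1) = \log K(1)$; for $\tg(\gb, 0)=0$ one combines the one-big-jump construction (yielding $\geq 0$) with the annealed identity $\bbE[\cZ_{m,N}] = e^{\gl(\gb) m}\bP(\tau_m=N)$, Jensen, and $\log \bP(\tau_m=N) = o(N)$ when $m/N \to 0$ (itself immediate from the one-big-jump construction). For $\lim_{\rho \nearrow 1} \partial_\rho \tg(\gb, \rho) = -\infty$, one enumerates compositions of $N$ into $m=\rho N$ positive parts (with $N-m$ total excess) and averages the disorder exponential over the $\binom{m}{N-m}$ arrangements via Jensen; concentration for the disorder (ensured by $\gl(s)<\infty$ for all $s$) shows the averaged disorder contribution is $o(N)$, so
\[
\tg(\gb, \rho) \,\geq\, \log K(1) - (1-\rho)\log(1-\rho) + O(1-\rho),
\]
from which $\log K(1) - \tg(\gb, \rho) = o(1-\rho)$ as $\rho \nearrow 1$, forcing the left-derivative at $\rho=1$ to diverge to $-\infty$.

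With $\tg(\gb,\cdot)$ in hand, $\tf_H(\gb, h)$ is obtained from \eqref{eq:basic-decomp} by a Laplace analysis: for any $\rho \in (0,1)$ and any sequence with $m/N \to \rho$, \eqref{eq:fromdefPsi} and the above give
\[
\frac{1}{N}\log\!\big[e^{m h}\,\Psi(m, N)\,\cZ_{m, N}(\go, \gb)\big] \longrightarrow h\rho + H(\rho) + \tg(\gb, \rho) \quad \text{a.s.}
\]
Retaining a single term in \eqref{eq:basic-decomp} yields $\liminf_N \frac{1}{N}\log Z^\Psi_{N,\go,\gb,h} \geq \sup_\rho [h\rho + H(\rho) + \tg(\gb,\rho)]$, while the matching upper bound uses $Z^\Psi \leq N \max_m (\cdots)$, a fine discretization of $[0,1]$, and the Lipschitz control of $\tg(\gb,\cdot)$ on compact subintervals of $(0,1)$ coming from concavity (with boundary contributions handled via continuity of $H$ up to the boundary and the computed boundary values of $\tg$). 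Monotonicity and convexity of $\tf_H(\gb,\cdot)$ are immediate from the log-convex non-decreasing structure of $\sum_m e^{mh}(\cdots)$; a.s.\ and $\bbL^1$ convergence transfer from Kingman via standard concentration for the log-partition function (which holds because $\gl(s) < \infty$ for all $s$, giving subgaussian-type fluctuations of order $O(\sqrt{N})$). Finally, the second identity in \eqref{eq:FHG} is Fenchel-Moreau duality applied to the concave upper semicontinuous function $G(\rho) := H(\rho) + \tg(\gb,\rho)$ on $[0,1]$, with $G = -\infty$ at the endpoints permitted by the hypotheses on $H$.
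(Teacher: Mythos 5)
Your strategy mirrors the paper's Section~\ref{sec:var}: a concatenation/super-additivity inequality for $\cZ_{m,N}$ feeding into Kingman's theorem, followed by a Laplace analysis of \eqref{eq:basic-decomp} and Fenchel--Moreau duality for the second formula. Two points are worth flagging.

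The boundary-slope argument at $\rho=1$ is a genuine departure and a nice alternative. You bound $\tg(\gb,\rho)$ from below by restricting the renewal to jumps of length $1$ or $2$, count the $\binom{m}{N-m}$ arrangements, and extract the combinatorial entropy $(1-\rho)\log\frac1{1-\rho}$ directly. The paper instead proves $\tg(\gb,\rho)\ge\tg(0,\rho)$ by Jensen (with equality at $\rho=1$) and then invokes the explicit $\gb=0$ asymptotics from Proposition~\ref{th:propG}, so your route is more self-contained. However, the line ``$\log K(1)-\tg(\gb,\rho)=o(1-\rho)$'' is a misstatement: your estimate gives $\tg(\gb,1)-\tg(\gb,\rho)\le-(1-\rho)\log\frac1{1-\rho}+O(1-\rho)$, a quantity that is negative and \emph{dominates} $(1-\rho)$ in magnitude, so dividing by $1-\rho$ diverges to $-\infty$ as required; it is not little-$o$ of $1-\rho$. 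The conclusion stands, but the intermediate claim as written is false.

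The genuine gap is in the uniformity required for the upper Laplace bound. You propose reconciling a generic $(m,N)$ with the Kingman subsequence by ``shifting $m$ and $N$ by $O(1)$'' via unit-jump appendage, and then invoking ``Lipschitz control of $\tg(\gb,\cdot)$ from concavity.'' But to control $\max_m \frac1N\log\big[e^{mh}\Psi(m,N)\cZ_{m,N}\big]$ via a grid $\{\rho_j\}$ one needs $\sup_{m:\,|m/N-\rho_j|\le\gep}\frac1N\log\cZ_{m,N}$ to be controlled uniformly, and this involves shifts of size $O(\gep N)$, not $O(1)$. Moreover appending unit jumps moves the contact density in only one direction, so the two-sided comparison needs a separate construction. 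Concavity of the \emph{limit} $\tg(\gb,\cdot)$ does not by itself transfer into a uniform bound on the finite-$N$ quantities $\frac1N\log\cZ_{m,N}$, which are not concave in $m$. The paper supplies this uniform control through the surgical estimate of Proposition~\ref{th:Gsurg}, decomposing $\cZ_{m,N}$ into a large Kingman-aligned piece and a buffer of length $\ell_N=O(\gep N)$ controlled by single-trajectory bounds and a Borel--Cantelli argument in the disorder. Your approach can be completed along these lines, but as written this step is asserted rather than proved, and it is precisely the piece of technical work that the proof cannot do without.
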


\medskip

Of course $\tf_H(\gb, h)$ is the free energy (density) of the model defined by \eqref{eq:Zdis}.
We point out that at this stage that $ \partial_\rho \tg(\gb, \rho)$ should be interpreted as 
the limit of the incremental ratio from the left, or from the right: they both exist by concavity. 
We will see in Proposition~\ref{th:Gsmooth}  that $\tg(\gb, \cdot)$ is $C^\infty$ if $\gb>0$
and $\tg(0,\cdot)$ is real analytic, except possibly in one point, in which it is in any case at least $C^1$, see   Proposition~\ref{th:propG}.

\medskip
 
It is worth pointing out that if $\Psi(m, N)=1$ for every $m$ and $N$ then  the model coincides with the well known 
disordered pinning model:
\begin{equation}
 \label{eq:ZPS}
 Z_{N, \go, \gb, h} \, :=\, 
 \bE \left[ \exp\left(  \sum_{j=1}^N (\gb \go_j +h)\gd_j \right)  \gd_N\right]\,,
 \end{equation}
 and the corresponding free energy is denoted by $\tf(\gb, h)$.
Of course also the case $\Psi(m, N)=\exp(am+b)$, $a$ and $b$ real constants, corresponds to trivial modifications of the pinning model.
As a matter of fact, Theorem~\ref{th:GF} is telling us that,
whenever $H$ of Definition~\ref{def:Psi} is trivial (i.e., affine), we are dealing with a model with free energy 
that coincides, up to an additive  constant and a shift in $h$, with the free energy of a pinning model. 
   
\medskip

In order to better appreciate the results let us consider first the case $\gb=0$: when, like here, there is no risk of confusion, we drop the dependence on $\gb(=0)$, that is 
we write  $\tf_H( h)$ for  $\tf_H(0, h)$, etc$\ldots$. 

The first $\gb=0$ result says that $h \mapsto \tf_H(h)$ may have up to two singularity
 points if $\ga>1$. Otherwise, that is if $\ga \in (0,1]$, it has at most one singularity. We introduce 
\begin{equation}
\rho_c\, :=\, \frac 1{\bE[\eta]}\, ,
\end{equation}
so $\rho_c=0$ if $\ga \in (0,1]$ and $\rho_c\in (0,1)$ if $\ga >1$.
Moreover, given a model (that is given $\Psi$, hence $H$), 
 we set 
 \begin{equation}
 h_c^H\,:=\, -H'(0)\in [-\infty, \infty) \ \ \ \ \textrm{ and } \ \ \ \ h_b\, :=\, -H'(\rho_c)\, .
 \end{equation}
  Of course $h_c^H\le h_b$. Moreover $h_c^H=h_b$ if   $\rho_c=0$, 
  and we can replace \emph{if} with \emph{if and only if} when $H(\cdot)$ is not trivial.

\medskip

\begin{theorem}
\label{th:Fbeta0}
The function $h \mapsto \tf_H(h)$ is real analytic for $h\in \bbR\setminus \{h_c^H, h_b\}$ and it is not real analytic in $h_c^H$ nor in $h_b$. Moreover:
\begin{enumerate}
\item if $h_c^H\in \bbR$ (and regardless of $h_c^H<h_b$ or $h_c^H=h_b$) 
$\tf_H(h)= H(0)\in \bbR$ for $h\le h_c^H$ and $\tf_H(h)> H(0)$ for $h>h_c^H$
with the $h \searrow h_c^H$ asymptotic behaviors (assuming \eqref{eq:assumefordeloc}):
\begin{itemize}
\item for $\ga>1$ we have $\tf_H(\rho)-H(0)\sim (h-h_c^H)^2/(2c_H)$; 
\item for $\ga\in (0,1]$ we still have $\tf_H(\rho)-H(0)\sim (h-h_c^H)^2/(2c_H)$ for $\ga >1/2$ and the same
 is true for $\ga=1/2$ but with a  prefactor  smaller than $1/(2c_H)$. For $\ga \in (0, 1/2)$ we have
 $\tf_H(\rho)-H(0)\sim c(h-h_c^H)^{1/\ga}$ for a suitable $c>0$;
\end{itemize}
\item  if $h_c^H<h_b$ (which implies $\ga >1$) then 
the function $h\mapsto \tf_H^{\textrm{reg}}(h)$ defined by 
\begin{equation}
\label{eq:Fbeta0-1}
\tf_H^{\textrm{reg}}(h)\, :=\, \sup_{\rho \in [0,1]} \left( h \rho +H(\rho)\right)\, ,
\end{equation}
 is real analytic in 
$(h_c^H, 1)$ and, with $\kappa:= \max(\ga/(\ga-1), 2)$ and for a suitable constant $c>0$ (that depends in  particular on $\ga$),  we have (except for the case $\ga=2$)
\begin{equation}
\label{eq:Fbeta0-2}
 \tf_H(h)- 
 \tf_H^{\textrm{reg}}(h) \stackrel{h \searrow h_b}\sim -c (h-h_b)^{\kappa} \, ,
\end{equation}
while $ \tf_H(h)=
 \tf_H^{\textrm{reg}}(h) $ for $h<h_b$.
 If $\ga=2$ the right-hand side in \eqref{eq:Fbeta0-2} is replaced by  $-c (h-h_b)^{2}/ \log(1/ (h-h_b))$.
\end{enumerate}
\end{theorem}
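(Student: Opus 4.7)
The starting point is Theorem~\ref{th:GF}, which gives $\tf_H(h) = \sup_{\rho \in [0,1]}(h\rho + H(\rho) + \tg(0, \rho))$, so everything reduces to identifying $\tg(0, \cdot)$ in the non-disordered case and analyzing the resulting Legendre transform. I claim that $\tg(0, \rho) = 0$ for $\rho \in [0, \rho_c]$ (the big-jump regime) and $\tg(0, \rho) = -\rho I(1/\rho)$ for $\rho \in (\rho_c, 1]$, where $I(x) = \sup_{\gl \le 0}(\gl x - \log \bE[e^{\gl \eta}])$ is the Cram\'er rate function of $\eta$. The second identity is Cram\'er's theorem for $\tau_m = \sum_{j=1}^m \eta_j$ on the thin-tail side $N/m < \bE[\eta]$, which uses only $\bE[e^{\gl \eta}]$ with $\gl \le 0$, always finite. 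The first reflects the big-jump bound $\bP(\tau_m = N) = N^{-O(1)}$ along $m/N \to \rho < \rho_c$, a standard consequence of $K(n) \sim C_K n^{-(1+\ga)}$; for $\ga \le 1$ one has $\rho_c = 0$ and there is nothing to show in this range. Real analyticity of $\tf_H$ on $\bbR \setminus \{h_c^H, h_b\}$ then follows from the implicit function theorem applied to the stationarity equation $h + (H + \tg(0, \cdot))'(\rho) = 0$, using real analyticity and strict concavity of each piece ($H$ on the interior of its domain, and $\tg(0, \cdot)$ on $(\rho_c, 1)$ via real analyticity of $I$). Non-analyticity at the two transition points will follow from the expansions below.

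\textbf{Delocalization (Part (1)).} Near $\rho = 0$ and $h$ close to $h_c^H = -H'(0)$, assumption \eqref{eq:assumefordeloc} gives
\begin{equation*}
h\rho + H(\rho) + \tg(0, \rho) \,=\, H(0) + (h - h_c^H)\rho - \tfrac{c_H}{2}\rho^2 + o(\rho^2) + \tg(0, \rho).
\end{equation*}
If $\ga > 1$, then $\rho_c > 0$ and $\tg \equiv 0$ in a neighborhood of $0$, so direct optimization gives $\tf_H(h) - H(0) \sim (h - h_c^H)^2/(2c_H)$; the same conclusion applies at $\ga = 1$ since the $\tg$-correction is then exponentially small in $1/\rho$. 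If $\ga \in (0, 1)$, then $\rho_c = 0$ and Fenchel inversion of the classical pinning asymptotics $\tf(h) \sim c h^{1/\ga}$ yields $\tg(0, \rho) \sim -c' \rho^{1/(1-\ga)}$ as $\rho \to 0^+$. Comparing the exponents $2$ (from $H$) and $1/(1-\ga)$ (from $\tg$), the smaller one wins: for $\ga > 1/2$ we have $2 < 1/(1-\ga)$ so the quadratic term dominates; for $\ga = 1/2$ the exponents coincide and both contribute, giving the quadratic form with strictly smaller prefactor; for $\ga < 1/2$ the $\tg$-term dominates and a short Legendre computation gives $\tf_H(h) - H(0) \sim c(h - h_c^H)^{1/\ga}$.

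\textbf{Big-jump transition and main obstacle (Part (2)).} Assume $\ga > 1$ and $h$ close to $h_b = -H'(\rho_c)$. For $h < h_b$ the regular maximizer $\rho^*_{\textrm{reg}}(h)$ of $\rho \mapsto h\rho + H(\rho)$ lies strictly below $\rho_c$, where $\tg = 0$, so $\tf_H(h) = \tf_H^{\textrm{reg}}(h)$. For $h > h_b$ the key input is the sharp behavior of $I(x)$ as $x \nearrow \bE[\eta]$: for $\ga > 2$ classical Cram\'er yields $I(x) \sim (\bE[\eta] - x)^2/(2\var(\eta))$; for $1 < \ga < 2$ the Tauberian expansion $\log \bE[e^{\gl\eta}] = \bE[\eta]\gl + c_\ga |\gl|^\ga + o(|\gl|^\ga)$ as $\gl \to 0^-$ (a direct consequence of $K(n) \sim C_K n^{-(1+\ga)}$) gives through Legendre $I(x) \sim C(\bE[\eta] - x)^{\ga/(\ga-1)}$; at $\ga = 2$ the slow variation of the truncated second moment produces an additional $1/\log(1/(\bE[\eta]-x))$ factor. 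Translating via $\tg(0, \rho) = -\rho I(1/\rho)$ yields $\tg(0, \rho) \sim -\tilde c (\rho - \rho_c)^\kappa$ (with the log correction at $\ga = 2$) and $\kappa = \max(\ga/(\ga-1), 2)$. Writing $\rho = \rho_c + x$, using $\rho^*_{\textrm{reg}}(h) - \rho_c \approx (h - h_b)/|H''(\rho_c)|$, and optimizing the sum of the local quadratic from $H$ and the $\tg$-correction, one finds that the $(h-h_b)^2$ contributions cancel in the sup and the residual is exactly the $\tg$-term evaluated at its optimum, giving $\tf_H(h) - \tf_H^{\textrm{reg}}(h) \sim -c(h - h_b)^\kappa$. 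The main obstacle is precisely this sharp expansion of $I(x)$ in the infinite-variance regime $1 < \ga \le 2$, where classical Cram\'er provides only the leading log-rate and one must extract the $|\gl|^\ga$ correction in $\log\bE[e^{\gl\eta}]$ and invert it carefully through the Legendre transform to obtain both the exponent $\ga/(\ga-1)$ and the logarithmic correction at the critical value $\ga = 2$.
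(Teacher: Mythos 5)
Your proposal is essentially correct and follows the same route as the paper: use the variational formula $\tf_H(h)=\sup_{\rho}(h\rho+H(\rho)+\tg(0,\rho))$ from Theorem~\ref{th:GF}, pin down $\tg(0,\cdot)$, and analyze the Legendre transform near $\rho=0$ and $\rho=\rho_c$. Your identification $\tg(0,\rho)=-\rho I(1/\rho)$ is literally the paper's Proposition~\ref{th:propG}, since $-\rho I(1/\rho)=\inf_{x\ge 0}(x+\rho\log\bE[e^{-x\eta}])$ once one notes that the domain of $\log\bE[e^{\lambda\eta}]$ is $\lambda\le 0$; your two-regime (Cram\'er vs.\ big-jump) description of this function reproduces \eqref{eq:lemG} exactly.

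Two small differences of route worth flagging. First, you obtain the $\rho\searrow 0$ asymptotics of $\tg(0,\cdot)$ by Fenchel-inverting the known pinning asymptotics $\tf(0,h)\sim c h^{1/\alpha}$, whereas the paper derives $\tg$ directly from the variational formula via Riemann-sum estimates on $\bE[\eta e^{-x\eta}]$ and $\bE[e^{-x\eta}]$ (Proposition~\ref{th:propG}), and your Tauberian expansion of $\log\bE[e^{\lambda\eta}]$ near $\lambda=0^-$ is the same computation in a slightly different dress. Both routes are valid and the paper explicitly acknowledges the reversibility of the $\tf\leftrightarrow\tg$ dictionary. Second, your Part~(2) finale (``optimizing the local quadratic plus $\tg$-correction, the $(h-h_b)^2$ contributions cancel'') is a correct heuristic but glosses over two things the paper handles carefully: (a) for $\kappa=2$ (i.e.\ $\alpha>2$) the $(h-h_b)^2$ coefficients do \emph{not} cancel, they merely differ, so the statement should be phrased as a mismatch rather than a cancellation; and (b) for $\kappa>2$ one must track all integer-power terms of $H$ up to degree $\lfloor\kappa\rfloor$ in the Taylor expansion of $U=J+G$ (the paper's polynomial $P_{\mathtt{k}}(x)$), and when $\kappa$ is an integer $\ge 3$ the non-analyticity comes specifically from a jump in the $y^{\kappa}$ coefficient of $U$ across $y=0$. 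Your overall conclusion nevertheless stands, since once $\tf_H-\tf_H^{\mathrm{reg}}\sim -c(h-h_b)^\kappa$ with $c>0$ from the right and $\equiv 0$ from the left is established, non-analyticity of $\tf_H$ at $h_b$ follows from the analyticity of $\tf_H^{\mathrm{reg}}$.
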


\medskip

Theorem~\ref{th:Fbeta0}(2) is established in \cite{cf:bar1,cf:bar2} for specific choices of $\Psi$, see Appendix~\ref{sec:circDNA}, and for $K(n)$ equal, not simply asymptotically equivalent, to $C_K/{n^{1+ \ga}}$. The approach in in \cite{cf:bar1,cf:bar2}
exploits the expression for the Mellin transform of $K(\cdot)$ in terms of special functions  and by doing the asymptotic analysis 
via identification of singularities in the complex plain. Our analysis is more general and substantially simpler. 

\medskip

A direct consequence of Theorem~\ref{th:Fbeta0} is that $\tf_H(\cdot)$ is differentiable. In fact,  the two singularity loci are  $h_c^H$ and $h_b$ and
\begin{itemize}
\item the quadratic behavior at $h_c^H$, proven assuming \eqref{eq:assumefordeloc},  of course yields differentiability, but we take this occasion to stress that a first order transition, i.e. discontinuous $ \tf'_H(\cdot)$,  happens only if $H(\cdot)$ is trivial: a look at the proof suffices to conclude that the contact fraction is continuous at $h_c^H$
as soon as $H(\cdot)$ is strictly concave, i.e. without assuming \eqref{eq:assumefordeloc};  
\item when $h_b>h_c^H$ the critical exponent $\kappa$ is larger than one, and, again, this yields differentiability.  
\end{itemize}

\medskip

The transition at $h_c^H$ is a delocalization/localization transition: 
in fact $\tf_H'(h)=0$ for $h< h_c^H$ and $\tf_H'(h)>0$ for $h>h_c^H$ and 
$\tf_H'(h)$ coincides  with $\lim_N \bE_{N, h}^\Psi[\sum_{j=1}^N \gd_j]/N$, which is  the contact density. Note that for this transition we assume 
\eqref{eq:assumefordeloc}, so $H(\cdot)$ is non trivial, and the critical exponent coincides with the critical exponent of the standard pinning model only for $\ga \le 1/2$.
For $\ga>1/2$ the critical exponent of the $\gb=0$ pinning model is $\max(1/\ga,1)$ \cite[Ch.~2]{cf:GB}.

\medskip

\begin{rem}
\label{rem:de-loc} 
A more direct view of the delocalization/localization transition can be taken, without assuming $\gb=0$, by noticing that
$\tf_H(\gb, h) \ge H(0)$. This fact  is straightforward 
under the stronger condition $\liminf_N (\log Q(1, N))/N\ge 0$ because  it suffices to restrict the partition function to the event $\tau_1=N$
(that is, the only  contact point is in $N$) and we obtain for every $b>0$ and $N$ sufficiently large
\begin{equation}
 Z^\Psi_{N, \go ,\gb ,h} \, \ge \, 
 \exp \left(N(H(1/N)-b)\right) \exp(\gb \go_N +h) 
 \bP( \tau_1=N)
 \, .
\end{equation}
So $\lim_N (1/N) \log Z^\Psi_{N, \go, \gb ,h}\ge H(0)$ a.s. and we are done. For a proof without the additional  assumption see Proposition~\ref{th:de-loc2}. 
\end{rem}

\medskip

The transition at $h_b$, when $h_b>h_c^H$, corresponds in  physical terms to the appearing/disappearing of  a \emph{condensation} segment or of a  \emph{macroscopic loop}. The underlying phenomenon is well known also in the probability literature and it is called \emph{big-jump} (see \cite{cf:AL,cf:DDS08} and references therein). 
In order to make this precise  we introduce for every $N$ the order statistics of the $\eta$ sequence \emph{up to $N$}, that is the order statistics of $\eta_1, \eta_2, \ldots, \eta_{\vert \tau \cap (0, N]\vert}$, for which we use the notation   
$\eta_{1,N} \ge \eta_{2,N} \ge \ldots \ge \eta_{\vert \tau \cap (0, N]\vert,N}$. Note that this order statistics is empty 
if $\tau_1>N$, which never happens  because we always work with $N \in \tau$. On the other hand, what may  happen is that   the sequence contains only one element, that is 
$\eta_{1,N}=\tau_1=N$,  and in this case we set $ \eta_{2,N}=0$.

\medskip

\begin{theorem}
\label{th:paths0}
For every $h\neq h_c^H$  we have that in $\bP^\Psi_{N, h}$-probability
\begin{equation}
\label{eq:paths0}
\lim_{N\to \infty} \frac{\left \vert \tau \cap (0, N) \right\vert}N\, =\, \rho_h :=\, \tf_H'(h)
\,,
\end{equation}
and when $\rho_c>0$ and $h \neq h_b$
\begin{equation}
\label{eq:paths0-1}
\lim_{N\to \infty} \frac{\eta_{1,N}}N\, =\, \left ( 1- \frac {\rho_h} {\rho_c} \right)_+,
\ \ \ \ \ \ 
\ \ \ \ \ 
\lim_{N\to \infty} \frac{ \eta_{2,N}}N\, =\, 0\,.
\end{equation}
If $\rho_c=0$ and $h> h_c^H=h_b$ we have that $\lim_N \eta_{1,N}/N=0$. 
\end{theorem}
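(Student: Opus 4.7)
The convergence \eqref{eq:paths0} is a standard Gibbs-type fact. Since $h\mapsto(1/N)\log Z_{N,h}^\Psi$ is convex and, by Theorem~\ref{th:GF}, converges pointwise to $\tf_H$, while Theorem~\ref{th:Fbeta0} gives differentiability of $\tf_H$ at every $h\notin\{h_c^H,h_b\}$, the exponential Chebyshev inequality
\begin{equation}\nonumber
\bP_{N,h}^\Psi\!\Big(\tfrac{1}{N}{\textstyle\sum_{j=1}^N}\gd_j\geq\rho_h+\gep\Big)\,\leq\,\exp\bigl(-tN(\rho_h+\gep)\bigr)\,\frac{Z_{N,h+t}^\Psi}{Z_{N,h}^\Psi}
\end{equation}
combined with the identity $\tf_H'(h)=\rho_h$ yields exponential decay of the right-hand side for every $\gep>0$ on choosing $t>0$ small; the lower tail is symmetric. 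This proves \eqref{eq:paths0}.

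For the big-jump statements I start from the decomposition \eqref{eq:basic-decomp2}. Using the concentration just proved, one can restrict the sum over $m$, up to a subexponential error, to a window $m\in[N(\rho_h-\delta),N(\rho_h+\delta)]$ for an arbitrarily small $\delta>0$. The problem then reduces to the conditional law of the renewal under $\bP(\,\cdot\mid\tau_m=N)$ with $m/N\to\rho_h$, and the sign of $1/\rho_h-\bE[\eta]$ (equivalently of $h-h_b$) tells us whether this is an \emph{upper} or a \emph{lower} large-deviation conditioning of the sum $\tau_m$.

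In the upper-deviation case $\rho_h<\rho_c$ (so $\rho_c>0$ and $h>h_b$), the gap $N-m\bE[\eta]\sim N(1-\rho_h/\rho_c)$ is macroscopic and positive; the local form of the one-big-jump principle for heavy-tailed random walks, in the spirit of Denisov--Dieker--Shneer and Armendariz--Loulakis, asserts that under $\bP(\,\cdot\mid\tau_m=N)$ exactly one increment has size asymptotically equal to this gap while the remaining $m-1$ obey a law of large numbers, delivering both limits in \eqref{eq:paths0-1}. In the opposite case $\rho_h\geq\rho_c$ (which covers $\rho_c=0$ with $h>h_c^H$), no macroscopic jump is needed and $\eta_{1,N}/N\to 0$ follows from the union bound
\begin{equation}\nonumber
\bP\bigl(\eta_{1,N}\geq \gep N,\,\tau_m=N\bigr)\,\leq\, m\,K\bigl(\lceil\gep N\rceil\bigr)\sup_{k\leq N}\bP(\tau_{m-1}=k)
\end{equation}
together with a matching sharp lower bound on $\bP(\tau_m=N)$ coming from a local limit theorem in the typical or lower-deviation regime.

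The main obstacle, as I see it, is lifting these big-jump and lower-deviation estimates to a form sufficiently uniform in $m$ on a window of size $\delta N$ around $N\rho_h$ so as to compose them with the Laplace sum over $m$ in \eqref{eq:basic-decomp2} without spoiling the exponential balance set by \eqref{eq:FHG}. The $\rho_c=0$ subcase ($\ga\in(0,1]$) is the most delicate, since the reference measure for local limit theorems is a stable rather than a Gaussian one and the lower-deviation asymptotics of $\bP(\tau_m=N)$ are less standard than their upper-deviation counterparts.
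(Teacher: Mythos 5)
Your overall strategy matches the paper's: write $\bP_{N,h}^\Psi$ as a mixture of the conditional renewal laws $\bQ_{N,m}=\bP(\,\cdot\,\vert\tau_m=N)$ as in \eqref{eq:dec-contact}, concentrate the mixing index $m$ in a window of width $O(\gep N)$ around $N\rho_h$, and then invoke big-jump estimates when $\rho_h<\rho_c$ and local-limit / lower-deviation estimates when $\rho_h>\rho_c$. For \eqref{eq:paths0} your exponential Chebyshev argument is a perfectly sound alternative to the paper's Lemma~\ref{th:Zeps}, which proves the same concentration by cutting off the Laplace sum \eqref{eq:dec-contact-Z} with the uniform estimate \eqref{eq:lemG-0}; both routes rest on the same input, namely differentiability of $\tf_H$ and the variational identity $\tf_H'(h)=\rho_h$. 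A minor point: Theorem~\ref{th:Fbeta0}(2) gives exponent $\kappa>1$ at $h_b$, so $\tf_H$ \emph{is} differentiable there, and your Chebyshev argument in fact covers all $h\neq h_c^H$; excluding $h_b$ from \eqref{eq:paths0} is unnecessary.

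The genuine gap is in the lower-deviation case $\rho_h>\rho_c$ (equivalently $h>h_b$, including all of $\rho_c=0$). The bound you write,
\begin{equation*}
\bP\bigl(\eta_{1,N}\geq \gep N,\,\tau_m=N\bigr)\,\leq\, m\,K\bigl(\lceil\gep N\rceil\bigr)\sup_{k\leq N}\bP(\tau_{m-1}=k)\,,
\end{equation*}
illegitimately multiplies two bounds that cannot be applied simultaneously to the convolution $\sum_{L\geq\gep N}K(L)\bP(\tau_{m-1}=N-L)$, and, more importantly, no amount of sharpening can make it close the argument as stated: the right-hand side is only polynomially small in $N$ (of order $N^{-1-\ga}\cdot m^{1-1/\min(\ga,2)}$), whereas the denominator $\bP(\tau_m=N)$ is exponentially small, of order $e^{N\tg(\rho_h)}$ with $\tg(\rho_h)<0$ by Proposition~\ref{th:propG}. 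A sharp lower bound on $\bP(\tau_m=N)$ cannot rescue a ratio that diverges exponentially. The paper handles this by first tilting: replace $K$ by $K_q\propto K\,e^{-q\,\cdot}$ with $q=q(m/N)$ chosen so that the event $\tau_m^{(q)}=N$ becomes typical; then $\bQ_{N,m}(\cdot)=\bP(\tau^{(q)}\in\cdot,\tau_m^{(q)}=N)/\bP(\tau_m^{(q)}=N)$. The local limit theorem for triangular arrays (Davis--McDonald) gives $\bP(\tau_m^{(q)}=N)\geq 1/(c\sqrt m)$ uniformly over $m$ in the window, and, since $K_q$ has exponential tails, the union bound under the tilted law yields $\bQ_{N,m}(\eta_{1,N}>\gep N)\leq c\,m^{3/2}\,\bP(\tau_1^{(q)}>\gep N)\to 0$ exponentially. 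This is precisely the ``matching local limit theorem in the lower-deviation regime'' you gesture at, and it also dissolves your final worry: the uniformity in $m$ is built into the triangular-array LCLT, and after tilting the limiting distribution is always Gaussian regardless of $\ga$, because $K_q$ has all moments for $q>0$, so the $\rho_c=0$ case is not in fact harder.
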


\medskip

The only reason to require $h \neq h_b$ is to keep proofs concise: the statement holds without this requirement  \cite[Ch.~4]{cf:thesis-BH}. The same is true for $h\neq h_c^H$ and  \eqref{eq:paths0}, if one takes care of excluding the cases in which 
$ \tf_H'(h_c^H)$ does not exist: note  that, by Theorem~\ref{th:Fbeta0}(1), $ \tf_H'(h_c^H)$ exists and it is equal to zero for $H(\cdot)$ non trivial.

Informally stated, Theorem~\ref{th:paths0} is spelling out the standard fact that $\tf_H'(h)$ is the contact fraction and  that the 
largest loop $\eta_{1,N}$ encompasses  essentially all the system in the delocalized regime $h< h_c^H$ (but only if $\rho_c>0$! See Remark~\ref{rem:phyp}) and it is instead macroscopically negligible (i.e., $\eta_{1,N}=o(N)$) if $h>h_b$. But the key point for us is that when $h_c^H<h_b$, for 
$h\in (h_c^H, h_b)$  the 
largest loop $\eta_{1,N}$, normalized by dividing by $N$, is asymptotically of size $(1-\rho_h/ \rho_c) \in (0,1)$. Moreover, cf. \eqref{eq:paths0-1}, all other loops are macroscopically negligible.  

Moreover, as we have seen, when $h_c^H<h_b$ then 
$\rho_h=\tf_H'(h)$ is continuous both at $h_c^H$ and at $h_b$ and this implies that the normalized large loop size 
behaves continuously at these transitions (in the first case it goes to one, in the second one it vanishes). 

\medskip

\begin{rem}
\label{rem:phyp}
The reader may be surprised by the lack of a full path delocalization result for $\rho_c=0$, i.e. $\ga \in (0, 1]$, like for $\ga >1$.
We are convinced that this  cannot be obtained with our assumptions on $Q(\cdot, \cdot)$: see the control from below
 in \eqref{eq:defPsi1} of Def.~\ref{def:Psi}. Note in fact that, for example,    we can choose $Q(m, N)= \exp(-N^2)$    for $m\le N/ \log N$ and 
 this forces the presence of at least about $N/\log N$ contacts: forcing them to be close to the boundary is very expensive in probability terms.
\end{rem}

\medskip

Theorem~\ref{th:paths0} can be improved in a number of ways, notably the largest loop for $h>h_b$ is $O(\log N)$, while 
the second largest loop for $h \in (h_c^H, h_b)$ has a power law scaling and for $h< h_c^H$ is $O(1)$. These issues are developed in \cite[Ch.~4]{cf:thesis-BH}, along with a detailed analysis of the critical cases. Our focus is on the effect of the disorder on the system and Theorems~\ref{th:Fbeta0} and \ref{th:paths0} are sufficient for this purpose. 

In fact, the main point of our work is that for $\gb>0$ the big-jump phenomenon disappears, and this is what we present next, along with an analysis of the effect of the disorder on the (de)localization transition. 

\medskip

Recall Remark~\ref{rem:de-loc} and set $h_c^H(\gb):= \inf\{h:\, \tf_H(\gb, h)>H(0)\}$. Of course we have
$h_c^H(\gb):= \sup\{h:\, \tf_H(\gb, h)=H(0)\}$ and $h_c^H=h_c^H(0)$. If $H(\cdot) \equiv 0$ we use $h_c(\gb)$ for $h_c^H(\gb)$, 
in parallel with the use of $\tf (\gb, h)$ for 
$\tf _H(\gb, h)$. 
Much work has been done on identifying as precisely as possible $h_c(\gb)$: 
a through review of the literature is in Section~\ref{sec:sketchandmore}. Here we only anticipate that 
$h_c(\gb)\in [-\gl (\gb), 0]$ for every $\gb \ge 0$.   

\medskip

\begin{theorem}
\label{th:beta>0}
For $\gb>0$ we have that 
\begin{enumerate}
\item
$h \mapsto \tf_H(\gb, h)$ is $C^\infty$ for $h\in (h_c^H(\gb), \infty)$ and
$h_c^H(\gb)= h_c(\gb) -H'(0)$, so $h_c^H(\gb)>-\infty$ if and only if $H'(0)<\infty$ (like for $\gb=0$).
If $H'(0)<\infty$ (without assuming \eqref{eq:assumefordeloc})
there exists $C_\gb$  and $\gD >0$
such that if  $h-h^H_c(\gb)\in [0, \gD]$
\begin{equation}
\label{eq:smoothingH}
\tf_H(\gb,h) \, \le \, H(0) + C_\gb(h-h_c(\gb))^2\, ,
\end{equation}
and, assuming\eqref{eq:assumefordeloc}, for  $\ga \in (0,1/2)$ and   $\gb\in [0, \gb_\ga]$, 
for a suitable choice of $\gb_\ga>0$,  we have the sharper result
\begin{equation}
\label{eq:sharpH}
\tf_H(\gb, h_c^H(\gb) +y)-  H(0)  \stackrel{y\searrow 0}\sim
\tf_H(0, h_c^H(0) +y)-  H(0) \, .
\end{equation}
 \item for every $h$ 
 we have that $\bbP(\dd \go)$-a.s.
in $\bP^\Psi_{N,\go,\gb, h}$-probability 
\begin{equation}
\label{eq:beta>0contact}
\lim_{N\to \infty} \frac{\left \vert \tau \cap (0, N) \right\vert}N \, =\,  \partial_h\tf_H (\gb,h)\, ,
\end{equation}
 and for $h>h_c^H(\gb)$
\begin{equation}
\label{eq:beta>0}
\lim_{N\to \infty} \frac{\eta_{1,N}}N\, =\, 0\,.
\end{equation}
\end{enumerate}
\end{theorem}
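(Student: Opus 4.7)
Both parts rest on the variational formula $\tf_H(\gb,h)=\sup_{\rho\in[0,1]}[h\rho+H(\rho)+\tg(\gb,\rho)]$ from Theorem~\ref{th:GF}, together with Proposition~\ref{th:Gsmooth} ($\tg(\gb,\cdot)$ is $C^\infty$ for $\gb>0$), the Giacomin--Toninelli smoothing inequality $\tf(\gb,h_c(\gb)+u)\le C_\gb u^2$ for the standard disordered pinning model, and, for \eqref{eq:sharpH}, the Alexander--Toninelli-type disorder-irrelevance asymptotics at $\ga<1/2$ and small $\gb$.

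The identification $h_c^H(\gb)=h_c(\gb)-H'(0)$ and \eqref{eq:smoothingH} follow by bounding $H$ above by its tangent at $0$: concavity gives $H(\rho)-H(0)\le H'(0)\rho$, so
$$\tf_H(\gb,h)-H(0)\,\le\,\sup_\rho\bigl[(h+H'(0))\rho+\tg(\gb,\rho)\bigr]\,=\,\tf(\gb,h+H'(0)),$$
which vanishes whenever $h+H'(0)\le h_c(\gb)$ (giving $h_c^H(\gb)\ge h_c(\gb)-H'(0)$) and is $\le C_\gb(h-h_c^H(\gb))^2$ by Giacomin--Toninelli. The reverse inequality uses $H(\rho)-H(0)-H'(0)\rho=o(\rho)$ to restrict the supremum to small $\rho$ and produce a strictly positive free energy as soon as $h+H'(0)>h_c(\gb)$. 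Smoothness on $(h_c^H(\gb),\infty)$ comes from the implicit function theorem applied to the stationarity equation $h+H'(\rho^*)+\partial_\rho\tg(\gb,\rho^*)=0$, with interior maximizer $\rho^*\in(0,1)$ (because $h+H'(0)>h_c(\gb)$ rules out $\rho^*=0$ and $\partial_\rho\tg(\gb,1^-)=-\infty$ rules out $\rho^*=1$) and smoothness of the objective inherited from Prop.~\ref{th:Gsmooth} and analyticity of $H$. For \eqref{eq:sharpH}, I refine with \eqref{eq:assumefordeloc}: $H(\rho)-H(0)-H'(0)\rho=-\tfrac{c_H}{2}\rho^2+o(\rho^2)$; for $\ga<1/2$ and $\gb$ small, disorder irrelevance gives $\tf(\gb,h_c(\gb)+y)\sim c\,y^{1/\ga}$ with maximizer $\rho^*(y)\sim c'\,y^{1/\ga-1}$, so the quadratic $H$-correction $O(\rho^{*2})=O(y^{2/\ga-2})$ is negligible against $y^{1/\ga}$ precisely because $2/\ga-2>1/\ga\iff\ga<1/2$; sandwiching then gives $\tf_H(\gb,h_c^H(\gb)+y)-H(0)\sim\tf(\gb,h_c(\gb)+y)\sim\tf_H(0,h_c^H(0)+y)-H(0)$.

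For \eqref{eq:beta>0contact}: with $f_N(h):=N^{-1}\log Z^\Psi_{N,\go,\gb,h}$, convexity in $h$ and $\bbP$-a.s.\ convergence to $\tf_H(\gb,\cdot)$ from Theorem~\ref{th:GF} yield $f_N'(h)=\bE^\Psi_{N,\go,\gb,h}[|\tau\cap(0,N]|]/N\to\partial_h\tf_H(\gb,h)$ at every point where $\tf_H(\gb,\cdot)$ is differentiable, which by part (1) is every $h$ ($\partial_h\tf_H=0$ below $h_c^H(\gb)$, $C^\infty$ above, and continuous at $h_c^H(\gb)$ by \eqref{eq:smoothingH}). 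Upgrading from mean to $\bP^\Psi$-probability convergence is a standard concentration-of-measure argument using the finiteness of $\gl(s)$ for all $s$.

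The main new content is the absence of the big jump, \eqref{eq:beta>0}. I decompose the sub-partition function over configurations with $\eta_{1,N}\ge\epsilon N$ on the position $j$ and length $k\ge\epsilon N$ of the large gap; the renewal strong Markov property at $j$ and $j+k$ factorizes the expectation into two independent pinning partition functions at fixed number of contacts:
\begin{equation*}
Z^{\Psi,\mathrm{BJ}}_{N,\go,\gb,h}\;\le\;\sumtwo{j\in[0,N],\,k\ge\epsilon N}{m_1,m_2\ge0}\Psi(m_1{+}m_2{+}1,N)\,K(k)\,e^{h(m_1+m_2+1)}\,Z^{(m_1)}_{j,\,\go|_L,\,\gb}\,Z^{(m_2)}_{N-j-k,\,\go|_R,\,\gb},
\end{equation*}
with $Z^{(m)}_{n,\go,\gb}:=\bE[\exp(\gb\sum_{i=1}^n\go_i\gd_i)\ind_{\tau_m=n}]$. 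Taking $N^{-1}\log$ and applying Theorem~\ref{th:GF} to each internal factor, together with $N^{-1}\log K(\lfloor k^*N\rfloor)\to0$ and the $Q$-bounds of Definition~\ref{def:Psi}, bounds the exponential rate by
\begin{equation*}
\sup_{\substack{k^*\ge\epsilon,\,j^*\in[0,1-k^*]\\ \rho_1,\rho_2\in[0,1]}}\bigl[h\rho+H(\rho)+j^*\tg(\gb,\rho_1)+(1-j^*-k^*)\tg(\gb,\rho_2)\bigr],\qquad\rho:=j^*\rho_1+(1-j^*-k^*)\rho_2.
\end{equation*}
Adding the zero term $k^*\,\tg(\gb,0)$ and applying concavity of $\tg(\gb,\cdot)$ to the three-point convex combination with weights $(j^*,1-j^*-k^*,k^*)$ at points $(\rho_1,\rho_2,0)$, this sup is $\le\tf_H(\gb,h)$. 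The main obstacle is upgrading this to a strict inequality uniform in $k^*\ge\epsilon$: this reduces to strict concavity of $\tg(\gb,\cdot)$ on $(0,1)$ for $\gb>0$, which follows by Legendre duality from strict convexity of $\tf(\gb,\cdot)$ on $(h_c(\gb),\infty)$---equivalent to strict monotonicity of the contact density in the localized phase---a standard property of the disordered pinning model. The resulting exponential gap implies $\bP^\Psi_{N,\go,\gb,h}[\eta_{1,N}\ge\epsilon N]\to0$ $\bbP$-a.s., and \eqref{eq:beta>0} follows.
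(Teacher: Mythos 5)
For part (1) your argument is essentially a streamlined version of the paper's: bounding $H$ by its tangent at $0$ and reducing to $\tf(\gb,\cdot)$ plays exactly the role of the paper's bound $\tg(\gb,\rho)\le -h_c(\gb)\rho-c\rho^2$ from Proposition~\ref{th:Gsmooth} (both coming from the smoothing inequality), and the implicit-function-theorem smoothness and the $\ga<1/2$ sandwiching for \eqref{eq:sharpH} match the paper. The proof of \eqref{eq:beta>0contact} via convexity plus a tilted Chernoff bound is a legitimate alternative to the paper's restriction of the contact number to $m\approx\rho_h N$ (the $\gb>0$ analogue of Lemma~\ref{th:Zeps}), though you leave the concentration step unproved.

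The argument for \eqref{eq:beta>0} takes a genuinely different route and contains a gap. Your decomposition over the big gap $[j,j+k]$ produces the factor $Z^{(m_2)}_{N-j-k,\,\theta^{j+k}\go,\,\gb}$, in which the disorder is shifted by $j+k$. To turn the sum over all $j,k,m_1,m_2$ into an exponential bound you need $\frac 1n\log Z^{(m)}_{n,\theta^{s}\go,\gb}\approx\tg(\gb,m/n)$ \emph{simultaneously} for all shifts $s\le N$ and lengths $n\ge\epsilon N$, a.s.\ for $N$ large; but Theorem~\ref{th:GF} and Proposition~\ref{th:Gsurg} supply this only for the disorder based at $0$ (the Kingman exceptional set depends on the shift). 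You would need to add a quantitative concentration estimate for $\log Z^{(m)}_n$ around its shift-invariant mean plus a Borel--Cantelli union bound over shifts to make this step rigorous. The paper avoids the issue entirely: Lemma~\ref{th:verylong} writes the big-jump probability for the free pinning model as a sum of ratios, bounds $Z_{N,\go}\ge Z_{n_2,\go}Z_{N-n_2,\theta^{n_2}\go}$ from below, and \emph{cancels} the shifted factor so that only base-$0$ quantities remain; the constrained model is then reached by comparing $\bQ_{N,m,\go,\gb}$ with $\bP_{N,\go,\gb,h}$ through the partition-function ratio (Proposition~\ref{th:nobigjump}). Your three-block Jensen argument does, once the shift problem is patched, give a uniform gap (by upper semicontinuity on a compact parameter set, the degenerate case $\rho_1=\rho_2=0$ being handled by $H(0)<\tf_H(\gb,h)$ in the localized phase), and the strict concavity of $\tg(\gb,\cdot)$ that you invoke is indeed available -- but it is not folklore: it is Theorem~\ref{th:unifconvex} in the appendix, and should be cited as such. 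Overall the scheme is repairable, but the paper's route via Lemma~\ref{th:verylong} is both simpler and self-contained.
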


\medskip

A number of comments are in order:
\smallskip

\begin{enumerate}
\item $h_c^H(\gb)$ may be equal to $-\infty$, but otherwise   $\tf_H(\gb, \cdot)$ is not  analytic at $h_c^H(\gb)$, which is therefore a critical point marking the transition from zero contact density (delocalized regime) to positive contact density (localized regime);  
\item from the proof we see that $C_\gb$ can be chosen independent of $\gb$ if we assume \eqref{eq:assumefordeloc} (see Remark~\ref{rem:smoothing});  
\item
the finite order big-jump transition at $h_b$ has disappeared, but the  $C^\infty$ regularity estimate on the free energy  leaves open the possibility of an infinite order transition;
\item nevertheless, \eqref{eq:beta>0} tells us that the loops in the localized regime do not have macroscopic size, so the large loop phenomenon is washed out by the disorder;
\item we have decided to leave aside the delicate  analysis of the path behavior in the delocalized phase: we certainly expect that results like in  \cite{cf:AZ14,cf:GT1} can be adapted, but only under stronger conditions on $\Psi(m, N)$ (and the problem is already present for $\gb=0$, see Remark~\ref{rem:phyp}).
\end{enumerate}
\medskip


Two important remarks:

\medskip

\begin{rem}
\label{rem:smoothing}
The proof of \eqref{eq:smoothingH} exploits the smoothing inequality \cite{cf:CH13,cf:GTprl,cf:GT06} that we recall in \eqref{eq:smoothing1} below, but only in part
because the result holds as soon as $H''(\rho)$ stays bounded away from $0$ for $\rho$ close to zero, and in particular when \eqref{eq:assumefordeloc} holds.  And, in view of the $\gb=0$ results in 
Theorem~\ref{th:Fbeta0}(1),  \eqref{eq:smoothingH} does not establish a smoothing phenomenon. Disorder relevance is certainly expected and it would follow from what is expected to hold  for the disordered pinning model (that is, an infinite order transition for $\ga\ge 1/2$, see Remark~\ref{rem:open}). In our model however we can see smoothing for $\ga>1/2$ if we do not assume 
\eqref{eq:assumefordeloc}. Notably if we assume for example that $H'(\rho)-H'(0) \sim -c \rho ^\gamma$ with a $\gamma >1$ and $c>0$: 
\eqref{eq:smoothingH} holds, but Theorem~\ref{th:Fbeta0}(1) changes and $\tf_H(0, h_c(0)+ \gd)$ becomes equivalent to $\gd^{\max(1+ 1/\gamma,1/ \ga)}$ times a positive constant. In this case   \eqref{eq:smoothingH} does establish a smoothing phenomenon and disorder relevance. Finally, \eqref{eq:sharpH}  establishes
disorder irrelevance for $\ga <1/2$.
\end{rem}

\medskip

\begin{rem}
\label{rem:Harris}
The Harris criterion  is applied in \cite[Sec.~IV]{cf:bar2} to the big-jump transition and the claim is that disorder is irrelevant for this transition for $\ga \in (1, 2]$, while for $\ga>2$ disorder is ``marginal", i.e. at the boundary  between irrelevance  and relevance.
 This is in contrast with Theorem~\ref{th:beta>0} which proves relevance of the disorder for every $\ga>1$.  It would be of course very interesting to understand what is happening in the Harris' perspective. We take this occasion to point out that the ``instability" of the big-jump transition under the effect of disorder has been observed also in \cite{cf:GCS,cf:molino}. In \cite{cf:GCS,cf:molino}  the disorder is introduced in such a way that the  renewal  structure is preserved and explicit computations can be performed. In our case there is no such structure and our results follow from the smoothing inequality  bound for the standard pinning case \cite{cf:CH13,cf:GTprl,cf:GT06}. While we believe that our disorder relevance result for big-jump transitions should hold in greater generality, our approach does not generalize in an evident way, notably not to the tightly related gPS model mentioned in Section~\ref{sec:preintro}. 
 The contribution \cite{cf:BGK2}  deals with  the disorder (ir)relevance issue in the gPS model, but only for the localization transition.
\end{rem}

\subsubsection{Organization of the paper}
\begin{itemize}[leftmargin=5.5mm]
\item
In Section~\ref{sec:sketchandmore} 
we present the main ideas on how we deal with the disorder and we provide a proof of Theorem~\ref{th:beta>0}(1), relying on the variational formulas of 
Theorem~\ref{th:GF} and on the uniform strict convexity bound of Theorem~\ref{th:unifconvex}.
\item
In Section~\ref{sec:3} we provide a full analysis of  $\tf_H(0, h)$. In particular, this section contains the proof of Theorem~\ref{th:Fbeta0}.
\item
In Section~\ref{sec:path-beta0} we analyse the trajectories of the process for $\gb=0$ (proof of Theorem~\ref{th:paths0}).
\item
In Section~\ref{sec:pathsbeta>0} we complete the proof of Theorem~\ref{th:beta>0}, by proving part (2) that concerns the trajectories: no big-jump for $\gb>0$. 
\item
In Section~\ref{sec:var} we take care of the free energy existence issues and of the variational formulas (proof of Theorem~\ref{th:GF}).
\item
In Appendix~\ref{sec:circDNA} we explain how   the circular DNA case \cite{cf:bar1,cf:bar2} fits in our framework and in Appendix~\ref{sec:unifconvex}
we  prove that $\partial_h^2\tf(\gb,h)>0$ for every $h>h_c(\gb)$ and we complete Remark~\ref{rem:de-loc}.
\end{itemize}

\section{Exploiting the Legendre transform and the key role of the standard pinning  model}
\label{sec:sketchandmore}

Let us start by pointing out the direct consequence of \eqref{eq:FHG}
\begin{equation}
\label{eq:FG}
\tf(\gb, h)\,=\, \sup_{\rho \in [0,1]} \left( h \rho  + \tg(\gb, \rho)\right) \ \text{ and } \
\tg(\gb, \rho)\,=\, \inf_{h \in \bbR} \left( \tf(\gb, h)-\rho h  \right)\,.
\end{equation} 
The strategy we employ is to obtain information 
on $\tg(\gb, \rho)$, defined  in \eqref{eq:G}, via the second formula in \eqref{eq:FG} and what we know  on $\tf(\gb, h)$. 
We start therefore by collecting here the relevant known results on $\tf(\gb, h)$: some of these results are straightforward, but most of them   are the outcome of the work of several contributors.

\medskip

\begin{enumerate}[start=1,label={({\bfseries P\arabic*})}]
\item \label{P:basic}
 Basic convexity and monotonicity properties, together with some relatively standard bounds,  show that $\tf(\gb, \cdot)$ is convex, it is equal to $0$ for $h \le h_c(\gb)$ and it is positive and increasing for $h>h_c(\gb)$:   $h_c(\gb)\le 0$ and for more on its value  see \ref{P:critpoint}. 
\item  \label{P:smooth} $h \mapsto \tf(\gb, h)$ is $C^\infty$ for $h>h_c(\gb)$ \cite[th.~2.1]{cf:GTalea} (in \cite{cf:GTalea} a concentration condition is required on the law of $\go_1$, but this is not used in the proof of Theorem~2.1) 
 and it is analytical if $\gb=0$ for $h>h_c(0)$, see e.g.   either \cite[Ch.~2]{cf:GB} or \cite[Ch.~2]{cf:G}. Of course it is also analytical for $h< h_c(\gb)$ and $h_c(\gb)$ is a non analyticity point.  We add that for every $\gb\ge 0$ it is straightforward to show that  
$\lim_{h \to \infty}\partial \tf_h(\gb, h) =1$.
\item \label{P:beta0}
For $\gb=0$ the model is solvable \cite{cf:Fisher,cf:GB,cf:G}: we have already pointed out that $h_c(0)=0$, but the sharp behavior of the free energy and its derivatives at criticality is available too. That is, for $\ga\in (0, 1)$ there exists $c_\ga>0$ such that
\begin{equation}
\label{eq:beta0crit}
\tf(0, h) \stackrel{h \searrow 0}\sim c_\ga h^{1/\ga}\, , 
\end{equation} 
and \eqref{eq:beta0crit} holds also if we differentiate $k\in \bbN$ times both sides. If $\ga >1$ instead 
$\tf(0, h)\sim h/\bE[\eta]$ and this statement can be differentiated once. If $\ga =1$ instead $\tf(0, h)\sim C_K h/\log(1/h)$ (one differentiation allowed). 
 These results imply the rougher statement
\begin{equation}
\label{eq:exponent}
\frac{\log \tf(0, h)}{\log h} 
\stackrel{h \searrow 0}\sim  \max \left( 1 , \frac 1 \ga\right)\, .
\end{equation}
Notably, the transition is of first order if  $\ga>1$ and it is of higher order if $\ga \in (0,1]$: 
\begin{equation}
\lim_{h\searrow 0}\partial_h\tf(0, h)\, =\, \frac 1 {\bE[\eta]}\, 
\begin{cases}  >0 & \text{ if } \ga >1\, , \\
=0 & \text{ if } \ga \in (0,1]\,.
\end{cases}
\end{equation}
\item \label{P:critpoint}
For $\gb>0$ we have $- \gl(\gb) \le h_c(\gb) < h_c(0)=0 $ (see \cite[Ch.~3]{cf:G} and  \cite[Ch.~4]{cf:GB};
see \cite{cf:AS}  and  \cite[Section.~5.2]{cf:GB}  for the strict inequality). Moreover $h_c(\gb)>-\gl(\gb)$ for $\ga \ge 1/2$ \cite{cf:AZ09,cf:BL16,cf:CTT,cf:DGLT09}, but 
 $h_c(\gb)=-\gl(\gb)$ for $\ga \in (0, 1/2)$ and $\gb\le \gb_\ga$, for a suitable $\gb_\ga>0$ \cite{cf:A08,cf:hubert,cf:fabio}.
\item \label{P:smoothing1}
For $\gb>0$ there exists $c_\gb>0$ and $\gD_0>0$ such that for every $\gD \in (0, \gD_0]$
\begin{equation}
\label{eq:smoothing1}
0 \,<\, \tf(\gb, h_c(\gb)+\gD)-\tf(\gb, h_c(\gb)) \,\le \, c_\gb \gD^2\, .
\end{equation}
The lower bound in \eqref{eq:smoothing1} is trivial, the upper bound is the smoothing inequality \cite{cf:CH13,cf:GTprl,cf:GT06} . We stress that  \eqref{eq:smoothing1} directly implies that, regardless of the value of $\ga$, for $\gb>0$ we have 
\begin{equation}
\label{eq:smoothing2}
\lim_{h\searrow h_c(\gb)}\partial_h\tf(\gb, h)\,=\, 0\,.
\end{equation}
\item \label{P:irrelevant} If $\ga \in (0,1/2)$  ( and $\gb \le \gb_\ga$, see \ref{P:critpoint}) we have \cite{cf:A08,cf:GT09,cf:hubert,cf:fabio}
\begin{equation}
 \label{eq:irrelevant}
\tf(\gb, h_c(\gb)+y)\stackrel{y \searrow 0}\sim \tf(0, y)\,.
\end{equation}
\end{enumerate}

\medskip

\begin{rem}
\label{rem:open}
The truly open problem for the disordered pinning model, and, as a matter of fact, for every disorder relevant model, is what is the precise critical behavior when disorder is relevant, see  \cite[Ch.~5]{cf:G} for a discussion and references. For the pinning model in the relevant disorder regime  it is now expected that the transition becomes of infinite order. One of the main reason is that the model is expected to be in the strong or infinite disorder universality class \cite{cf:DM80,cf:dfisher,cf:IM}, see also the more recent contribution \cite{cf:DR}. In this line there have been also some mathematical progress \cite{cf:BGL,cf:CDHLS}, but they do not impact directly the pinning model.  
\end{rem}

\subsection{Legendre transform viewpoint on homogeneous pinning}
Recalling \eqref{eq:basic-decomp2} and \eqref{eq:G} we see that 
$\tg(\rho)=\tg(0, \rho)$ has a very simple expression:
\begin{equation}
\tg(\rho)\, =\, \limtwo{N\to \infty:}{m/N\to \rho} \frac 1N \log \bP\left( \tau_m =N\right)\, ,
\end{equation}
and arbitrarily precise estimates on $\tg(\cdot)$ can be obtained, see in particular Proposition~\ref{th:propG} that is  resumed in part in  Figure~\ref{fig:alpha1} and in
Figure~\ref{fig:alpha2}, and their captions.
The behavior of $\tf(\cdot)=\tf(0, \cdot)$ can then be extracted from $\tg(\cdot)$  via \eqref{eq:FG}. In particular, 
the non analytic behavior of $\tg(\cdot)$ at $\rho_c>0$ yields a jump of size $\rho_c$  in $\tf'(\cdot)$ at $h_c(0)=0$.
The jump disappears if $\rho_c=0$. Moreover, these implications can be reversed, and the behavior of $\tg(\cdot)$
can be inferred from the one of $\tf(\cdot)$.

\begin{figure}[htbp]
\centering
\includegraphics[width=14 cm]{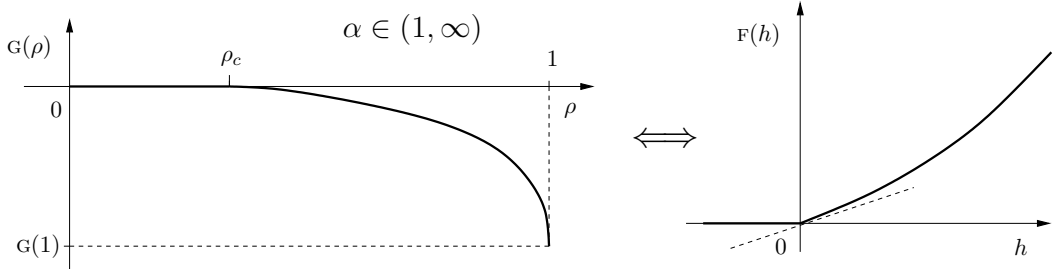}
\caption{\label{fig:alpha1} 
The figure illustrates the (Legendre transform) link between $\tg(\cdot)$ and $\tf(\cdot)$ for $\ga>1$ (and $\gb=0$).
We stress here that the flat portion of $\tg(\cdot)$, that is 
 $\tg(\rho)=0$ for $\rho \in [0, \rho_c]$ with $\rho_c=1/\sum_n nK(n)$, has a direct counterpart the first order localization transition for the pinning model: quantitatively, 
 $\lim_{h \searrow 0}\tf'(h)= \rho_c$. Other features stressed in the graph are (1)  the fact that $\tg(1)=-\log K(1)$ is finite but 
 $\lim_{\rho \nearrow 1}\tg'(\rho)= -\infty$ and (2) that $\lim_{\rho \to \rho_c}\tg'(\rho)=0$. 
 These two features are going to be central for our generalized pinning model: feature (1)
 forces the optimizing contact density to be smaller than one and feature (2) makes the big-jump transition of order two or larger.
}
\end{figure}

The non analyticity  at $\rho=\rho_c$
can be viewed as a phase transition: in fact, $\tg(h)$ capture the exponential asymptotic behavior of $\bP\left( \tau_m =N\right)$, for $m/N \sim \rho$,
and we can view $\bP\left( \tau_m =N\right)= \bE[ \ind_{\tau_m =N}]$ as the partition function of the model which is just the renewal 
conditioned to $\tau_m =N$ (in Section~\ref{sec:path-beta0} this probability will be denoted by $\bQ_{N, m}$).
The trajectories  $\bQ_{N, m}$ are substantially different when $m/N \sim \rho$ is below or above $\rho_c$ and the phenomenon is known in 
probability as the 
\emph{big-jump} phenomenon: if $\rho< \rho_c$ a single large  excursion takes care of the anomalously low contact density, 
in fact the \emph{typical} contact density for the renewal is $\rho_c$ (this is very well known Renewal Theorem, see \cite[App.~A] {cf:G} and references therein).
Instead  the system constrained to a contact density  $\rho >\rho_c$ globally modifies itself to accomodate more excursions. 
See Figure~\ref{fig:bigjump} for a visual explanation: Proposition~\ref{th:muNm} is a mathematical presentation of the big-jump transition.

\begin{figure}[htbp]
\centering
\includegraphics[width=14 cm]{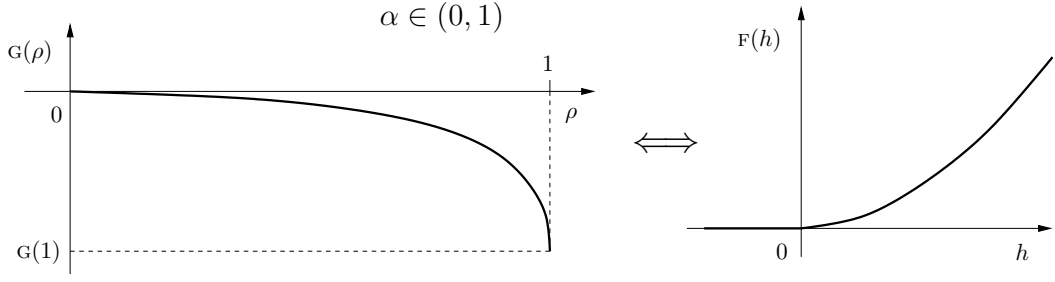}
\caption{\label{fig:alpha2} 
The figure illustrates the (Legendre transform) link between $\tg(\cdot)$ and $\tf(\cdot)$ for $\ga \in (0,1)$ (and $\gb=0$).
This time $\tg(\cdot)$ is analytic over all the domain. The corresponding behavior of $\tf(\cdot)$ is on the right and the difference 
with the case $\ga>1$ is that $\tf'(\cdot)$ exists also at the origin. In fact, the smaller $\ga$ is, the more $\tf(\cdot)$ is regular at the origin. 
But the most prominent  fact is that $\rho_c=0$ is equivalent $\tf'(\cdot)$ being $C^1$ in $0$.
}
\end{figure}

\begin{figure}[htbp]
\centering
\includegraphics[width=14 cm]{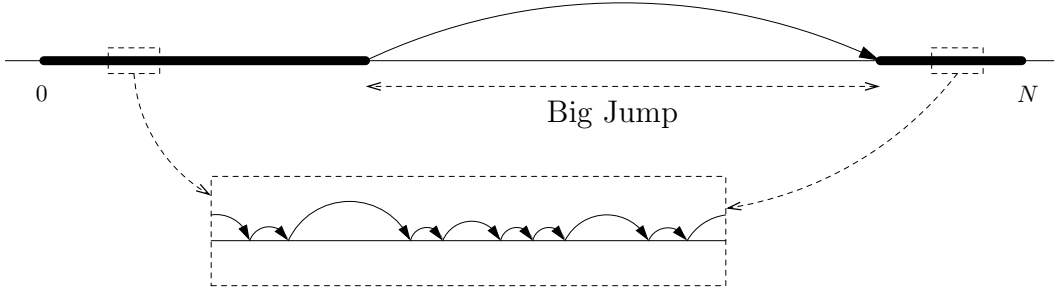}
\caption{\label{fig:bigjump} 
The big-jump phenomenon that happens when we condition the renewal to have $m$ contacts between before $N$,
with $m/N\sim \rho$ smaller than the typical value $\rho_c$. The system behaves typically, so with contact density $\rho_c=1/\bE[\eta]$, and compensates for   the low global contact density by making a \emph{big-jump} of length 
$\sim (1-\rho/ \rho_c) N$, randomly (uniformly) placed in the interval.
}
\end{figure}

\subsection{Legendre transform viewpoint on disordered pinning} 
The key point here is simply that \eqref{eq:smoothing2} is telling us that there are no longer two scenarios, but only one:
qualitatively, the one of $\gb=0$ and $\ga \in (0,1]$.  
Here is a central statement for our analysis:

\medskip

\begin{proposition}
\label{th:Gsmooth}
For $\gb>0$ we have that $\rho \mapsto \tg(\gb,\rho)$ is $C^\infty$ in the interior of its domain of definition, that is for $\rho \in (0,1)$. Moreover $\lim_{\rho \searrow0} \tg  (\gb,\rho)=0$ and 
$\lim_{\rho \searrow0} \partial_h \tg   (\gb,\rho)=-h_c(\rho)$. Finally, $ \partial^2_h \tg  (\gb,\rho)<0$ for every $\rho \in (0,1)$ and
there exists $c>0$ (depending on $\gb$) such that for every $\rho$
\begin{equation}
\label{eq:Gsmooth}
\tg(\gb, \rho)\, \le \, -h_c(\gb) \rho -c \rho^2\, .
\end{equation}
For $\ga \in (0,1/2)$ 
and $\gb \in [0, \gb_\ga]$ ($\gb_\ga$ given in  \ref{P:critpoint}) we have  that
\begin{equation}
\label{eq:Gbetarho0-3}
\tg(\gb, \rho) + h_c(\gb) \rho \stackrel{\rho \searrow 0}\sim \tg(0, \rho)\, .
\end{equation}
\end{proposition}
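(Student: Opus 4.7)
The plan is to derive every statement from the Legendre-type variational identities in \eqref{eq:FG} and from the catalogued properties \ref{P:basic}--\ref{P:irrelevant} of the standard disordered pinning free energy $\tf(\gb, \cdot)$. The key feature of $\gb>0$ is that the smoothing inequality \eqref{eq:smoothing2} forces $\lim_{h \searrow h_c(\gb)}\partial_h \tf(\gb, h)=0$: this erases the $\ga<1$ versus $\ga>1$ dichotomy visible in the homogeneous case and puts us in a single regime where one-dimensional Legendre duality transfers information cleanly between $\tf(\gb, \cdot)$ and $\tg(\gb, \cdot)$.

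\emph{Smoothness, strict concavity, boundary behavior, and the quadratic bound.} By \ref{P:smooth} and Theorem~\ref{th:unifconvex}, $\tf(\gb, \cdot)$ is $C^\infty$ and strictly convex on $(h_c(\gb), \infty)$; together with \eqref{eq:smoothing2} and $\partial_h \tf(\gb, h) \to 1$ as $h \to \infty$ (see \ref{P:smooth}), this makes its derivative a $C^\infty$ strictly increasing bijection from $(h_c(\gb), \infty)$ onto $(0,1)$. Standard one-dimensional Legendre duality applied to the second identity in \eqref{eq:FG} then gives that $\tg(\gb, \cdot)$ is $C^\infty$ and strictly concave on $(0,1)$, with $\partial_\rho \tg(\gb, \rho)=-h(\rho)$ and $\partial_\rho^2 \tg(\gb, \rho)=-1/\partial_h^2 \tf(\gb, h(\rho))<0$, where $h(\rho)$ is the unique solution of $\partial_h \tf(\gb, h)=\rho$. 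Letting $\rho \searrow 0$ forces $h(\rho)\searrow h_c(\gb)$, so $\partial_\rho \tg(\gb, \rho) \to -h_c(\gb)$; the limit $\tg(\gb, 0^+)=0$ is already part of Theorem~\ref{th:GF}. For the bound \eqref{eq:Gsmooth}, inserting the trial value $h=h_c(\gb)+\gD$ with $\gD \ge 0$ into \eqref{eq:FG} and applying the smoothing inequality \eqref{eq:smoothing1} yields $\tg(\gb, \rho)\le c_\gb \gD^2 - \rho h_c(\gb) - \rho \gD$; optimizing by taking $\gD=\rho/(2c_\gb)$ gives \eqref{eq:Gsmooth} with $c=1/(4c_\gb)$.

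\emph{The sharp asymptotic \eqref{eq:Gbetarho0-3}.} Set $\psi_\gb(\rho):=\tg(\gb, \rho)+h_c(\gb)\rho$. Since $\tf(\gb, h_c(\gb)+y)=0$ for $y \le 0$, the second identity in \eqref{eq:FG} reads
\begin{equation*}
\psi_\gb(\rho)\,=\, \inf_{y \ge 0}\bigl( \tf(\gb, h_c(\gb)+y)-\rho y\bigr)\,.
\end{equation*}
By \ref{P:irrelevant} and \ref{P:beta0}, $\tf(\gb, h_c(\gb)+y) \sim \tf(0, y) \sim c_\ga y^{1/\ga}$ as $y \searrow 0$, and the explicit Legendre transform of $y \mapsto c_\ga y^{1/\ga}$ gives $\psi_0(\rho) \sim -K \rho^{1/(1-\ga)}$ with minimizer $y^*(\rho) \asymp \rho^{\ga/(1-\ga)} \to 0$. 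The upper bound $\psi_\gb(\rho) \le (1+o(1))\psi_0(\rho)$ follows by plugging $y^*(\rho)$ into the infimum and estimating $\tf(\gb, h_c(\gb)+y^*) \le (1+\gep)\tf(0, y^*)$ via \ref{P:irrelevant}. For the matching lower bound we split the infimum at a small $\gd>0$: on $[0, \gd]$ the inequality $\tf(\gb, h_c(\gb)+y) \ge (1-\gep)\tf(0,y)$ (also from \ref{P:irrelevant}) turns the problem into a rescaled $\gb=0$ Legendre infimum, equal to $(1-\gep)\tg(0, \rho/(1-\gep))=(1+O(\gep))\psi_0(\rho)$ by the explicit power-law scaling; on $[\gd, \infty)$, the strict convexity of $\tf(\gb, \cdot)$ at $h_c(\gb)+\gd$ ensures that $y \mapsto \tf(\gb, h_c(\gb)+y)-\rho y$ is increasing on $[\gd, \infty)$ once $\rho$ is small, so this piece is bounded below by the positive quantity $\tf(\gb, h_c(\gb)+\gd)+o(1)$, negligible compared with $\psi_0(\rho)\to 0^-$. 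Sending first $\rho \searrow 0$ and then $\gep \searrow 0$ closes the argument. The delicate point is precisely this localization: asymptotic equivalence of two convex functions near the origin does not automatically pass to their Legendre conjugates, so splitting the infimum and exploiting strict convexity of $\tf(\gb, \cdot)$ at a positive reference point $\gd$ are essential.
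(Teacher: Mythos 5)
Your argument is correct and follows essentially the same route as the paper: Legendre duality via \eqref{eq:FG}, the $C^\infty$-ness and strict convexity of $\tf(\gb,\cdot)$ from \ref{P:smooth} and Theorem~\ref{th:unifconvex}, the smoothing inequality \eqref{eq:smoothing1} for \eqref{eq:Gsmooth}, and \ref{P:irrelevant} for \eqref{eq:Gbetarho0-3}. The only place you go beyond the paper is in the last step: the paper dismisses \eqref{eq:Gbetarho0-3} as following ``directly'' from \eqref{eq:irrelevant} together with the $\gb=0$ asymptotics of Section~\ref{sec:3}, whereas you correctly flag that asymptotic equivalence of convex functions near a point does not automatically pass to Legendre conjugates, and you supply the localization argument (split the infimum at a small fixed $\gd$, control $[0,\gd]$ via \ref{P:irrelevant} and the explicit power-law scaling of $\tg(0,\cdot)$, and show the contribution from $[\gd,\infty)$ is $O(1)>0$ hence negligible against $\psi_0(\rho)\to 0^-$). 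This is a welcome unpacking of a step the paper leaves implicit; the approach is the same.
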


\medskip

\begin{proof}
Fix $\gb>0$.
By the second identity in \eqref{eq:FG} for every $\rho \in (0,1)$
\begin{equation}
\label{eq:int54s}
\tg(\gb, \rho)\,=\,  \tf(\gb, h_\rho)-\rho h_\rho \, ,
\end{equation}
with $h_\rho=h$ unique solution to $\rho= \partial_h \tf(\gb, h)$: note that, by  \eqref{eq:smoothing2}, by  the large $h$ remark at the end of \ref{P:smooth} and the strict convexity of  $\tf(\gb, \cdot)$ (see Theorem~\ref{th:unifconvex}),  $\partial_h \tf(\gb, \cdot)$ is a bijection from  $(h_c(\gb), \infty)$
to $(0,1)$. Fully exploiting Theorem~\ref{th:unifconvex}, i.e. using $\partial^2_h \tf(\gb, h)>0$ for $h>h_c(\gb)$,
by the Implicit Function Theorem we see that $\rho\mapsto h_\rho$ is $C^\infty$, so $\tg(\gb, \cdot)$ is $C^\infty$ too.

By differentiating once \eqref{eq:int54s} we obtain $\partial_\rho \tg(\gb, \rho)= -h_\rho$ which tends to $-h_c(\gb)$ for $\rho \searrow 0$. By differentiating once more we obtain $-\partial^2_\rho \tg(\gb, \rho)=1/\partial^2_h \tf(\gb, h_\rho)\in (0, \infty)$
and we have all the claimed estimates except \eqref{eq:Gsmooth}, that we consider next. 
Since $h_\rho \searrow h_c(\gb)$ when $\rho \searrow 0$,  for 
every $\gD_0>0$ we have
\begin{equation}
\label{eq:FGrestricted}
\tg(\gb, \rho)\,=\, - h_c(\gb) \rho+\inf_{h \in (h_c(\gb), h_c(\gb)+\gD)} \left( \tf(\gb, h)-\rho (h-h_c(\gb))  \right) \,,
\end{equation} 
provided that $\rho$ is smaller than  constant that depends on $\gD_0$. Therefore by \ref{P:smoothing1}
we obtain $\tg(\gb, \rho) \le - h_c(\gb)\rho -\rho^2/(4c_\gb)$ and \eqref{eq:Gsmooth} follows.

Finally, \eqref{eq:Gbetarho0-3} follows directly from \eqref{eq:irrelevant} of
\ref{P:irrelevant}: this analysis coincides with the $\gb=0$ analysis, developed in greater generality in Section~\ref{sec:3}.
\end{proof}

\begin{figure}[htbp]
\centering
\includegraphics[width=14 cm]{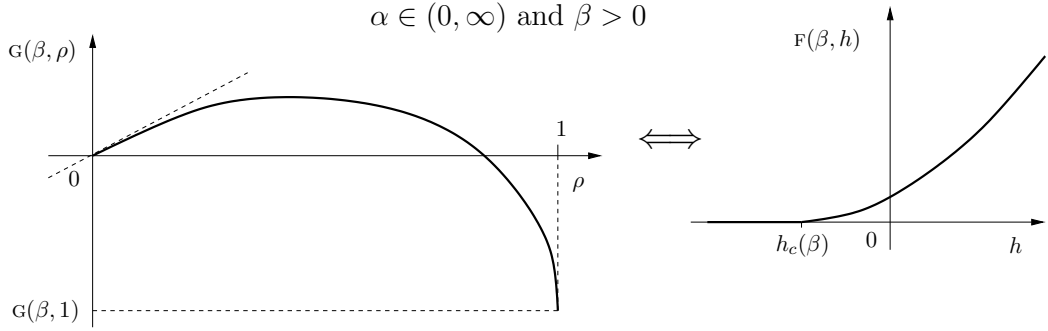}
\caption{\label{fig:alpha3} 
The figure illustrates the (Legendre transform) link between $\tg(\cdot)$ and $\tf(\cdot)$ for $\ga >0$ and $\gb>0$.
 $\tf(\gb, \cdot)$ has a non analyticity point at a critical value $h_c(\gb)<0$: to the left of this critical value the free energy is zero
 and to the right it is positive, $C^\infty$ and strictly convex. From this we can extract   $\tg(\gb, \cdot)$ is $C^\infty$ and strictly concave.
 The positive slope of $\tg(\gb, \cdot)$ at the origin, more precisely $\lim_{\rho \searrow 0}\partial _\rho \tg(\gb,\rho )= -h_c(\gb)$,  is a direct consequence  
  $h_c(\gb)<0$ (see Proposition~\ref{th:Gsmooth}). The fact that $\tg(\gb, \cdot)$ is strictly concave strongly hints to the similarity with the case of Figure~\ref{fig:alpha2}.
}
\end{figure}

\subsection{The generalized pinning model: free energy and transitions}
The free energy $\tf_H(\gb, h)$ is just given by \eqref{eq:FHG} via elementary considerations  given the properties and features 
of $\tg(\gb, \rho)$. These features are richer for $\gb=0$ (Figures~\ref{fig:alpha1} and \ref{fig:alpha2}) and they reduce to
Figure~\ref{fig:alpha3} for $\gb>0$. In particular
\smallskip

\begin{itemize}
\item $\tg(0, \rho)$ has a singularity at $\rho=\rho_c>0$ that directly reflects on a singularity of $\tf_H(0, h)$ at  $h_b= -H'(\rho_c)$, and
 corresponds to the big-jump transition: the proof is in Section~\ref{sec:path-beta0}, but the result can be readily understood because the variational formula suggests that the system will behave like a renewal constrained to a contact density $\rho=\rho_h$, where $\rho_h$ is the optimal density;
\item when $\gb>0$ instead this singularity disappears and, modulo the shift of the   critical point $h_c(0)$
to $h_c(\gb)$, that generates the positive slope at the origin,   Figure~\ref{fig:alpha3} is analogous to Figure~\ref{fig:alpha2}. Therefore the 
transition at $h_b$ disappears: the proof that the trajectories of the process have no big-jump transition is given in
Section~\ref{sec:pathsbeta>0}.
\end{itemize}
\smallskip

Here we provide the proof that Proposition~\ref{th:Gsmooth} yields, via Legendre transform, the properties of $\tf_H(\gb,h)$, for $\gb>0$, given in Theorem~\ref{th:beta>0}, see also Figure~\ref{fig:FH}. 

\medskip

\begin{proof}
[Proof of Theorem~\ref{th:beta>0}(1)]
The result is already known if $H(\cdot)$ is trivial, but the argument  we give applies in general. By \eqref{eq:FHG} and 
Proposition~\ref{th:Gsmooth} we have that
\begin{equation}
\label{eq:FHG5}
\tf_H(\gb, h)\,=\,   h \rho_h +H(\rho_h) + \tg(\gb, \rho_h)\, ,
\end{equation}
with $\rho=\rho_h$ unique solution of $h= -H'(\rho)-\partial_\rho\tg(\gb, \rho)$. Note that 
$\partial^2_\rho\tg(\gb, \rho)<0$ for $\rho \in (0,1)$ yields that $h \mapsto \rho_h$ is a $C^\infty$ bijection 
between $(h_c(\gb)-H'(0), \infty)$ and $(0,1)$. In particular
$\tf_H (h)> H(0)$ for $h> h_c(\gb)-H'(0)$ and,
 by continuity,  $\tf_H (h_c(\gb)-H'(0))= H(0)$. On the other hand, $\rho_h=0$ for $h < h_c(\gb)-H'(0)$.
 So $h_c^H(\gb)=h_c(\gb)-H'(0)$. The fact that 
 $\tf_H(\gb, \cdot) \in C^\infty$ on $\bbR \setminus \{h_c^H(\gb)\}$ is also a direct consequence of
\eqref{eq:FHG5} above $h_c^H(\gb)$, and of the triviality of the free energy below $h_c^H(\gb)$.

Let us turn to \eqref{eq:smoothingH}.
We claim that for every $\gb>0$ there exists a constant $c>0$ such that
 $\rho_h \le c (h-h_c^H(\gb))$ for $h$ sufficiently close to $h_c^H(\gb)$. This suffices to show \eqref{eq:smoothingH}
 because from the variational formula  \eqref{eq:FHG} 
 and $H(\rho)\le H(0)+ \rho H'(0)$, i.e. concavity, and $\tg(\gb, \rho) \le -h_c(\gb)\rho$ (Proposition~\ref{th:Gsmooth} )
  directly yield 
 \begin{equation}
\label{eq:varrhoh-1}
\tf_H(\gb, h)\, \le \, H(0)+ (h -h_c^H(\gb)) \rho_h\, ,
\end{equation} 
which is \eqref{eq:smoothingH} if we use the claim. To prove the claim we use the implicit characterization of $\rho_h$ for $h>h_c(\gb)$ that we write as 
\begin{equation}
h-h_c^H(\gb)\,=\, g_\gb(\rho_h) \ \text{ with } \ 
g_\gb(\rho)\,:=\,
-(H'(\rho)-H'(0))-\left(\partial_\rho\tg(\gb, \rho) +h_c(\gb)\right)\,.
\end{equation}
Note that $g_\gb(\cdot)$ is smooth and increasing and it satisfies  $g_\gb(\rho)\ge C \rho$
for a positive constant $C$: this is obvious if we assume \eqref{eq:assumefordeloc} (and in this case $C$ does not depend on $\gb$), 
but it is true in general  because  \eqref{eq:Gsmooth} implies $-\partial_\rho \tg(\gb, \rho)-h_c(\gb)\ge c \rho$ (this is simply because if $f(\cdot)$ is convex, $f'(\rho)\ge f(\rho)/ \rho$). Therefore $g_\gb^{-1}: [0, \infty) \longrightarrow [0,1)$ satisfies $g_\gb^{-1} (x) \le x/C$, so
\begin{equation}
 \rho_h \,=\, g_\gb^{-1}\left(h-h_c(\gb)\right) \, \le \, \frac{h-h_c(\gb)}C 
\, ,
\end{equation}
and the claim is proven. 
 
The proof of  \eqref{eq:sharpH} is  analogous to the one for $\gb=0$ (once again: the $\gb=0$ analysis is developed in detail in Section~\ref{sec:3}), because of 
the sharp estimate \eqref{eq:Gbetarho0-3}, which, by convexity, holds also if we formally differentiate both sides of the 
asymptotic equivalence.
\end{proof}

\begin{figure}[htbp]
\centering
\includegraphics[width=11 cm]{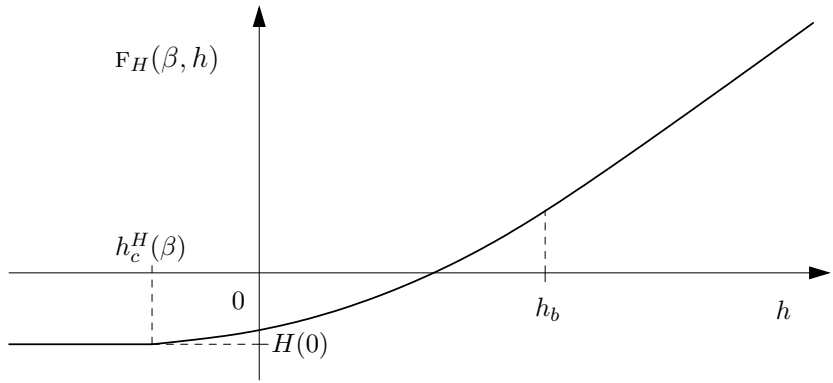}
\caption{\label{fig:FH} 
This is the graph of $\tf_H(\gb, h)$ for $H(\cdot)$ non trivial and $H'(0)<\infty$, hence $H(0)>-\infty$ too. 
We see the (de)localization transition 
at $h_c^H(\gb)=-H'(0)+ h_c(\gb)$ and we remark that the contact density is continuous at this transition. 
The transition at $h_b$ is instead present only if $\gb=0$ (and $\ga>1$), but it is difficult to appreciate it in this image because 
the contact density is also in this case continuous, or more regular, at the transition and the free energy is non trivial, i.e. non affine, both on the left and on the right of $h_b$.
}
\end{figure}

\section{Free energy in the  non disordered case: proof of Theorem~\ref{th:Fbeta0}}
\label{sec:3}
Recall that $\rho_c=1/\bE[\eta]\in [0, 1)$ and that $\rho_c=0$ if $\ga \in (0,1)$ and $\rho_c>0$ if $\ga>1$. In the next statement 
$c$ is a positive constant for which we have an explicit expression in terms of $\ga$, moments of $\eta$ and $C_K$: we are going to specify on what $c$ depends,
except for $C_K$ (see \eqref{eq:K}) that is omitted, because $c$ depends on $C_K$ in all the cases, either directly or via $\bE[\eta]$ 
and  $\bE[\eta^2]$.  
\medskip

\begin{proposition}[Basic properties of $\tg$]
\label{th:propG}
For every $\rho\in [0,1]$ the limit in \eqref{eq:G} with $\gb=0$ exists and 
\begin{equation}
\label{eq:lemG}
\tg(\rho)\,=\,\inf_{x\ge 0} \left( x + \rho \log \bE\left[ \exp(-x \eta)\right]\right)\,,
\end{equation}
from which the concavity of $\tg(\cdot)$ is evident and we have also the uniform estimate:
\begin{equation}
\label{eq:lemG-0}
\lim_{\gep \searrow 0 } \limsup_{N \to \infty} \suptwo{m\in \{1, \ldots, N:\}}{\vert m/N- \rho \vert \le \gep}
\left \vert
\tg(\rho)-
 \frac 1N \log \bP\left( \tau_m =N\right)\right\vert \,=\, 0\, .
\end{equation}
Moreover
\begin{enumerate}
\item if $\ga \in (0,1]$ (see Fig.~\ref{fig:alpha2}) then $\tg(\cdot)$ is analytic and negative in $(0,1)$ and, if we exclude $\ga=1$,
 for $\rho \searrow 0$
\begin{equation}
\label{eq:lemG-1}
\tg(\rho) \sim -c \rho^{1/(1-\ga)}  \ \text{ and } \ 
\tg'(\rho) \sim -\frac {c}{1-\ga}\rho^{\ga/(1-\ga)} \, ,
\end{equation}
and $c$ depends on $\ga$.
For $\ga=1$ instead $-\tg'(\rho)= \exp(-(1+o(1))/(C_K \rho))$ and $-\tg(\rho)=o(-\tg'(\rho))$. 
\item if $\ga >1$ (see Fig.~\ref{fig:alpha1}) then $\tg(\rho)=0$ for $\rho \in [0,\rho_c]$ and $\tg(\cdot)$ is analytic and negative 
on $(\rho_c,1)$. Moreover with $\gd:= \rho -\rho_c$ and $\kappa:=\max(\ga/(\ga-1), 2)$  in the limit $\gd \searrow 0$
we have for $\ga\neq 2$
\begin{equation}
\label{eq:lemG-2}
\tg(\rho_c+ \gd) \sim -c \, \gd^{\kappa}
 \ \text{ and } \ \tg'(\rho_c+\gd) \sim - \kappa c \,  \gd^{\kappa-1}\,, 
 \end{equation}
 where $c$ depends on $\ga$ and $\bE[\eta]$  when $\ga\in (1,2)$ and it depends on $ \bE[\eta]$ and $\bE[\eta^2]$ if $\ga>2$.
 When $\ga=2$ we have instead $\tg(\rho) \sim c \gd^2/ \log(1/\gd)$ and $\tg'(\rho) \sim 2c \gd/ \log(1/\gd)$ with  $c$ that depends on $\bE[\eta]$.
\item $\lim_{\rho \nearrow 1}\tg(\rho)= \tg(1)= \log K(1)  \in (-\infty,0)$ and $\lim_{\rho \nearrow 1}\tg'(\rho)=-\infty$.
\end{enumerate}
\end{proposition}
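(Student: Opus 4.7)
The strategy rests on a standard exponential tilt of the renewal increments. For $x\ge 0$, set $Z(x):=\bE[e^{-x\eta}]$ and let $\bP_x$ be the i.i.d.\ renewal law with marginal $K_x(n):=e^{-xn}K(n)/Z(x)$. A direct computation gives the change-of-measure identity
\begin{equation}
\label{eq:plan-tilt}
\bP(\tau_m=N)\,=\, Z(x)^m e^{xN}\, \bP_x(\tau_m=N)\,.
\end{equation}
Under $\bP_x$ with $x>0$, $\eta$ has all exponential moments, and $\bE_x[\eta]=-Z'(x)/Z(x)$ is a strictly decreasing smooth bijection from $(0,\infty)$ onto $(1,\bE[\eta])$ (understood as $(1,\infty)$ when $\ga\in(0,1]$). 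For $\rho\in(\rho_c,1)$ let $x(\rho)$ be the unique solution of $\bE_x[\eta]=1/\rho$; for $\rho\le \rho_c$ (only possible when $\ga>1$) set $x(\rho)=0$. The choice $x=x(\rho)$ turns $\{\tau_m=N\}$ into a typical event for the light-tailed renewal $\bP_x$, so a Stone-type local CLT gives $\bP_{x(\rho)}(\tau_m=N)\asymp 1/\sqrt{N}$, uniformly for $m/N$ in a small neighborhood of $\rho$. Taking $(1/N)\log$ in \eqref{eq:plan-tilt} then delivers existence of $\tg(\rho)$, the identity $\tg(\rho)=x(\rho)+\rho\log Z(x(\rho))$, and the uniform version \eqref{eq:lemG-0}. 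The variational formula \eqref{eq:lemG} follows because the upper bound $\tg(\rho)\le \inf_{x\ge 0}\bigl(x+\rho\log Z(x)\bigr)$ is immediate from \eqref{eq:plan-tilt} and $\bP_x(\cdot)\le 1$, while the matching lower bound is realized at $x=x(\rho)$; concavity is then read off as an infimum of affine functions of $\rho$.

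\medskip
\noindent\textbf{Analyticity and the flat portion.}
Since $Z$ is analytic on $(0,\infty)$ and $(d/dx)\bE_x[\eta]=-\var_x(\eta)<0$, the implicit function theorem makes $x(\rho)$ analytic, hence so is $\tg(\rho)=x(\rho)+\rho\log Z(x(\rho))$, on the open interval where $x(\rho)>0$: this is $(\rho_c,1)$ when $\ga>1$ and $(0,1)$ when $\ga\in(0,1]$. Strict negativity of $\tg$ on this interval follows from $(d/dx)\bigl[x+\rho\log Z(x)\bigr]\big|_{x=0}=1-\rho/\rho_c<0$, which forces $x(\rho)>0$ and a strictly negative infimum. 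For $\ga>1$ and $\rho\in[0,\rho_c]$ the same derivative at $x=0$ is non-negative, so the infimum in \eqref{eq:lemG} is attained at $0$ with value $0$, giving the flat portion in item~(2).

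\medskip
\noindent\textbf{Quantitative asymptotics.}
All sharp rates follow from Karamata--Tauberian expansions of $Z$ near the relevant endpoint.
For $\ga\in(0,1)$ as $x\searrow 0$ one has $1-Z(x)\sim (C_K\Gamma(1-\ga)/\ga)\,x^{\ga}$, hence $(\log Z)'(x)\sim -C_K\Gamma(1-\ga)\,x^{\ga-1}$; inverting $(\log Z)'(x)=-1/\rho$ gives $x(\rho)\sim c_1\rho^{1/(1-\ga)}$ and substitution in $\tg(\rho)=x+\rho\log Z(x)$ yields $\tg(\rho)\sim -c\,\rho^{1/(1-\ga)}$, with the companion derivative asymptotic coming from $\tg'(\rho)=\log Z(x(\rho))$ (envelope identity, or direct differentiation legitimized by concavity). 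The case $\ga=1$ is analogous with slowly varying corrections: $1-Z(x)\sim C_K x\log(1/x)$, leading to the stated exponentially small behavior of $-\tg'(\rho)$ together with $-\tg(\rho)=o(-\tg'(\rho))$. For $\ga>1$ and $\rho\searrow \rho_c$ one expands $\log Z(x)=-x/\rho_c+R(x)$ near $0$: for $\ga>2$, $R(x)=\tfrac{1}{2}\var(\eta)\,x^2+O(x^3)$, whence $x(\rho)\sim(\rho-\rho_c)/(\rho_c^2\var(\eta))$ and $\tg(\rho)\sim -(\rho-\rho_c)^2/(2\rho_c^3\var(\eta))$ (so $\kappa=2$); for $\ga\in(1,2)$ the Tauberian estimate gives $R(x)=c_\ga x^{\ga}(1+o(1))$, yielding $x(\rho)\sim c\,\gd^{1/(\ga-1)}$ and $\tg(\rho)\sim -c'\gd^{\ga/(\ga-1)}$ (so $\kappa=\ga/(\ga-1)$); the boundary $\ga=2$ carries a logarithmic correction $R(x)\sim c\,x^2\log(1/x)$ that produces the stated $\gd^2/\log(1/\gd)$ behavior. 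Finally, for $\rho\nearrow 1$ one uses $Z(x)=K(1)e^{-x}(1+O(e^{-x}))$ as $x\to\infty$: then $\bE_x[\eta]-1\sim (K(2)/K(1))e^{-x}$ so $x(\rho)\to\infty$, giving $\tg(\rho)\to \log K(1)$ and $\tg'(\rho)=\log Z(x(\rho))\to -\infty$; the value $\tg(1)=\log K(1)$ is also visible directly from $\bP(\tau_N=N)=K(1)^N$.

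\medskip
\noindent\textbf{Expected main obstacles.}
The two delicate points are (i) the second-order Tauberian expansion at the boundary $\ga=2$, where the logarithmic correction to $R(x)$ must be extracted solely from $K(n)\sim C_K/n^{3}$, and (ii) the uniformity in $m$ of the local CLT for $\bP_{x(\rho)}$ underlying \eqref{eq:lemG-0}: since the tilted law $K_{x(\rho)}$ depends on $\rho$, one needs a local limit theorem with constants that remain stable as $\rho$ varies over a compact subinterval, which is available (e.g.\ via Stone's local limit theorem with quantitative remainder) but must be invoked with some care.
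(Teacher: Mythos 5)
Your approach---minimizing the strictly convex function $g_\rho(x)=x+\rho\log\bE[e^{-x\eta}]$, analyticity of $\rho\mapsto x(\rho)$ by the implicit function theorem, and Tauberian expansions of $\bE[e^{-x\eta}]$ near $x=0$ and $x=\infty$ to get the boundary asymptotics---is essentially the paper's approach, and your computations in ``Quantitative asymptotics'' track the paper's proof closely and correctly. The one place where your plan has a real gap is the proof of existence and of the variational formula \eqref{eq:lemG} on the flat portion $\rho\in(0,\rho_c)$ when $\ga>1$.

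Your mechanism for the lower bound on $\bP(\tau_m=N)$ is the tilt identity \eqref{eq:plan-tilt} together with a local CLT for the tilted renewal: ``The choice $x=x(\rho)$ turns $\{\tau_m=N\}$ into a typical event for the light-tailed renewal $\bP_x$.'' This is correct for $\rho>\rho_c$, where $x(\rho)>0$ and $K_{x(\rho)}$ has all exponential moments. But for $\rho<\rho_c$ (which occurs exactly when $\ga>1$) you set $x(\rho)=0$, and then $\bP_0=\bP$ is the \emph{original} heavy-tailed renewal, not a light-tailed one; moreover $\{\tau_m=N\}$ with $m/N\sim\rho<\rho_c$ is \emph{not} a typical event for $\bP_0$, since $\tau_m$ concentrates near $m\bE[\eta]=m/\rho_c<N$. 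No local CLT applies, and the event lives in the large-deviation regime on the ``wrong'' side of the mean. The correct mechanism there is a \emph{big-jump} lower bound: one large increment of size $\approx N-(m-1)\bE[\eta]$ (costing only a polynomial factor $\sim C_K N^{-(1+\ga)}$ because of the power-law tail) plus a typical configuration for the remaining $m-1$ increments. This gives $\liminf_N N^{-1}\log\bP(\tau_m=N)\ge 0$ and, combined with the trivial upper bound $\bP(\tau_m=N)\le 1$, establishes $\tg(\rho)=0$ on $[0,\rho_c]$ and hence the validity of \eqref{eq:lemG} there. Your paragraph ``Analyticity and the flat portion'' shows only that the right-hand side of \eqref{eq:lemG} vanishes for $\rho\le\rho_c$; it does not show that $\tg(\rho)$ agrees with it, which is precisely what the big-jump argument supplies. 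The paper handles this explicitly (the lower bound ``is easily achieved by selecting trajectories that make a suitable big-jump''), and you should add the same step.

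Two minor remarks. First, in the uniform estimate \eqref{eq:lemG-0} the tilting parameter $x$ must be allowed to vary with $m/N$ (or one must control the moderate-deviation error from using a fixed $x(\rho)$ as $m/N$ ranges over $[\rho-\gep,\rho+\gep]$); you flag this yourself, and it is handled in the literature the paper cites. Second, your envelope identity $\tg'(\rho)=\log Z(x(\rho))$ and the Tauberian case analysis ($\ga\in(0,1)$, $\ga=1$, $\ga\in(1,2)$, $\ga=2$, $\ga>2$, and $\rho\nearrow 1$) are all in line with the paper's computations and are correct.
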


\medskip

\begin{proof}
Existence of $\tg(\rho)$ and  \eqref{eq:lemG} can be established at the same time by standard arguments: for the upper bound it suffices to apply the 
Markov inequality to $\bP\left( \tau_m \le N\right)=\bP\left( \exp(-x \tau_m) \ge \exp(-x N)\right)$, for $x>0$; for the lower bound 
the standard exponential tilt argument covers the case $\rho>\rho_c$, while for $\rho < \rho_c$ the lower bound is easily achieved 
by selecting trajectories that make a suitable big-jump (no exponential cost), so that in the rest of the system the contact density is $\rho_c$ and this matches with the typical behavior of the renewal (again, no exponential cost): if 
$\rho=\rho_c$ the argument is the same, but the big-jump is empty. Details of the proof can be found 
in \cite[\S~4.3]{cf:thesis-BH};  in a more general context these estimates can be found for example  in \cite{cf:BB00,cf:BM2}.

Now we set $g_\rho(x):=  x + \rho \log \bE[ \exp(-x \eta)]$ and remark that for every $\rho\in (0,1)$ the function
$g_\rho(\cdot)$ is strictly convex and $\lim_{x\to \infty} g_\rho(x)=\infty$ because in this limit
$\bE[ \exp(-x \eta)] \sim K(1)\exp(-x)$. Therefore the infimum in the right-hand side of \eqref{eq:lemG}
is reached at a unique point $x_\rho\ge 0$. Since for $x>0$
\begin{equation}
\label{eq:sec2tilt}
g'_\rho(x)\, =\, 1- \rho \frac{\bE[ \eta \exp(-x \eta)]}{\bE[ \exp(-x \eta)]}\, =:\, 1- \rho \bE_x[\eta]\, ,
\end{equation} 
we readily see that if $\rho \le \rho_c$ then $g'_\rho(x)>0$ for every $x>0$ so $x_\rho= 0$ and 
$\tg(\rho)=g_\rho(0)=0$. If $\rho > \rho_c$ instead  $x_\rho> 0$, 
because in this case 
 $g'_\rho(x)=0$, that is $\rho \bE_x[\eta]=1$ can be solved with $x\in (0,1)$. Since $x\mapsto \log \bE[ \exp(-x \eta)]$ is real analytic on $(0, \infty)$ 
from the analytic implicit function theorem one readily obtains that $\rho \mapsto x_\rho$ is analytic on $(\rho_c, 1)$
and this property passes directly to $\tg(\cdot)$, because $\tg(\rho)=g_\rho(x_\rho)$.  

The rest of the proof is concerned with the asymptotic behaviors for  $\rho \searrow \rho_c$ and $\rho \nearrow 1$. Key formulas for this are
$\bE_{x_\rho} [\eta]=1/\rho$ (cf. \eqref{eq:sec2tilt}) and 
\begin{equation}
\label{eq:sec2basic}
\tg(\rho)\,=\, x_\rho + \rho \log \bE\left[ \exp(-x_\rho \eta)\right] \ \text{ and } \
 \tg'(\rho)\,=\, \log  \bE\left[ \exp(-x_\rho \eta)\right]\, .
\end{equation}

For the case $\rho \nearrow 1$ we 
observe that, for  $x  \nearrow \infty$, 
both $\bE[\exp(-x \eta)]$ and $\bE[\eta \exp(-x \eta)]$ are equal to $K(1) \exp(-x)+O( \exp(-2x))$.
 Therefore $\bbE_x[\eta]= 1+O( \exp(-x))$  
 and this implies 
 $x_\rho \nearrow \infty$ as   
 $\rho\nearrow 1$ with $  1-\rho= O(\exp(-x_\rho))$.
 Therefore 
 \begin{equation}
 \tg(\rho)\, =\, x_\rho + \rho \log \left( K(1)e^{-x_\rho} +O\left(e^{-2x_\rho} \right) \right)\, =\, 
 x_\rho
 (1-\rho) + \rho \log K(1)   + O\left(e^{-x_\rho} \right) 
 \,,
 \end{equation}
 so $\lim_{\rho \nearrow 1} \tg(\rho)= \log K(1)$. By using the second expression in \eqref{eq:sec2basic}, we get 
 $\tg'(\rho)\sim -x_\rho$, in particular
 $\tg'(1)=-\infty$.
 
We are left with $\rho \searrow \rho_c$ that we separate into $\rho_c=0$ and $\rho_c>0$. Let us first remark that in both cases
$\lim_{\rho \searrow \rho_c} \tg(\rho)= \tg(\rho_c)=0$. This is obvious by concavity when $\rho_c>0$.
If $\rho_c=0$ it suffices to use the first expression in \eqref{eq:sec2basic} and the fact that $\bE_{x_\rho}[\eta]=1/\rho \nearrow \infty$ when $\rho \searrow 0$, so $x_\rho\to 0$  in this limit.
This remark simplifies the analysis because the  asymptotic analysis of $\tg(\cdot)$ near $\rho_c$ follows from by integrating the corresponding estimate on  $\tg'(\cdot)$.

\smallskip

For $\rho_c=0$, i.e. $\ga\in (0,1]$, by Riemann sum approximation 
we readily find
that $\bE_{x} [\eta]\sim \bE[\eta \exp(-x \eta)]$ for $x \searrow 0$ and 
\begin{equation}
\label{eq:frjh3}
\bE[\eta \exp(-x \eta)] \stackrel{x \searrow 0} \sim C_K \times \begin{cases}
\left(\int_0^\infty y^{-\ga} e^{-y} \dd y\right) x^{\ga-1}\, & \textrm{ if } \ga \in (0,1)\, ,
\\
\log(1/x)  \textrm{ if } \ga =1\, .
\end{cases}
\end{equation} 
Of course $\int_0^\infty y^{-\ga} e^{-y} \dd y=\gG (1- \ga)$, but we will not keep track of the precise value of the constants and we content ourselves with remarking that we have obtained for $\ga \in (0,1)$ that $x_\rho \sim c_\ga \rho^{1/(1-\ga)}$. 
Now we can insert this result into the second identity in \eqref{eq:sec2basic} that in this limit becomes $\tg'(\rho)
\sim -\bE[1- \exp(-x_\rho\eta)]$: the sharp asymptotic behavior of the right-hand side is again a matter of Riemann sum approximation for $x_\rho$ that tends to zero. So for 
for $\ga \in (0,1)$ we obtain $\tg'(\rho) \sim -C_K(\Gamma(1-\ga)/\ga) x_\rho^\ga \sim -c \rho^{\ga/(1-\ga)}$. 
For $\ga=1$, going back to \eqref{eq:frjh3} we see that 
$x_\rho=\exp(-(1+o(1))/(C_K \rho))$, so by using $\bE[1- \exp(-x\eta)]\sim C_K x \log (1/x)$ from which, using \eqref{eq:sec2basic},
we obtain $\tg'(\rho) \sim -C_K x_\rho \log (1/x_\rho)$, which implies $\tg'(\rho)=\exp(-(1+o(1))/(C_K \rho))$ and, by convexity  of $-\tg(\cdot)$, 
we see that $0 \le \tg(\rho)/\tg'(\rho)\le \rho$.

\smallskip

For $\rho_c>0$, i.e. $\ga>1$, the analysis is different according to whether $\bE[\eta^2]< \infty$ or not:
\begin{enumerate}
\item if $\bE[\eta^2]< \infty$, that is $\ga>2$, and $x \searrow 0$ we have
\begin{equation}
\bE_x[ \eta]\, =\, \frac{\bE[\eta \exp(-x\eta)]} {\bE[ \exp(-x\eta)]} \,=\, \frac{\bE[\eta]-\bE[\eta^2]x (1+o(1))}{1-\bE[\eta] x(1+o(1)) }\, =\,
\bE[\eta]- \mathrm{var} (\eta) x 
+o(x)\, .
\end{equation}
Therefore $x_{\rho_c+\gd} \sim (\bE[\eta]^2/\mathrm{var} (\eta))\gd$, so $\tg'(\rho)\sim  -\bE[\eta] x_\rho$ directly yields the result
for $\tg'(\rho_c+\gd)$. 
\item if $\bE[\eta^2]= \infty$ we consider separately $\ga \in (1,2)$ and $\ga=2$. In the first case we use 
\begin{equation}
\bE_x[\eta]\, =\, \frac{\bE[\eta\exp(-\eta x)]}{1+O(x)}\,=\, \bE[\eta] -C_K \left( \int_0^\infty \frac{1-e^{-y}}{y^\ga} \dd y \right) x^{\ga-1}
+o\left(x^{\ga-1}\right)\, ,
\end{equation}
so $x_{\rho_c + \gd} \sim c_\ga \gd^{1/(\ga-1)}$, $\tg'(\rho_c + \gd)\sim -\bE[\eta] x_{\rho_c + \gd}
\sim -\bE[\eta] c_\ga \gd^{1/(\ga-1)}$. 
For $\ga=2$ we have $\bE_x[\eta]=\bE[\eta]- C_K(1+o(1)) x \log (1/x)$ that entails 
$\gd \sim \rho_c^2 C_K x_{\rho_c+\gd} \log x_{\rho_c+\gd} $ so 
$x_{\rho_c+\gd} \sim \gd /(\rho_c^2 C_K \log(1/\gd))$ and with $\tg'(\rho) \sim -\bE[\eta] x_\rho$ we conclude. 
\end{enumerate}
\end{proof}

\medskip

\noindent
\emph{Proof of Theorem~\ref{th:Fbeta0}.}
Of course we are going to use intensively
\begin{equation}
\label{eq:FHG0}
\tf_H( h)\,=\, \sup_{\rho \in [0,1]} \left( h \rho +H(\rho) + \tg( \rho)\right)\, , \ \
\tf'_H( h)\,=\, \rho_h
\ \text{ and } \ \ 
H'(\rho_h)+  \tg'( \rho_h)\, =\, -h\,,
\end{equation} 
where  the second equation identifies the unique optimizer  $\rho_h$, as long as  $\rho_h>0$. 
So the second and third identity are written for  $\rho_h>0$. Note also that
\eqref{eq:lemG-1} implies that $\tg(\cdot)$ is $C^1$ also at $\rho_c$.

We start with the case $h_c^H=-H'(0)>-\infty$: one readily sees that $\rho_h=0$ for $h \le h_c^H$ and, by the first identity in 
\eqref{eq:FHG0},
$\tf_H( h)=H(0)$ for these values of $h$. On the other hand, for $h>h_c^H$,
the second identity in \eqref{eq:FHG0} can be written as 
$-(H'(\rho_h)-H'(0))-   \tg'( \rho_h)=(h-h_c^H)$  and we see that it
has a strictly positive solution $\rho_h(=\tf_H'(h))$ because the left-hand side is an increasing function of $\rho_h$, and this directly yields
 $\tf_H(h) > H(0)$. Therefore $h_c^H$ is a critical (i.e., non analyticity) point. 
 
 We now recall that we assume \eqref{eq:assumefordeloc}. Here is the $h \searrow h_c^H$ analysis:
 
 \smallskip
 \begin{itemize}
 \item if $\rho_c>0$ then $h-h_c^H=-H'(\rho_h)+H'(0) \sim c_H \rho_h$, so $\tf_H'(\rho)=\rho_h\sim (h-h_c^H)/c_H$
 and $\tf_H(\rho)-H(0)\sim (h-h_c^H)^2/(2c_H)$;
 \item
 if $\rho_c=0$ then $h-h_c^H=-H'(\rho_h)+H'(0)-\tg'(\rho_h)$ and, by  Proposition~\ref{th:propG}(1), we have (with $c_\ga=c/(1-\ga)$)
 \begin{equation}
 h-h_c^H\, \sim
 \begin{cases}
 c_\ga \rho_h^{\ga/(1-\ga)} & \text{ if } \ga\in (0, 1/2)\, ,
 \\
 (c_{1/2}+c_H) \rho_h & \text{ if } \ga= 1/2\, ,
 \\
 c_H \rho_h & \text{ if } \ga \in (1/2,1]\, ,
 \end{cases}
 \end{equation}
 and, like above, from $\tf_H'(\rho)=\rho_h$ we extract the claimed asymptotic behaviors.
 \end{itemize}

\medskip

We turn now to $h_b$, of course when  $h_b>h_c^H$ (so $\ga >1$ and $H(\cdot)$ is  non  trivial) otherwise we are in the case we just considered.
 The origin of the $h_b$ singularity is simply the fact that 
 $\tf_H(h)$ is determined by different variational problems according to whether $h< h_b$ and $h>h_b$.
 In fact 
  $h\le h_b$ 
means $\rho_h\le \rho_c$, i.e. $\tg(\rho)=0$, and 
the variational problem in this case reduces to $\tf_H( h)= \tf_H^{\mathrm{reg}}( h)=\sup_{\rho \in [0,1]} \left( h \rho +H(\rho) \right)$.
For $h>h_b$ instead 
$\rho_h> \rho_c$, i.e. $\tg(\rho)>0$, and one has to use the full expression for $\tf_H(\cdot)$.

Let us analyse the singularity at $h_b$. 
We start by remarking that we are just need to do a  local analysis at $h_b=-H'(\rho_c)$: by introducing $J(y):= H(\rho_c+y) -   H(\rho_c) -H'(\rho_c)y$
and $G(y)=\tg(\rho_c +y)$, we can work with
\begin{equation}
\label{eq:JG}
F(x)\, :=\, \sup_y \left( xy +J(y) +G(y)\right)\ \ \text{ and } \  \ F^{\text{reg}}(x)\, :=\, \sup_y \left( xy +J(y) \right)\, .
\end{equation}  
The maximizer   $y_x$ for   $F(\cdot)$
 is the (unique) solution of $U'(y_x)=-x$ with $U=J+G$. We remark from the start that $F(x) \le F^{\text{reg}}(x)$ and the inequality is strict for $x>0$. 
 
Consider first the case of $\kappa$ which  is not an integer and set $\mathtt{k}= \lfloor \kappa \rfloor$, so $\mathtt{k} =2, 3, \ldots$:
this means that we are considering $\ga \in (1,2)$.
For $y \searrow 0$ we have
\begin{equation}
\label{eq:Taylorsv1}
U(y)\, =\, -a_2 y^2-a_3 y^3- \ldots - a_{\mathtt{k}} y^\mathtt{k} - b_\kappa y^\kappa  +o( y^\kappa)\, ,
\end{equation}
where $b_\kappa>0$ is the constant $c$ in  \eqref{eq:lemG-2}. 
Moreover, still by   \eqref{eq:lemG-2} , we have 
\begin{equation}
\label{eq:Taylorsv2}
U'(y)\, =\, -2a_2 y-3a_3 y^2- \ldots - \mathtt{k}a_{\mathtt{k} } y^{\mathtt{k}-1} - \kappa b_\kappa y^{\kappa-1} 
+ o\left( y^{\kappa-1} \right)\, .
\end{equation}
Note that $a_2= \vert J''(0)\vert>0$, but the other $a_\cdot$ coefficients are just real numbers.
From \eqref{eq:Taylorsv2} we extract that as $x \searrow 0$
\begin{equation}
F'(x)\, =\, y_x\, =\, c_1 x +c_2x^2+ \ldots + c_{\mathtt{k} -1} x^{\mathtt{k}-1} - c_\kappa x^{\kappa-1}
+o\left( x^{\kappa-1}\right)\, , 
\end{equation}
where  $c_1=1/(2a_2)$ and
$c_\kappa=  \kappa b_{\kappa}/{(2a_2)^\kappa}$.
so 
\begin{equation}
\begin{split}
F(x)\, &=\,  \frac 12 c_1 x^2 +\frac 13c_2x^3+ \ldots + \frac 1{\mathtt{k}} c_{\mathtt{k} -1} x^{\mathtt{k}} - \frac 1{\kappa} c_\kappa x^{\kappa-1}
+ o\left( x^{\kappa-1}\right)
\\
&=: \,P_\mathtt{k} (x)- \frac 1{\kappa} c_\kappa x^{\kappa-1}
+ o\left( x^{\kappa-1}\right)\, ,
\end{split} 
\end{equation}
  where the last line defines $P_\mathtt{k} (x)$, a polynomial of degree $\mathtt{k}$.

 Since the analysis we have developed  can be repeated for $F^{\mathrm{reg}}(x)$, that is with
 $U=J$,   in an essentially identical (in fact, simpler) way, and considering $x \to 0$ (not simply $x \searrow 0$)
 we readily see that $F^{\mathrm{reg}}(x)= P_\mathtt{k} (x) +O(x^{\mathtt{k}+1} )$. This completes the case of $\kappa$ non integer.

\medskip

Let us consider now the cases $\kappa=\mathtt{k}=2,3, \ldots$.

When $\kappa=2$, that is $\ga=2$ (and  $\bE[\eta^2] = \infty$),
 for $x \searrow 0$ we have 
$U'(y)= -2a_2 y- 2b_2 y/ \log (1/y)+$h.o., of course h.o. means higher order. Therefore $y_x=x/(2a_2)- (b_2/(2 a_2^2))x/\log(1/x) +$h.o., which yields 
 $F(x)= x^2/(4a_2)+ (b_2/(4 a_2^2))x^2/\log(1/x) +$h.o..
 For $x\nearrow 0$ it suffices to repeat the  same analysis, but this time there is no logarithmic terms and 
 $(b_2/(4 a_2^2))x^2/\log(1/x)$  becomes simply  $O(x^3)$.

For $\kappa=\mathtt{k}=3,4, \ldots$ the analysis changes slightly because  the last two  explicit terms in the right-hand side
of \eqref{eq:Taylorsv1} and  \eqref{eq:Taylorsv2} have the same  behavior.
Therefore 
the coefficient appearing in front
of the term $y^\mathtt{k}$ in the development for $U(y)$ is $a_\mathtt{k}+b_\mathtt{k}$, respectively   $a_\mathtt{k}$,  when  $y\searrow 0$, respectively
 $y \nearrow 0$. This mismatch directly reflects on a mismatch in the $x^\mathtt{k}$ term of the two developments for $F(x)$.
 \qed

\section{Path properties in the  non disordered case: proof of Theorem~\ref{th:paths0}}
\label{sec:path-beta0}

The basic step is observing that the probability $\bP_{N, h}^\Psi $ introduced in \eqref{eq:mod-without-dis} is a superposition of 
probabilities in which the number of contacts is fixed: 
\begin{equation}
 \label{eq:dec-contact}
 \bP_{N, h}^\Psi (\cdot)\, :=\, 
 \frac{ \sum_{m=1}^N\exp(hm) \Psi(m, N) \bP(\tau_m=N)\bQ_{N, m} (\cdot)}
 {\sum_{m=1}^N\exp(hm) \Psi(m, N) \bP(\tau_m=N)} 
 \, ,
 \end{equation} 
 where $\bQ_{N, m}$ is the law of $\tau \cap [0,N]$ conditioned to $\tau_m=N$. So $\bQ_{N, m}$ is the law of the renewal conditioned to visiting $N$ in precisely $m$ steps and can of course be viewed as a non disordered pinning model conditioned to having $m$ contacts (one of which is at $N$). But this process is very relevant well beyond pinning models and in fact it has been studied in depth: we collect here the results we will use (that are only a minimal part of what is available in the literature).

\medskip

Recall the notations introduced for Theorem~\ref{th:paths0}: 
\medskip

\begin{proposition}
\label{th:muNm} 
In two parts, the first applies only to $\ga>1$, the second one is general:
\begin{enumerate}
\item 
for every $\rho \in [0, \rho_c)$ and every $\gep\in (0, (\rho_c-\rho)/2)$ we have
\begin{equation}
\label{eq:muNm1}
\lim_{N \to \infty} \inf_{m:\, \vert m/N -\rho \vert \le \gep }
\bQ_{N, m} \left(
 \left \vert \frac {\eta_{1,N}}N-\left(1-\frac{\rho}{\rho_c}\right)
\right \vert +   \frac {\eta_{2,N}}N  \le \frac{2\gep}{\rho_c} 
\right)\,=\, 1\,.
\end{equation}
\item for every $\rho\in (\rho_c, 1)$
and every $\gep \in (0, \min(\rho-\rho_c,1-\rho)/2)$ 
\begin{equation}
\label{eq:muNm2}
\lim_{N \to \infty} \inf_{m:\, \vert (m/N) -\rho \vert \le \gep }
\bQ_{N, m} \left(
 \left \vert  \frac {\eta_{1,N}}N \right \vert \le \gep
\right)\,=\, 1\,.
\end{equation}
\end{enumerate}
\end{proposition}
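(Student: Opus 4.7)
The plan is to treat the two parts by rather different strategies, reflecting the two regimes of the conditioning $\{\tau_m=N\}$ with $m/N\to\rho$: for $\rho>\rho_c$ we sit in the Cram\'er/exponential-tilt regime, while for $\rho<\rho_c$ we sit in the big-jump regime. Both are classical themes in the large-deviation theory of heavy-tailed i.i.d. sums (see e.g.\ \cite{cf:AL,cf:DDS08}), and the substance of the proof is to transfer them to the conditional law $\bQ_{N,m}$ uniformly over $m$ with $m/N$ close to $\rho$.

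For part (2) I would use the exponential tilt that turns the conditioning into a typical event. For each $\rho'\in(\rho_c,1)$ let $x=x_{\rho'}>0$ be the unique solution of $-\partial_x\log\bE[e^{-x\eta}]=1/\rho'$, set $K_x(n):=K(n)e^{-xn}/\bE[e^{-x\eta}]$, and let $\bP_x$ denote the i.i.d. renewal law with inter-arrival $K_x$. A direct computation gives the tilt-invariance $\bQ_{N,m}(A)=\bP_x(A\mid\tau_m=N)$ for every $A$ measurable with respect to $(\eta_1,\ldots,\eta_m)$, since the tilt multiplies every trajectory on $\{\tau_m=N\}$ by the common factor $e^{-xN}\bE[e^{-x\eta}]^{-m}$. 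Choosing $x=x_{m/N}$ gives $\bE_x[\tau_m]=N$ exactly, while under $\bP_x$ the variable $\eta$ has all exponential moments. The local limit theorem (available because $K_x(1)>0$) then yields $\bP_x(\tau_m=N)\geq c/\sqrt m$ uniformly for $|m/N-\rho|\leq\epsilon$, with $\epsilon$ small enough that $[\rho-\epsilon,\rho+\epsilon]\subset(\rho_c,1)$. A union bound now closes the argument,
\begin{equation*}
\bQ_{N,m}(\eta_{1,N}\geq\epsilon N)\,\leq\,\frac{m\,\bP_x(\eta\geq\epsilon N)}{\bP_x(\tau_m=N)}\,\leq\,C\sqrt m\,m\,e^{-c(\epsilon)N}\,,
\end{equation*}
because $K_x$ decays exponentially; uniformity follows from continuity of $\rho'\mapsto x_{\rho'}$.

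For part (1) I would follow the classical single-big-jump decomposition. Fix $\delta\in(0,\epsilon/(2\rho_c))$ and set $T_\delta:=|\{i\leq m:\eta_i>\delta N\}|$. I would prove, uniformly for $|m/N-\rho|\leq\epsilon$:
(a) the matching lower bound $\bP(\tau_m=N)\geq c\,m\,N^{-(1+\alpha)}$, obtained by forcing a single jump of size close to $N-(m-1)/\rho_c$ and applying the local limit theorem to the remaining $m-1$ typical jumps;
(b) $\bQ_{N,m}(T_\delta\geq 2)\to 0$, via union bound over pairs together with the tail $K(k)\leq C/k^{1+\alpha}$ and the uniform local bound $\sup_y\bP(\tau_{m-2}=y)\leq C\,m^{-1/\max(\alpha,2)}$;
(c) $\bQ_{N,m}(T_\delta=0)\to 0$, via an upward Cram\'er/Fuk--Nagaev estimate for the truncated sum $\sum_i\eta_i\ind_{\eta_i\leq\delta N}$, whose mean $\approx m/\rho_c$ is strictly smaller than $N$. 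Once reduced to $\{T_\delta=1\}$, exchangeability gives, for $k>\delta N$,
\begin{equation*}
\bP(T_\delta=1,\,\tau_m=N,\,\eta_{1,N}=k)\,=\,m\,K(k)\,\bP^{(\delta)}(\tau_{m-1}=N-k)\,,
\end{equation*}
where $\bP^{(\delta)}$ is the law of the renewal with all inter-arrivals at most $\delta N$. The same Cram\'er-type estimate applied to the deviation of $\tau_{m-1}$ from its truncated mean $(m-1)/\rho_c$ locates $k$ within $\epsilon N/(2\rho_c)$ of $N-(m-1)/\rho_c$ with probability going to $1$; combined with $\eta_{2,N}\leq\delta N$ on $\{T_\delta=1\}$ and $|N-(m-1)/\rho_c-(1-\rho/\rho_c)N|\leq \epsilon N/\rho_c + O(1)$, the triangle inequality delivers the $2\epsilon/\rho_c$ window of the statement.

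The main obstacle is the sharp estimate in (c), and more generally the concentration of the truncated sum under $\{T_\delta=1\}$, when $\alpha\in(1,2)$: there the variance of $\eta\ind_{\eta\leq\delta N}$ grows like $(\delta N)^{2-\alpha}$, so the truncated log-MGF is only quadratic for exponential parameters $\lambda\lesssim 1/(\delta N)$, and a naive Cram\'er optimization is not directly valid. This can be handled via the Fuk--Nagaev inequality, which already encodes the heavy-tailed deviation principle and yields the required stretched-exponential bound. A more efficient route, followed in \cite[Ch.~4]{cf:thesis-BH}, is to invoke directly the precise big-jump local limit theorem $\bP(\tau_m=N)\sim m\,K(N-m/\rho_c)$, uniform on our range of $m$; this both gives (a) and trivialises the comparison in (b)--(c).
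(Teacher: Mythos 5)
Your proof of part (2) is essentially the paper's: an exponential tilt $K_x(n)\propto K(n)e^{-xn}$ chosen so that $\bE_x[\eta]=N/m$, the observation that $\bQ_{N,m}$ is tilt-invariant, a local limit theorem for triangular arrays to get $\bP_x(\tau_m=N)\geq c/\sqrt m$ uniformly, and a union bound against the exponential tail of $K_x$. The paper cites \cite[Th.~1.2]{cf:DmcD} for the LCLT and writes the union bound in the same shape as yours, so this part is a match.

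For part (1) you take a genuinely more laborious route than the paper does. You propose to run the classical single-big-jump machinery from scratch: introduce the count $T_\delta$ of $\delta N$-exceedances, prove (a) a lower LLT bound $\bP(\tau_m=N)\geq c\,m\,N^{-(1+\alpha)}$, (b) that two big excursions are negligible, (c) that zero big excursions are negligible, and then concentrate the remaining truncated sum. You honestly flag that step (c) is delicate for $\alpha\in(1,2)$ because the truncated log-MGF is quadratic only for tiny exponential parameters, and you gesture at Fuk--Nagaev as a fix. The paper sidesteps all of this by invoking \cite[Th.~1]{cf:AL} (Armend\'ariz--Loulakis), which is stated directly at the level of the \emph{conditional law} of the configuration given $\tau_m=N$: it already gives that $\eta_{1,N}/N\to 1-\rho/\rho_c$ and $\eta_{2,N}/N\to 0$ in $\bQ_{N,m}$-probability, with no $T_\delta$ decomposition and no separate control of the constrained partition function. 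The only remaining work in the paper is upgrading the fixed-$\rho$ limit to the uniform-over-$\{|m/N-\rho|\leq\epsilon\}$ statement, done by a short subsequence argument. Your alternative suggestion of invoking the local theorem $\bP(\tau_m=N)\sim m\,K(N-m/\rho_c)$ would also simplify (a)--(c), but it is still weaker than the conditional-law theorem the paper uses, and you would still need to complete (b) and (c). In short: part (2) matches; for part (1) your from-scratch plan is plausible but leaves the $\alpha\in(1,2)$ Cram\'er/Fuk--Nagaev step underspecified, whereas the paper's citation of the Armend\'ariz--Loulakis conditional big-jump theorem delivers the statement about the two largest excursions in one stroke.
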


\medskip

\begin{proof}
The first part is a result in the big-jump domain and one can directly apply the (much sharper and much more general) result
in \cite[Th.~1]{cf:AL} (see also \cite{cf:DDS08}) that implies that  for $\rho < \rho_c$ and every $\tilde \gep>0$
\begin{equation}
\label{eq:muNm1-1}
\limtwo{N \to \infty:}{m/N \sim \rho} 
 \bQ_{N, m} \left(
 \left \vert \frac {\eta_{1,N}}N-\left(1-\frac{\rho}{\rho_c}\right)
\right \vert +   \frac {\eta_{2,N}}N  \,>\,\tilde \gep
\right)\,=\, 0
\,.
\end{equation}
Therefore if we set
\begin{equation}
\label{eq:muNm1-2}
p_{N, m}\, :=\, \bQ_{N, m} \left(
 \left \vert \frac {\eta_{1,N}}N-\left(1-\frac{\rho}{\rho_c}\right)
\right \vert +   \frac {\eta_{2,N}}N  > \frac{2\gep}{\rho_c} 
\right)
\,,
\end{equation}
then $\lim_{N\to \infty} p_{N, m_N}=0$ if $m_N/N \to  \rho < \rho_c$. This implies that, with $\rho$ and $\gep$ like in the statement 
$\lim_{N\to \infty} \sup_{m:\vert m/N -\rho \vert \le \gep} p_{N, m_N}=0$ because otherwise
there exists $p>0$ and a subsequence $(N_j)_{j\in \bbN}$ such that $\lim_j m_j/N_j \in  [\rho-\gep, \rho+ \gep]$ and 
$p_{N_j, m_j}\ge p$ for every $j$, in contradiction with $\lim_N p_{N, m_N}=0$.

The second part is in the Large Deviation regime and we can perform the standard \emph{tilting procedure} in a direct way 
because of the constraint that there are exactly $m$ contacts and the last one is in $N$. Explicitly:
\begin{equation}
\bQ_{N, m} \left(A\right)\, =\, {\bP\left(\tau^{(q)} \in A, \, \tau^{(q)}_m=N\right)}\Big/{\bP\left( \tau^{(q)}_m=N\right)}\, ,
\end{equation}
with $\tau^{(q)}$ the renewal process with inter-arrival probability distribution given by $K_q(n) \propto K(n)\exp(-qn)$,
and $q=q(m/N)$ is the unique solution of $\mu_q:=\sum_n n K_q(n)= N/m$.
We can now apply the Local Central Limit Theorem for triangular arrays (see e.g. \cite[Th.~1.2]{cf:DmcD}) that gives 
\begin{equation}
\label{eq:LCLT}
\lim_{m \to \infty}
\sup_{M\in \bbN} \left \vert \sqrt{ \gs^2_q m}\,  \bP\left( \tau^{(q)}_m=M\right) - f_\cN \left( \frac{M-m \mu_q}{ \sqrt{ \gs^2_q m}}\right)\right\vert  \, =\, 0\, ,
\end{equation}
with $\mu_q$ and $\gs^2_q$ respectively sum and variance of $ \tau^{(q)}_1$, and $f_\cN(\cdot)$ is the density of a standard Gaussian variable. From \eqref{eq:LCLT} we readily extract that 
with $\rho$ and $\gep$ as required in the statement there exists $c>0$ and $m_0>0$ such that, uniformly in $\vert m/N -\rho \vert  \le \gep$ we have that
$ \bP\left( \tau^{(q)}_m=N\right)\ge 1/(c\sqrt{m})$ for every $m \ge m_0$. 
Therefore 
\begin{equation}
\bQ_{N, m} \left(
 \left \vert  \frac {\eta_{1,N}}N \right \vert > \gep\right) \, \le \, c \sqrt{m} \, \bP \left( \sup_{j=1, \ldots, m} \tau^{(q)}_j-  \tau^{(q)}_{j-1}> \gep N\right)\,\le \, cm^{3/2} \bP \left( \tau^{(q)}_1> \gep N \right)\, ,
\end{equation}
and the right-most term vanishes, with an exponential rate, when $m\to \infty$. 
\end{proof}

Now we observe that, in analogy with \eqref{eq:dec-contact}
we have 
\begin{equation}
 \label{eq:dec-contact-Z}
 Z^\Psi_{N, h}\, =\, 
 \sum_{m=1}^N\exp(hm) \Psi(m, N) \bP(\tau_m=N)
 \, ,
 \end{equation} 
 
 Recall that we use  $\rho_h= \tf'_H(h)$ for $h \neq h_c^H$, cf. \eqref{eq:paths0}, 
 so $\rho_h$ is the unique point that maximizes $\rho \mapsto \rho h + H(\rho)+\tg(\rho)$.
 
 The central estimate for the proof of Theorem~\ref{th:paths0} is:
 
 \medskip
 
 \begin{lemma}
 \label{th:Zeps}
 For every $\gep>0$ and every $h \neq h_c^H$
 \begin{equation}
 Z^\Psi_{N, h} \stackrel{N \to \infty} \sim Z^{\Psi,\gep}_{N, h}\, :=\, 
  \sumtwo{m\in \{1, \ldots,N\}:}{\vert m/N- \rho_h  \vert \le \gep }
  \exp(hm) \Psi(m, N) \bP(\tau_m=N)
 \, .
 \end{equation} 
 \end{lemma}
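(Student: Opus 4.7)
Write the summand as $A_N(m) := \exp(hm) \Psi(m,N) \bP(\tau_m=N)$ and set $F(\rho) := h\rho + H(\rho) + \tg(\rho)$, so that $\tf_H(h) = F(\rho_h)$ is attained at the unique maximizer $\rho_h$ of $F$ on $[0,1]$ (uniqueness for $h \ne h_c^H$ follows from the analysis of Section~\ref{sec:3}: we have $\rho_h \in (0,1)$ if $h > h_c^H$, with $\rho_h<1$ because $\tg'(\rho) \to -\infty$ as $\rho \nearrow 1$, and $\rho_h = 0$ if $h < h_c^H$, in which case $h_c^H$ is finite and hence $H(0)$ is finite). The strategy is to obtain matching Laplace-type asymptotics $\log Z^\Psi_{N,h} = N\, \tf_H(h) + o(N)$ via the restriction $|m/N - \rho_h| \le \gep$ and to show that the complementary sum costs an exponentially smaller factor.

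The first step is to upgrade \eqref{eq:lemG-0} to a statement uniform in $m \in \{1,\ldots,N\}$: since $\tg$ is continuous on the compact interval $[0,1]$ (Proposition~\ref{th:propG}) and \eqref{eq:lemG-0} gives pointwise control in $\rho$, a finite-subcover argument yields that for every $\eta > 0$ and $N$ large,
\[
\sup_{1 \le m \le N} \left| \tg(m/N) - N^{-1} \log \bP(\tau_m=N) \right| \le \eta.
\]
Combining with the uniform bound $Q(m,N) \le c(\eta) \exp(\eta N)$ from \eqref{eq:defPsi3} and continuity of $H$ on $[0,1]$ gives
\[
A_N(m) \le \exp\!\bigl(N(F(m/N) + 3\eta)\bigr)
\qquad\text{uniformly in } m.
\]
Since $\rho_h$ is the unique maximizer of $F$ and $F$ is upper semicontinuous on $[0,1]$, there exists $\delta = \delta(\gep) > 0$ with $\sup_{|\rho-\rho_h| > \gep, \rho \in [0,1]} F(\rho) \le \tf_H(h) - 2\delta$, so summing yields
\[
Z^\Psi_{N,h} - Z^{\Psi,\gep}_{N,h} \le N \exp\!\bigl(N(\tf_H(h) - 2\delta + 3\eta)\bigr).
\]

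For the matching lower bound on $Z^{\Psi,\gep}_{N,h}$, pick $\rho^\ast \in (0,1)$ with $|\rho^\ast - \rho_h| < \gep$ and $F(\rho^\ast) \ge \tf_H(h) - \eta$: set $\rho^\ast = \rho_h$ if $\rho_h \in (0,1)$, and otherwise (when $\rho_h = 0$) choose a small $\rho^\ast \in (0,\gep)$ and use continuity of $F$ at $0$, where $F(0) = H(0) = \tf_H(h)$. Taking $m^\ast := \lfloor \rho^\ast N \rfloor$, the ratio $m^\ast / N$ lies in a compact subinterval of $(0,1)$ for $N$ large, so the lower bound \eqref{eq:defPsi1} applies with $b = \eta$ and together with the uniform estimate for $\log \bP(\tau_m=N)/N$ gives
\[
Z^{\Psi,\gep}_{N,h} \ge A_N(m^\ast) \ge \exp\!\bigl(N(\tf_H(h) - C\eta)\bigr)
\]
for a numerical constant $C$. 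Choosing $\eta$ small enough that $C\eta + 3\eta < \delta$, the ratio of the two bounds gives $(Z^\Psi_{N,h} - Z^{\Psi,\gep}_{N,h})/Z^{\Psi,\gep}_{N,h} \le N \exp(-N\delta) \to 0$, which is the claim.

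The only delicate point is the case $\rho_h = 0$, where \eqref{eq:defPsi1} forbids testing the partition function at $m = O(1)$: instead one tests at some $m^\ast$ with $m^\ast/N \in (0,\gep)$ bounded away from $0$, and must check that $F(\rho^\ast) > \sup_{\rho \ge \gep} F(\rho)$ survives. This follows because $F$ is concave with unique maximum at $0$, hence strictly decreasing on $[0,1]$, so $F(\rho^\ast) > F(\gep) = \sup_{\rho \ge \gep} F(\rho)$, and the gap between the upper bound on the complement and the lower bound on the restricted sum remains positive.
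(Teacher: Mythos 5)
Your proof is correct, and the core mechanism for controlling the complementary sum is the same as the paper's: bound $Q(m,N)$ from above by $c\exp(bN)$ via~\eqref{eq:defPsi3}, use a uniform (in $m$) version of~\eqref{eq:lemG-0} to replace $\bP(\tau_m=N)$ by $\exp(N\tg(m/N)+o(N))$, and then crudely bound the sum by $N$ times the maximal term, which lies a strictly positive amount below $\tf_H(h)$ because $\rho_h$ is the unique maximizer of $\rho\mapsto \rho h+H(\rho)+\tg(\rho)$ when $h\neq h_c^H$. The compactness upgrade of~\eqref{eq:lemG-0} that you spell out is indeed used implicitly in the paper's second inequality of~\eqref{eq:fordisalso}, so it is good that you made it explicit.

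The one place where you depart from the paper is the lower bound on $Z^{\Psi,\gep}_{N,h}$. The paper does not re-derive it: it simply observes that $\log Z^\Psi_{N,h} \sim N\tf_H(h)$, which is already a consequence of Theorem~\ref{th:GF}, and concludes that an exponentially smaller error term is negligible. You instead re-prove a matching lower bound from scratch by testing the restricted sum at a single well-chosen $m^\ast$ with $m^\ast/N$ bounded away from $0$ and $1$ so that~\eqref{eq:defPsi1} applies, and handle the edge case $\rho_h=0$ (i.e.\ $h<h_c^H$) by choosing $\rho^\ast\in(0,\gep)$ and invoking strict decrease of $F$ away from its unique maximizer at $0$. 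This is correct and slightly more self-contained, at the cost of a page of extra argument that the paper avoids by citing the earlier free-energy asymptotics. Either route is acceptable; the paper's is the more economical one given that Theorem~\ref{th:GF} has already been established by the time Lemma~\ref{th:Zeps} is proved.
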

 
 \medskip
 
 \begin{proof}
 We have 
 \begin{equation}
 \label{eq:fordisalso}
 \begin{split}
0 \, \le \,  Z^\Psi_{N, h} - Z^{\Psi,\gep}_{N, h}\, &\le \, 
  c \sumtwo{m\in \{1, \ldots,N\}:}{\vert m/N- \rho_h  \vert > \gep }
  \exp\left( (h+b) m +N H(m/N)\right) \bP(\tau_m=N)
  \\
 & \le \, c^2   \sumtwo{m\in \{1, \ldots,N\}:}{\vert m/N- \rho_h  \vert > \gep }
  \exp\left( (h+2b) m +N H(m/N)+ N\tg\left(m/N\right)\right)
  \\
 & \le \, c^2 N  
  \exp\left(2b N + N \sup_{\rho:\, \vert \rho -\rho_h\vert >\gep}\left(  \rho h  + H(\rho)+ \tg(\rho)\right)\right)
 \, ,
 \end{split}
 \end{equation} 
 where in the first step
 we have used the hypothesis
 \eqref{eq:defPsi3}, so $b>0$ can be chosen arbitrarily small and $c=c(b)$ just needs to be chosen sufficiently large.
 In the second step instead we used \eqref{eq:lemG-0} of Proposition~\ref{th:propG}. 
 Now it suffices to remark that 
 \begin{equation}
 \sup_{\rho:\, \vert \rho -\rho_h\vert >\gep}\left(  \rho h  + H(\rho)+ \tg(\rho)\right) \,=\,  \tf_H (h)-q_\gep\, ,
 \end{equation}
 with $q_\gep>0$ (here we use  $h\neq h_c^H$, but we stress that this is needed only if $H(\cdot)$ is trivial) and therefore,
 by choosing $b= q_\gep/5$ for $N$ sufficiently large, we have
 \begin{equation}
0 \, \le \,  Z^\Psi_{N, h} - Z^{\Psi,\gep}_{N, h}\, \le \, \exp\left(  \tf_H (h)-q_\gep/2\right)\, .
\end{equation}
Since $\log Z^\Psi_{N, h}\sim N  \tf_H (h)$ we are done.
 \end{proof}
 
 \medskip
 
 \noindent
 \emph{Proof of Theorem~\ref{th:paths0}.}
  \eqref{eq:paths0} 
  follows because
 \begin{equation}
 \bP_{N, h}^\Psi \left(A^0_N \right) 
 \,=\, \frac{Z^{\Psi,\gep}_{N, h}} {Z^{\Psi}_{N, h}} \stackrel{N \to \infty} \longrightarrow 1
 \ \text{ with } \ A^0_N\, :=\,  \left \{\tau :\, \left \vert
 \frac {\vert \tau \cap (0, N]\vert}N - \rho_h \right\vert \le \gep\right\}\, , 
 \end{equation}
 by Lemma~\ref{th:Zeps}. For $\rho_c>0$ ($\ga>1$ and we assume $h\neq h_b$) we
 consider  the event
 \begin{equation}
 A_N^1\, :=\, \left\{\tau:\, 
 \left \vert \frac{\eta_{1,N}}N - \left( 1- \frac{\rho_h}{\rho_c}\right)_+ \right \vert +
 \frac{\eta_{2,N}}N 
 \le \frac{2\gep}{\rho_c} \right\}\, .
 \end{equation}
  By Lemma~\ref{th:Zeps} we have
  \begin{equation}
  \label{eq:ke7g}
 \bP_{N, h}^\Psi \left(A^0_N \cap A^1_N \right) \stackrel{N\to \infty} \sim 
 \frac 1{Z^{\Psi, \gep}_{N, h}}
 \sum_{m:\, \vert m/N - \rho_h \vert \le  \gep } 
 \exp(hm) \Psi(m,N) \bP(\tau_m=N) \bQ_{N, m} \left(  A^1_N \right)\, .
 \end{equation}
 Recall now that we assume $h \neq h_b$, so $\rho_h\neq \rho_c$.
 By Proposition~\ref{th:muNm} we have that for $\gep$ sufficiently small
 $\bQ_{N, m} \left(  A^1_N \right)$ tends to one as $N \to \infty$, with the constraint we have on $m$,
 and this readily entails that numerator and denominator in the right-hand side of \eqref{eq:ke7g} are asymptotically equivalent,
 so \eqref{eq:paths0-1} is established. In the case $\rho_c=0$ we change the event $A_N^1$,  but the argument is the same.
 \qed

\section{Path properties in the  disordered case: proof of Theorem~\ref{th:beta>0}(2)}
\label{sec:pathsbeta>0}

We start with an estimate on the disordered pinning model that is in the spirit of the sharper, but also, to a certain extent,  different (see Remark~\ref{rem:verylong}), result in \cite[Theorem~2.5]{cf:GTalea}.  

\medskip

\begin{lemma}
\label{th:verylong}
Consider  the $\Psi \equiv 1$ model for $h$ such that $\tf(\gb,h)>0$. Then for every $\gamma\in (0,\tf(\gb,h)]$ there exists an a.s. finite random variable $N_0(\gamma, \go)$  such that for $N \ge N_0(\gamma, \go)$
\begin{equation}
\label{eq:verylong}
\bP_{N , \go, \gb , h} \left( \eta_{1,N}> \gamma N\right) \, \le \, \exp \left( - \gamma N \tf(\gb, h)/2 \right)\,.
\end{equation}
\end{lemma}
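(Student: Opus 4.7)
The plan is to compare the numerator $Z_{N,\go,\gb,h}(\eta_{1,N}>\gamma N)$ (the restricted partition function on the event $\{\eta_{1,N} > \gamma N\}$) to the full $Z_{N,\go,\gb,h}$, obtaining an exponential ratio. First I would decompose the numerator via the renewal structure at the location of the big jump. Since $\{\eta_{1,N} > \gamma N\}\cap\{N\in\tau\}$ is contained in the union over pairs $(i,j)$ with $0\le i<j\le N$ and $j-i>\gamma N$ of the events $\{i,j\in\tau,\ \tau\cap(i,j)=\eset\}$, splitting at $i$ and $j$ via the renewal (Markov) property yields, with $Z_0:=1$ and $\theta$ the shift on $\go$,
\begin{equation*}
Z_{N,\go,\gb,h}(\eta_{1,N}>\gamma N)\,\le\,\sum_{m>\gamma N}K(m)\sum_{i=0}^{N-m}Z_{i,\go,\gb,h}\cdot e^{\gb\go_{i+m}+h}\cdot Z_{N-i-m,\theta^{i+m}\go,\gb,h}.
\end{equation*}

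Next I would establish a.s.\ bounds on $Z$. The denominator is handled by $\lim_N N^{-1}\log Z_{N,\go,\gb,h}=\tf(\gb,h)$, which gives $Z_{N,\go,\gb,h}\ge\exp(N(\tf(\gb,h)-\epsilon))$ a.s.\ for $N$ large. For the numerator the key ingredient is a shift-uniform upper bound on $Z_{n,\theta^j\go,\gb,h}$. Using Gaussian concentration of $\log Z_{n,\go,\gb,h}$ around $\bbE\log Z_n$ (standard for the pinning model when $\gl(\cdot)$ is everywhere finite; see \cite{cf:GTalea}) together with $\bbE\log Z_n/n\to\tf(\gb,h)$ and stationarity of $\go$, a union bound over $(j,n)$ with $0\le j\le N$, $\delta N\le n\le N$ combined with Borel-Cantelli shows that a.s.\ for $N$ large,
\begin{equation*}
\max_{0\le j\le N,\,\delta N\le n\le N}\,\frac{1}{n}\log Z_{n,\theta^j\go,\gb,h}\,\le\,\tf(\gb,h)+\epsilon.
\end{equation*}
For the complementary range $n\le\delta N$, the crude deterministic bound $Z_{n,\theta^j\go,\gb,h}\le 2^n\exp(n(|h|+\gb\go_N^*))$ with $\go_N^*:=\max_{k\le N}|\go_k|=O(\log N)$ a.s.\ (by subexponential tails of $\go_1$ since $\gl(s)<\infty$ for all $s$) gives $Z_{n,\theta^j\go,\gb,h}\le e^{o(N)}$ uniformly in such $n$ and $j$.

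To conclude, note that $i+(N-i-m)=N-m<(1-\gamma)N$, so in every regime the product $Z_{i,\go,\gb,h}\cdot Z_{N-i-m,\theta^{i+m}\go,\gb,h}$ is at most $\exp((1-\gamma)N(\tf(\gb,h)+\epsilon)+o(N))$; the remaining factor $e^{\gb\go_{i+m}+h}$ is $e^{o(N)}$, the sum $\sum_{m>\gamma N}K(m)\le 1$, and the sum over $i$ costs a factor $N$. Dividing by the lower bound on $Z_{N,\go,\gb,h}$ and choosing $\epsilon$ small compared to $\gamma\tf(\gb,h)$ (e.g.\ $\epsilon=\gamma\tf(\gb,h)/10$) absorbs all $o(N)$ and $\epsilon$-terms into half of the target exponent $-\gamma N\tf(\gb,h)$, yielding \eqref{eq:verylong} for $N\ge N_0(\gamma,\go)$. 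The main obstacle is the uniform-in-shift upper bound on $Z_{n,\theta^j\go,\gb,h}$: a.s.\ free-energy convergence applied separately at each shift would give a random threshold $n_0(\theta^j\go)$ whose supremum over $j\le N$ need not be finite, so concentration (not just a.s.\ convergence) of $\log Z_n$ together with a careful double union bound is essential.
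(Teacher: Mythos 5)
Your decomposition by the position of the big excursion is the same as in the paper, and you have correctly identified the real technical obstacle: controlling the right-hand partition function $Z_{N-i-m,\theta^{i+m}\go}$ uniformly over the shift. But the paper does not face this obstacle head on — it \emph{cancels} it. After writing the numerator as you do, the paper lower-bounds the denominator by restricting it to the event $\{n_2\in\tau\}$, i.e.\ $Z_{N,\go}\ge Z_{n_2,\go}\,Z_{N-n_2,\theta^{n_2}\go}$; the shifted factor then divides out and the ratio reduces to $Z_{n_1,\go}/Z_{n_2,\go}$, both anchored at the origin. Since $n_2-n_1>\gamma N$ and $n_2\ge\gamma N$, these are controlled by ordinary almost sure convergence of $N^{-1}\log Z_{N,\go}$ to $\tf(\gb,h)$ alone, together with the trivial a.s.\ bound $e^{\gb\go_{n_2}}=e^{o(N)}$; no concentration, no uniform-in-shift estimate.

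Your alternative of attacking $Z_{n,\theta^j\go}$ directly via concentration plus a double union bound can be made to work, but two steps are not correct as written. First, ``Gaussian concentration of $\log Z_n$ (standard \dots when $\gl(\cdot)$ is everywhere finite)'' is overclaimed: the paper explicitly remarks (item~(P2)) that the concentration hypothesis used in \cite{cf:GTalea} is \emph{not} part of the standing assumptions here, and $\gl(s)<\infty$ for all $s$ does not make $\go_1$ subgaussian. What one does get — via the Doob martingale of $\log Z_n$ and the fact that $\log Z_n$ is $\gb$-Lipschitz in each $\go_k$, so that the martingale increments conditionally have uniformly controlled exponential moments — is an Azuma/Chernoff-type exponential bound $\bbP(|\log Z_n-\bbE\log Z_n|>\epsilon n)\le e^{-c(\epsilon)n}$; you would need to argue this rather than invoke a Gaussian bound. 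Second, for a \emph{fixed} $\delta$ the crude bound $Z_n\le\exp\bigl(n(|h|+\gb\go_N^*)\bigr)$ with $n\le\delta N$ and $\go_N^*=O(\log N)$ gives $\exp\bigl(O(\delta N\log N)\bigr)$, which is not $e^{o(N)}$. You must let $\delta=\delta(N)\to 0$ (e.g.\ $\delta=N^{-1/2}$) so that the small-$n$ regime is negligible, while checking that $N^2 e^{-c(\epsilon)\delta N}$ remains summable so that the Borel--Cantelli union bound over the $O(N^2)$ pairs $(j,n)$ with $n\ge\delta N$ still goes through. Once these repairs are made the route closes, but it is considerably heavier than the paper's one-line cancellation, which is the idea worth taking away.
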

\medskip

\begin{proof}
Set $\bP_{N , \go} =\bP_{N , \go, \gb , h}$.
We also choose $\gamma \in (0, \tf(\gb, h)/2]$.
The key estimate is 
\begin{multline}
\bP_{N, \go}\left(\eta_{1,N}> \gamma N\right)\, =\, \sumtwo{ n_1,n_2\in \{1, \ldots, N\}:}{n_2-n_1 > \gamma N}
\frac{Z_{n_1, \go}  K(n_2-n_1) e^{\gb \go_{n_2}}Z_{n_2, \theta^{n_2}\go}}{Z_{N, \go} }
\\
\le\, \sumtwo{ n_1,n_2\in \{1, \ldots, N\}:}{n_2-n_1 > \gamma N}
\frac{Z_{n_1, \go}  K(n_2-n_1) e^{\gb \go_{n_2}}Z_{n_2, \theta^{n_2}\go}}
{Z_{n_2, \go}  
Z_{n_2, \theta^{n_2}\go}}
\,\le\, \sumtwo{ n_1,n_2\in \{1, \ldots, N\}:}{n_2-n_1 > \gamma N}
\frac{ Z_{n_1, \go} }  
{Z_{n_2, \go}  } e^{\gb \go_{n_2}}
.
\end{multline}
Now we observe that  $Z_{n_1, \go}\le C(\go) \exp(n_1 \tf(\gb,h)+\gamma/6)$ for every $n_1\in \bbN$
and $ Z_{n_2, \go} \ge \exp(n_2 \tf(\gb,h)-\gamma/6)$ for every $n_2 \ge \gamma N$ and 
for $N$ larger than a  random threshold, possibly  dependent  also on $\gamma$.
Therefore 
\begin{equation}
\label{eq:forltgg6}
\bP_{N, \go}\left(\eta_{1,N}> \gamma N\right)\, \le \, C(\go) N^2 \exp( - (\tf(\gb,h) -\gamma/3) \gamma N)\, \le \, 
\exp( - \gamma N\tf(\gb,h)/2)\, ,
\end{equation}
always for $N$ larger than a random threshold.
\end{proof}

\medskip

\begin{rem}
\label{rem:verylong}
Lemma~\ref{th:verylong}
is a rough version of the sharp statement (in probability) for the size of the largest excursion 
in the localized phase 
\cite[Th.~2.5]{cf:GTalea}. In \cite[Th.~2.5]{cf:GTalea} the size of the largest excursion in the localized phase 
is shown to be equal, in $\bbP(\dd \go)$ probability and to leading order, to $c(\gb,h) \log N$, for a suitable choice of $c(\gb,h)>0$. 
Lemma~\ref{th:verylong} however, with respect to  \cite[Th.~2.5]{cf:GTalea}, gives a
 $\bbP(\dd \go)$-almost sure estimate and, above all, an exponential decay rate in the quenched probability 
proportional to $N$.
\end{rem}

\medskip

Let us consider now the constrained case. 
Recalling \eqref{eq:G} we set
\begin{equation}
\label{eq:cZEG}
\cZ_{N,m, \go, \gb}\, :=\,  
  \bE \left[ \exp\left( \gb  \sum_{j=1}^N  \go_j \gd_j \right) \ind_{ \tau_m=N}\right]\,,
\end{equation}

\medskip

\begin{proposition}
\label{th:nobigjump}
Choose any $\rho \in (0,1)$. Then for every $\gamma>0$ there exists $\gep_0$ such that for $\gep \in (0, \gep_0)$
\begin{equation}
\label{eq:nobigjump}
\lim_{N \to \infty}
\sup_{m:\, \vert m/N -\rho \vert \le \gep}
\bQ_{N,m, \go, \gb} \left(\eta_{1, N} > \gamma N \right) \, =\, 0\, .
\end{equation}
\end{proposition}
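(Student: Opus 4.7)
The plan is to tilt the constrained measure by $e^{hm}$ for a well-chosen $h$ and reduce the question to the unconstrained disordered pinning model, where Lemma~\ref{th:verylong} applies. Fix $\rho \in (0,1)$ and $\gamma>0$. By \ref{P:smoothing1}, the remark $\lim_{h\to\infty}\partial_h\tf(\gb,h)=1$ in \ref{P:smooth}, and the strict convexity of $\tf(\gb,\cdot)$ from Theorem~\ref{th:unifconvex}, the map $\partial_h\tf(\gb,\cdot)$ is a bijection from $(h_c(\gb),\infty)$ onto $(0,1)$; let $h=h_\rho\in(h_c(\gb),\infty)$ be the unique solution of $\partial_h\tf(\gb,h_\rho)=\rho$. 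Then $\tf(\gb,h)>0$ and the Legendre duality \eqref{eq:FG} gives $\tg(\gb,\rho)=\tf(\gb,h)-\rho h$ together with $\partial_\rho\tg(\gb,\rho)=-h$.

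Since $\ind_{\tau_m=N}\le\gd_N$ and $\sum_{j=1}^N\gd_j=m$ on $\{\tau_m=N\}$, for every realization of $\go$
\begin{equation*}
e^{hm}\,\cZ_{N,m,\go,\gb}\,\bQ_{N,m,\go,\gb}(\eta_{1,N}>\gamma N)\,\le\,Z_{N,\go,\gb,h}\,\bP_{N,\go,\gb,h}(\eta_{1,N}>\gamma N).
\end{equation*}
Taking $\tfrac{1}{N}\log$, using $\tfrac{1}{N}\log Z_{N,\go,\gb,h}\to\tf(\gb,h)$ a.s.\ (Theorem~\ref{th:GF} with $H\equiv 0$) and Lemma~\ref{th:verylong} applied with $\gamma':=\gamma\wedge\tf(\gb,h)>0$ (so that $\bP_{N,\go,\gb,h}(\eta_{1,N}>\gamma N)\le e^{-\gamma'N\tf(\gb,h)/2}$ a.s.\ for $N$ large), one obtains, uniformly in $m$ with $|m/N-\rho|\le\gep$,
\begin{equation*}
\tfrac{1}{N}\log\bQ_{N,m,\go,\gb}(\eta_{1,N}>\gamma N)\,\le\,\tf(\gb,h)-h\tfrac{m}{N}-\tfrac{1}{N}\log\cZ_{N,m,\go,\gb}-\tfrac{\gamma'\tf(\gb,h)}{2}+o(1).
\end{equation*}

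To close the argument I need the local-uniform lower bound $\tfrac{1}{N}\log\cZ_{N,m,\go,\gb}\ge\tg(\gb,m/N)-\delta$ a.s.\ for $N$ large and $|m/N-\rho|\le\gep_1(\delta)$. Granted this, the right-hand side of the previous display is at most
\begin{equation*}
\bigl[\tg(\gb,\rho)-\tg(\gb,m/N)-\partial_\rho\tg(\gb,\rho)(\rho-m/N)\bigr]+\delta-\tfrac{\gamma'\tf(\gb,h)}{2}+o(1),
\end{equation*}
and by the $C^\infty$ smoothness of $\tg(\gb,\cdot)$ from Proposition~\ref{th:Gsmooth} the bracket equals $-\tfrac{1}{2}\partial_\rho^2\tg(\gb,\tilde\rho)(\rho-m/N)^2$ and is thus $O(\gep^2)$ uniformly for $|m/N-\rho|\le\gep$. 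Choosing first $\delta$ and then $\gep_0\le\gep_1$ so that $\delta+O(\gep_0^2)<\gamma'\tf(\gb,h)/4$ yields, a.s.\ for $N$ large,
\begin{equation*}
\sup_{m:\,|m/N-\rho|\le\gep_0}\bQ_{N,m,\go,\gb}(\eta_{1,N}>\gamma N)\,\le\,\exp\!\bigl(-\gamma'N\tf(\gb,h)/4\bigr)\,\longrightarrow\,0.
\end{equation*}

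The main obstacle is exactly the local-uniform lower bound above: Theorem~\ref{th:GF} states the limit only along sequences $m/N\to\rho$, and one needs control uniform in $m$ on an entire window. I expect to upgrade it by combining the a.s.\ pointwise limit on a countable dense subset of $(0,1)$ with concavity of the limit $\tg(\gb,\cdot)$ and a two-sided interpolation based on the elementary splitting $\cZ_{N_1+N_2,m_1+m_2,\go,\gb}\ge\cZ_{N_1,m_1,\go,\gb}\,\cZ_{N_2,m_2,\theta^{N_1}\go,\gb}$, whose superadditivity transfers pointwise a.s.\ convergence into the needed local-uniform version.
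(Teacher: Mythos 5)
Your argument is correct and follows essentially the same route as the paper's: tilt by $e^{hm}$ with $h=h_\rho$ chosen so that $\partial_h\tf(\gb,h_\rho)=\rho$, dominate $\bQ_{N,m,\go,\gb}(A_N)$ by $\bP_{N,\go,\gb,h}(A_N)$ times a ratio of partition functions, control that ratio by a local-uniform estimate on $\frac1N\log\cZ_{N,m,\go,\gb}$, and finish with Lemma~\ref{th:verylong}. The ``local-uniform lower bound'' you flag as the missing ingredient is precisely the content of Proposition~\ref{th:Gsurg} (the estimate $D_{N,\gep,\rho}(\go)\le c_\rho\gep$ a.s.\ for $N$ large), which the paper establishes in Section~\ref{sec:var} by exactly the superadditivity/Kingman-plus-surgery mechanism you sketch; so there is no gap. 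The only cosmetic difference is that you Taylor-expand $\tg(\gb,\cdot)$ around $\rho$ to get an $O(\gep^2)+\delta$ error, while the paper bounds the ratio directly to get an $O(\gep)$ error, i.e.\ $\exp(c_\rho\gep N)$, and then takes $\gep<\gamma\tf(\gb,h)/(2c_\rho)$; both close the argument.
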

\medskip

\begin{proof}
It suffices to prove the result for $\gamma$ small.
Set $A_N= \{\eta_{1,N}> \gamma N\}$. We have
\begin{equation}
\bP_{N,\go, \gb, h} (A_N)\, =\, 
\frac 1{Z_{N,\go, \gb, h} } \sum_{m=1}^N e^{hm} \cZ_{N, m, \go, \gb}\bQ_{N,m, \go, \gb} (A_N)\, ,
\end{equation}
Therefore for every $\rho\in (0,1)$, every $\gep>0$ and every $h$ we have 
\begin{equation}
\label{eq:ggint6-1}
\sup_{m:\, \vert m/N -\rho \vert \le \gep}
\bQ_{N,m, \go, \gb} (A_N) \, \le \, \frac{Z_{N,\go, \gb, h}}{\inf_{m:\, \vert m/N -\rho \vert \le \gep} e^{hm} \cZ_{N, m, \go, \gb}}
\bP_{N,\go, \gb, h} (A_N)\, .
\end{equation}
Choose $h$ such that $\partial_h\tf(\gb, h)=\rho$, so that $\bbP(\dd \go)$-a.s.
\begin{equation}
\lim_{N\to \infty} \frac 1N \log Z_{N,\go, \gb, h}\, =\, \tg(\gb, h)+h\rho\, .
\end{equation}
Therefore we can use  Proposition~\ref{th:Gsurg} to bound  the ratio of partition functions in \eqref{eq:ggint6-1} thus obtaining
that there exists $c_\rho>0$ such that for $N$ larger than a random threshold
\begin{equation}
\sup_{m:\, \vert m/N -\rho \vert \le \gep}
\bQ_{N,m, \go, \gb} (A_N) \, \le \, \exp\left(  c_\rho \gep N\right)
\bP_{N,\go, \gb, h} (A_N)\, .
\end{equation}
By combining this last estimate with \eqref{eq:verylong} we see that for $\gep < \gamma\tf(\gb,h)/(2c_\rho)$ --
and satisfying also the other smallness requirements in Proposition~\ref{th:Gsurg} --
we have that a.s.
\begin{equation}
\lim_{N \to \infty}
\sup_{m:\, \vert m/N -\rho \vert \le \gep}
\bQ_{N,m, \go, \gb} (A_N) \, =\, 0\, .
\end{equation}
\end{proof}

\medskip

\noindent 
\emph{Proof of Theorem~\ref{th:beta>0}(2).}
We proceed like in the $\gb=0$ case, see
\eqref{eq:fordisalso} of Lemma~\ref{th:Zeps}, replacing $\bP(\tau_m=N)$ with  $ \cZ_{N, m, \go, \gb}$
and by using Proposition~\ref{th:Gsurg} instead of Proposition~\ref{th:propG}. We use 
$\rho_h$ the optimizer of the first variational problem 
in \eqref{eq:FHG} and we exploit the hypothesis
 \eqref{eq:defPsi3}.
This way we see that there exists $N_0(\go)\stackrel{\text{a.s}}< \infty$ such that for $N \ge N_0(\go)$ 
\begin{equation}
 \label{eq:fordisalso2}
 \begin{split}
0 \, \le \,  Z^\Psi_{N,\go, \gb, h} - Z^{\Psi,\gep}_{N,\go, \gb, h}\,
&\le \, 
  c \sumtwo{m\in \{1, \ldots,N\}:}{\vert m/N- \rho_h  \vert > \gep }
  \exp\left( (h+b) m +N H(m/N)\right) \bP(\tau_m=N)  \cZ_{N, m, \go, \gb}
  \\
  & \le \, c^2 N  
  \exp\left(2b N + N \sup_{\rho:\, \vert \rho -\rho_h\vert >\gep}\left(  \rho h  + H(\rho)+ \tg(\gb, \rho)\right)\right)
 \, ,
 \end{split}
 \end{equation} 
 where $Z^{\Psi,\gep}_{N,\go, \gb, h}$ is the direct generalization of the analogous quantity in the $\gb=0$ case, see Lemma~\ref{th:Zeps}.
 Since $b$ can be chosen arbitrarily small and by (strict) concavity of $H(\cdot)+ \tg(\gb, \cdot)$
 (note that, since $\gb>0$, $\tg(\gb, \cdot)$ is strictly concave)
  we obtain also in this case that for every $\gep>0$ there exists $p_\gep>0$ such that
\begin{equation}
 \label{eq:fordisalso3}
0 \, \le \,  Z^\Psi_{N,\go, \gb, h} - Z^{\Psi,\gep}_{N,\go, \gb, h}\,
   \le\, \exp\left( N(\tf_H(\gb, h) -p_\gep)\right)\, ,
\end{equation}
for $N$ larger than an a.s . finite random quantity. So the fact that the ratio of 
$Z^{\Psi,\gep}_{N,\go, \gb, h}$ and $Z^\Psi_{N,\go, \gb, h}$ tends a.s. to one takes care of 
\eqref{eq:beta>0contact} because $\rho_h = \partial_h\tf_H(\gb, h)$ for every $h$. 

For \eqref{eq:beta>0} we use the $\gb>0$ version of \eqref{eq:dec-contact}, that is 
\begin{equation}
 \label{eq:dec-contact-beta}
 \bP_{N,\go, \gb, h}^\Psi (\cdot)\, =\, 
 \frac{ \sum_{m=1}^N\exp(hm) \Psi(m, N) \cZ_{N,m, \go, \gb}\bQ_{N, m} (\cdot)}
 {\sum_{m=1}^N\exp(hm) \Psi(m, N) \cZ_{N,m, \go, \gb}} 
 \, .
 \end{equation} 
 Now we fix any $h>h_c^H(\gb)$, so $\rho_h>0$, and we observe that, by \eqref{eq:beta>0contact}, we have that for every $\gep>0$ and a.s.
 \begin{equation}
 \label{eq:dec-contact-b4}
 \bP_{N,\go, \gb, h}^\Psi \left( \eta_{1,N}> \gamma N \right) \stackrel{N \to \infty} \sim 
 \bP_{N,\go, \gb, h}^\Psi \left(\left \vert \sum_{n=1}^N \gd_j - \rho_h N\right \vert \le \gep N , \,   \eta_{1,N}> \gamma N \right)\, .
 \end{equation} 
 We can now insert this event into \eqref{eq:dec-contact-beta} and, by using 
 Proposition~\ref{th:nobigjump}, we readily see that for $\gep$ small the right-hand side of \eqref{eq:dec-contact-b4} vanishes a.s. when $N\to \infty$. 
 Since $\gamma>0$ can be chosen arbitrarily small, we are done. 
 \qed

\section{On free energies and  variational formulas: proof of Theorem~\ref{th:GF}}
\label{sec:var}
\subsection\protect{{\bf On} $\tg (\gb, \rho)${\bf .}}
Recall that the definition \eqref{eq:cZEG} of $ \cZ_{N,m, \go, \gb}$.
For $\rho\in (0, 1]$ we set 
\begin{equation}
\tilde \cZ_{N, \go} (\rho)\,:=\, \min_{m \in\{\lfloor \rho N\rfloor, \lceil \rho N\rceil\}} \cZ_{N,m, \go, \gb}\, ,
\end{equation}
where the set over which the minimum is taken reduces to a singleton if $\rho N$ is integer.
Note that $\cZ_{N,m, \go, \gb}$ is zero if $m=0$, so $\tilde \cZ_{N, \go} (\rho)=0$ whenever $\rho  < 1/N$. 

\medskip

\begin{lemma}
\label{th:superadd}
$(\log \tilde \cZ_{N, \go} (\rho))_{N =1,2, \ldots}$ is super-additive, namely: for every $N_1, N_2 \in \bbN$ we have
\begin{equation}
\label{eq:superadd}
\log \tilde \cZ_{N_1+N_2, \go} (\rho)\, \ge \, \log \tilde \cZ_{N_1, \go} (\rho) +\log \tilde \cZ_{N_2, \theta^{N_1}\go} (\rho)\, ,
\end{equation}
where $(\theta \go)_n=\go_{n+1}$.
\end{lemma}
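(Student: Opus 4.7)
The plan is to exploit the renewal (Markov) structure of $\tau$ to obtain a factorisation of $\cZ_{N_1+N_2,m,\go,\gb}$ once we condition on a particular splitting point $\tau_{m_1}=N_1$, and then to absorb the mild combinatorial mismatch that $\lfloor\rho(N_1+N_2)\rfloor$ need not equal $\lfloor\rho N_1\rfloor+\lfloor\rho N_2\rfloor$. The choice in the definition of $\tilde\cZ$ to minimise over the pair $\{\lfloor\rho N\rfloor,\lceil\rho N\rceil\}$ rather than over a single rounding is tailored precisely to accommodate this mismatch.

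First I would establish the unconstrained super-multiplicativity
\begin{equation*}
\cZ_{N_1+N_2,\,m_1+m_2,\,\go,\,\gb}\;\ge\;\cZ_{N_1,m_1,\go,\gb}\;\cZ_{N_2,m_2,\theta^{N_1}\go,\gb}
\end{equation*}
for every $N_1,N_2\in\bbN$ and every nonnegative integers $m_1,m_2$. To see this, restrict the expectation defining the left-hand side to the sub-event $\{\tau_{m_1}=N_1\}\cap\{\tau_{m_1+m_2}=N_1+N_2\}$; on this event the exponent splits as $\sum_{j=1}^{N_1}\go_j\gd_j+\sum_{j=N_1+1}^{N_1+N_2}\go_j\gd_j$, and the independence of the increments $(\tau_{m_1+k}-\tau_{m_1})_{k\ge 1}$ from $(\tau_1,\ldots,\tau_{m_1})$ combined with the identification $\go_{N_1+j}=(\theta^{N_1}\go)_j$ factorises the expectation into the product above.

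It then remains to verify the following combinatorial statement: for any $m\in\{\lfloor\rho(N_1+N_2)\rfloor,\lceil\rho(N_1+N_2)\rceil\}$ one can pick $m_i\in\{\lfloor\rho N_i\rfloor,\lceil\rho N_i\rceil\}$ with $m_1+m_2=m$. Writing $\rho N_i=a_i+f_i$ with $a_i=\lfloor\rho N_i\rfloor$ and $f_i\in[0,1)$, the attainable sums $m_1+m_2$ form the set $a_1+a_2+S$ with $S\subseteq\{0,\ind_{f_1>0},\ind_{f_2>0},\ind_{f_1>0}+\ind_{f_2>0}\}$, and a short case analysis on the regime of $f_1+f_2$ (below, equal to, or above $1$) shows that both $\lfloor f_1+f_2\rfloor$ and $\lceil f_1+f_2\rceil$ always lie in $S$. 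Choosing such $m_1,m_2$ for the minimiser $m$ of the left-hand side and invoking the factorisation of the previous paragraph yields
\begin{equation*}
\tilde\cZ_{N_1+N_2,\go}(\rho)\;\ge\;\cZ_{N_1,m_1,\go,\gb}\,\cZ_{N_2,m_2,\theta^{N_1}\go,\gb}\;\ge\;\tilde\cZ_{N_1,\go}(\rho)\,\tilde\cZ_{N_2,\theta^{N_1}\go}(\rho),
\end{equation*}
since each $m_i$ lies in the set over which $\tilde\cZ_{N_i}$ is minimised; taking logarithms (with the convention $\log 0=-\infty$, in which case \eqref{eq:superadd} is vacuous) gives the claim. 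The main, and indeed only, subtlety is this combinatorial step, which is routine but essential: without the double rounding in the definition of $\tilde\cZ$ the decomposition $m_1+m_2=m$ would not always be available.
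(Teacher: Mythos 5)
Your proof is correct and follows essentially the same route as the paper's: the same super-multiplicativity of $\cZ_{N,m,\go,\gb}$ obtained by restricting to the splitting event $\{\tau_{m_1}=N_1\}$ and using independence of renewal increments, combined with the same combinatorial observation that both $\lfloor\rho(N_1+N_2)\rfloor$ and $\lceil\rho(N_1+N_2)\rceil$ can be realized as a sum of floors and/or ceilings of $\rho N_1$ and $\rho N_2$ (the paper phrases this via $\lfloor b\rfloor+\lfloor c\rfloor\le\lfloor b+c\rfloor\le\lceil b+c\rceil\le\lceil b\rceil+\lceil c\rceil$, you via the fractional parts $f_i$, but it is the same case analysis). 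Your explicit handling of the degenerate case via the convention $\log 0=-\infty$ is a small and welcome clarification, not a divergence.
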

\medskip

\begin{proof}
First of all remark that for every $b,c \ge 0$ 
\begin{equation}
\lfloor b \rfloor+ \lfloor c \rfloor\, \le \, \lfloor b+c \rfloor \, \le \, \lceil b+c \rceil \, \le \,  \lceil b\rceil+\lceil c \rceil\, ,
\end{equation}
which implies that both the lower and upper integer part of $c+b$ coincide with the sum of a suitably chosen combination 
of upper and/or lower integer parts of $b$ and $c$. For example if neither $b$ nor $c$ is an integer, then  either
 $\lceil b+c \rceil =  \lceil b\rceil+\lceil c \rceil$ or 
 $\lceil b+c \rceil =  \lceil b\rceil+\lfloor c \rfloor=\lfloor b \rfloor+ \lceil c\rceil$. On the other hand, if $b$ is an integer and $c$ is not 
 $\lceil b+c \rceil =    b+\lceil c \rceil=\lfloor b \rfloor+\lceil c \rceil=\lceil b\rceil+\lceil c \rceil=$. The case in which $b$ and $c$ are both integers is of course trivial.  
 
 Then remark also that 
 \begin{equation}
 \log \cZ_{N_1+N_2,m_1+m_2, \go, \gb}\, \ge \,  \log \cZ_{N_1,m_1, \go, \gb} +
 \log \cZ_{N_2,m_2, \theta^{N_1}\go, \gb}\, ,
 \end{equation}
 which follows by restricting  the expectation in the definition of   $\cZ_{N_1+N_2,m_1+m_2, \go, \gb}$
 to the event $\tau_{m_1}=N_1$ and by using the independence of the increments of $\tau$.
 
 Since $\lceil \rho (N_1+N_2)\rceil$ is one among  $\lceil \rho N_1 \rceil+\lceil \rho N_2\rceil$, 
 $\lceil \rho N_1 \rceil+\lfloor \rho N_2\rfloor$  and $\lfloor \rho N_1\rfloor+ \lceil \rho N_2 \rceil$
 and since exactly the same holds true is we switch upper integer parts with lower integer parts,
  we readily see that \eqref{eq:superadd} holds and the proof is complete.
\end{proof}

\medskip

\begin{proposition}
\label{th:Gsurg}
For every $\rho\in [0,1]$ and every $\gb\ge 0$ the limit 
\begin{equation}
\label{eq:Gsurg1}
\limtwo{N \to \infty}{m/N\to \rho}
\frac 1N \bbE \log \cZ_{N,m, \go, \gb} \, =:\, \tg( \gb, \rho)\, ,
\end{equation}
exists and the convergence holds $\bbP(\dd \go)$-a.s, with the same (deterministic) limit,  
without averaging with respect to $\bbP$. Moreover, if we set
\begin{equation}
\label{eq:Gsurg2}
D_{N, \gep, \rho}(\go)\,:=\,
 \suptwo{m\in \{1, \ldots, N\}:}{ \vert m/N- \rho\vert \le \gep} 
\left\vert \frac 1N  \log \cZ_{N,m, \go, \gb} - \tg( \gb, \rho) \right \vert \,  ,
\end{equation}   
for $\rho \in (0,1)$ and $\gep \in (0, \min(\rho/2, (1-\rho)/2)$ we can exhibit a constant $c_{\rho}>0$, with $\sup_{\rho \in[b, 1-b]}<\infty$ for every $b \in (0,1/2)$, such that
$D_{N, \gep, \rho} \le c_{\rho}\gep $ for every $\go$ and $N \ge N_0(\gep, \rho,\go)$, with $\bbP(N_0(\gep, \rho,\go)< \infty)=1$.

For $\rho=0$ we have $\tg(\gb,0)=0$ and there exists $c_0>0$ such that
$D_{N, \gep, \rho}\le c_0 \gep$  for every $\gep \in (0,1/2)$ and for $N$ larger than an a.s. finite  random variable, like above. For $\rho=1$ instead  $\tg(\gb,1)=\log K(1)$ and  there exists $c_1>0$ such that
$D_{N, \gep, 1} \le c_{1}\gep + \tg(1-2\gep) -\tg(1)$ for $N$ larger than a suitable a.s. finite random variable.
\end{proposition}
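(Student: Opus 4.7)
The strategy is to apply Kingman's superadditive ergodic theorem to $\log\tilde\cZ_{N,\go}(\rho)$ for fixed $\rho\in(0,1)$, extract concavity of $\tg(\gb,\cdot)$ from the superadditive structure, and upgrade pointwise convergence to the uniform estimate via Lipschitz continuity of the concave limit combined with a surgery estimate for the pre-limit. For the first step, Lemma~\ref{th:superadd} and the stationary ergodic IID structure of $\go$ under the shift $\theta$ allow invoking Kingman: integrability reduces to $\sup_N(1/N)\bbE\log\tilde\cZ_{N,\go}(\rho)<\infty$, which follows from the pathwise bound $\cZ_{N,m,\go,\gb}\le\exp(\gb\sum_{j=1}^N(\go_j)_+)$ together with $\bbE[(\go_1)_+]<\infty$ (a consequence of $\gl(\cdot)<\infty$). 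This yields $\tg(\gb,\rho):=\sup_{N\ge 1}(1/N)\bbE\log\tilde\cZ_{N,\go}(\rho)\in\bbR$ as an a.s.\ and $L^1$ deterministic limit. Concavity then comes from taking rationals $\rho_1,\rho_2,\lambda\in(0,1)\cap\bbQ$ and $N$ through multiples of a common denominator so that $N_1:=\lambda N$, $N_2:=(1-\lambda)N$, $m_i:=\rho_i N_i$ are integers (hence $m_1+m_2=\rho N$ for $\rho:=\lambda\rho_1+(1-\lambda)\rho_2$); restricting to $\tau_{m_1}=N_1$ yields $\cZ_{N,\rho N,\go,\gb}\ge\cZ_{N_1,m_1,\go,\gb}\cdot\cZ_{N_2,m_2,\theta^{N_1}\go,\gb}$, and the pointwise convergence gives $\tg(\gb,\rho)\ge\lambda\tg(\gb,\rho_1)+(1-\lambda)\tg(\gb,\rho_2)$, extended to all of $(0,1)$ by density.

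\emph{Uniform estimate on compacts of $(0,1)$.} Concavity forces $\tg(\gb,\cdot)$ to be Lipschitz on every $[b,1-b]$ with some constant $L_b$, giving $|\tg(\gb,m/N)-\tg(\gb,\rho)|\le L_\rho\gep$ when $|m/N-\rho|\le\gep$. Covering $[\rho-\gep,\rho+\gep]$ by a finite $\gep$-net of rationals $\rho^{(0)},\ldots,\rho^{(K)}$, a finite union of the pointwise a.s.\ events above furnishes a common threshold $N_0(\gep,\rho,\go)$ above which the convergence holds up to $\gep$ at every net point simultaneously. For an arbitrary $m$ with $|m/N-\rho|\le\gep$, a surgery estimate (below) compares $\log\cZ_{N,m,\go,\gb}$ to $\log\tilde\cZ_{N,\go}(\rho^{(k)})$ at the nearest net point, and combining with the Lipschitz bound delivers $D_{N,\gep,\rho}\le c_\rho\gep$. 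The surgery estimate itself bounds $|\log\cZ_{N,m+1,\go,\gb}-\log\cZ_{N,m,\go,\gb}|$ by an $O(1)$ constant a.s., uniformly for $m/N\in[b,1-b]$: the move is to insert a renewal at position $\tau_{i-1}+1$ inside a gap of length $\eta_i\ge 2$, where the weight ratio $K(1)K(\eta_i-1)/K(\eta_i)$ stays bounded below by a positive constant uniformly in $\eta_i\ge 2$ (using $K(n)\sim C_K/n^{1+\ga}$ and $K(n)>0$). Averaging over valid insertion indices, with a first-moment check that the number of gaps of length $\ge 2$ is $\Theta(N)$ under $\bQ_{N,m}$ for such $m$, produces the bound; deletion gives the reverse inequality.

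\emph{Boundary cases.} At $\rho=1$ the only admissible trajectory is $\tau_i=i$, so $\cZ_{N,N,\go,\gb}=K(1)^N\exp(\gb\sum_{j=1}^N\go_j)$ and the law of large numbers gives $(1/N)\log\cZ_{N,N,\go,\gb}\to\log K(1)=:\tg(\gb,1)$ a.s.; for $m/N\in[1-\gep,1]$ concavity of $\tg(\gb,\cdot)$ on $[1-2\gep,1]$ together with the surgery estimate produces $D_{N,\gep,1}\le c_1\gep+\tg(\gb,1-2\gep)-\tg(\gb,1)$. At $\rho=0$, $\cZ_{N,1,\go,\gb}=K(N)e^{\gb\go_N}$ combined with \eqref{eq:K} gives $(1/N)\log\cZ_{N,1,\go,\gb}\to 0=:\tg(\gb,0)$, and the surgery estimate extends this to $1\le m\le\gep N$.

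\emph{Main obstacle.} The pivotal technical step is the surgery estimate: a naive insertion anywhere in a gap costs a factor of order $N^{-(1+\ga)}$ in the $K$-ratios, which accumulates to $O(\gep\log N)$ per unit $N$ over $\gep N$ iterations and fails to give the wanted $O(\gep)$ bound. Inserting specifically at $\tau_{i-1}+1$ in a gap of length $\ge 2$ is what keeps the ratio bounded below by a universal constant; the remaining delicate point is to ensure pathwise existence of enough such gaps under $\bQ_{N,m}$, which I handle by a first-moment computation.
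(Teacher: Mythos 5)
The proposal follows the paper on the Kingman step and the global strategy (reduce to a surgery that compares $\log\cZ_{N,m,\go,\gb}$ for nearby $m$'s), but the pivotal surgery estimate as stated is incorrect. You claim that $\bigl|\log\cZ_{N,m+1,\go,\gb}-\log\cZ_{N,m,\go,\gb}\bigr|$ is bounded by an $O(1)$ constant \emph{pathwise}, uniformly for $m/N\in[b,1-b]$, by inserting a renewal at $\tau_{i-1}+1$ inside a gap $\eta_i\ge 2$. The $K$-ratio $K(1)K(\eta_i-1)/K(\eta_i)$ is indeed uniformly bounded below, but the inserted contact also picks up a factor $\exp(\gb\go_{\tau_{i-1}+1})$, and since $\go_1$ is unbounded (only $\gl(\cdot)<\infty$ is assumed), there is no deterministic lower bound on this factor. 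Pathwise, the per-step cost is governed by $\exp(-\gb\max_{j\le N}|\go_j|)$, which is not $O(1)$ as $N\to\infty$; averaging over $\Theta(N)$ insertion positions could in principle rescue the estimate, but those positions are trajectory-dependent (hence correlated with the Gibbs weight) and you do not control this. So the per-step $O(1)$ bound is wrong, and all three places where you invoke ``the surgery estimate'' (compacts of $(0,1)$, $\rho=0$, $\rho=1$) inherit the gap.

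For comparison, the paper avoids the per-step argument entirely. It performs the surgery in one shot: it glues a block of length $\ell_N=O(\gep N)$ at a \emph{deterministic} position (the right end of the system), restricted to a single explicit trajectory (all new contacts stacked at the boundary). The log cost is then $-c\gep N-\gb\sum_{j=0}^{\ell_N-1}|\go_{N-j}|$, with the disorder sum controlled to be $O(\gep N)$ almost surely by an exponential Markov bound and Borel--Cantelli. This makes explicit that the disorder cost over $\Theta(\gep N)$ new contacts accumulates to $O(\gep N)$ in aggregate, not $O(\gep N)\cdot O(1)$ per step. Your approach could likely be made to work if you replaced the pathwise $O(1)$ claim by an aggregate bound of the same type (tracking the $\go$'s picked up over the full chain of $\gep N$ insertions, with deterministic insertion positions or a careful decoupling argument, and a Borel--Cantelli estimate), but as written it does not close. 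One minor additional point: $\tilde\cZ_{N,\go}(\rho)=0$ for $N<1/\rho$, so Kingman cannot be applied to the raw sequence; the paper restricts to $N$ that are multiples of $N_0=\lceil 1/\rho\rceil$ for this reason, and your integrability discussion should account for that. The ordering (concavity before or after the uniform estimate) is fine either way: your rational-grid argument for concavity uses only pointwise Kingman convergence and does not presuppose the uniform estimate.
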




\medskip

\begin{proof}
For this proof we we fix $\gb\ge 0$ and drop the dependence on $\gb$ from $\cZ_{N,m, \go, \gb}$
We treat first the case $\rho \in (0,1)$. In this case we apply Kingman Sub-additive Ergodic Theorem \cite{cf:King73}, but one has to take care of the fact  
that  $\cZ_{N,m, \go}=0$ for $ \rho N <1$. We deal with this by considering $N_0= N_0(\rho)=\lceil 1/ \rho \rceil$ and by focusing 
on $(\log \tilde \cZ_{nN_0, \go} (\rho))_{n =1,2, \ldots}$. By Lemma~\ref{th:superadd} and by Kingman Sub-additive Ergodic Theorem we have that
\begin{equation}
\label{eq:defGN0}
\lim_{n \to \infty} \frac 1{nN_0} \log \tilde \cZ_{nN_0, \go} (\rho) \,=\, 
\lim_{n \to \infty} \frac 1{nN_0} \bbE \log \tilde \cZ_{nN_0, \go} (\rho) \, =:\, \tg (\gb, \rho)\, ,
\end{equation}
where the first limit is $\bbP(\dd \go)$-a.s..
We now proceed to a \emph{surgical procedure} to compare the partition function $\cZ_{N,m, \go}$
of the systems   that satisfy  $\vert m/N- \rho\vert \le \gep$ with $ \tilde \cZ_{nN_0, \go} (\rho)$, $n$ suitably chosen: we are therefore going to establish 
\eqref{eq:Gsurg2}, from which \eqref{eq:Gsurg1} follows. By the same trick used in the proof of Lemma~\ref{th:superadd} we have that
\begin{equation}
\label{eq:forGsurg}
\log \cZ_{N,m, \go}\, \ge \,  \log \tilde \cZ_{N-\ell_N,m', \go} (\rho)
+  \log \cZ_{\ell_N,m-m', \theta^{N-\ell_N} \go}\, =: \, T_1 +T_2\, ,
\end{equation} 
where $N-\ell_N$ is a multiple of $N_0$
and 
$m':= \lfloor \rho(N -\ell_N) \rfloor$. Recall that  we have $\vert m/N- \rho\vert \le \gep$ and that  we have the constraint that
\begin{equation}
1 \, \le \,  m- m' \, \le \ell_N\, ,
\end{equation}
which simply means that the \emph{second portion of the system} contains at least one contact and no more than its length.  
These requirements are satisfied if 
\begin{equation}
\label{eq:forGsurg2}
\ell_N \, \ge \, \max \left( 
\frac{2+ \gep N}\rho, \frac {1+ \gep N}{1-\rho}
\right)\, ,
\end{equation}
and we can therefore assume that in addition 
\begin{equation}
\label{eq:forGsurg3}
\limsup_N \frac{\ell_N}N  \, \le \, \gep \max \left( 
\frac{1}\rho, \frac {1}{1-\rho}
\right)\,.
\end{equation}
Note also that these definition require $N$ sufficiently large, more precisely $N$ is larger than a multiple of $N_0$ and the proportionality constant 
depends on $\rho$. Everything we claim below is for these values of $N$. 
Let us remark now that for the term $T_1$ in 
\eqref{eq:forGsurg} we have 
\begin{equation}
\begin{split}
\frac 1N \log \tilde \cZ_{N-\ell_N, \go} (\rho) \, &\ge \,
\left(1-\frac{\ell_N} N \right) \frac 1{N - \ell_N} \log  \tilde \cZ_{N-\ell_N, \go} (\rho)
\\& \ge \, 
\left(1-\frac{\ell_N} N \right)
 \tg(\gb, \rho) - \left \vert 
 \frac 1{N-\ell_N} \log \tilde \cZ_{N-\ell_N, \go} (\rho) - \tg( \gb, \rho)
\right\vert  \\
&\ge \,
\left(1-c \gep \right)
 \tg(\gb, \rho)
- \left \vert 
 \frac 1{N-\ell_N} \log\tilde \cZ_{N-\ell_N, \go} (\rho) - \tg( \gb, \rho)
\right\vert 
 \, ,
 \end{split}
\end{equation}
with $c=2\max(1/\rho,1/(1-\rho))$, cf. \eqref{eq:forGsurg3} (this requires $N$ sufficiently large),
and the rightmost term that vanishes a.s. as $N \to \infty$. 
 
We have now to bound $T_2$ from below. We proceed via a rough bound that consists in selecting only one renewal trajectory (simply the one for which the 
$m-m'$ contacts are at the right end of the system):
\begin{multline}
\label{eq:roughboundbG}
\log \cZ_{\ell_N,m-m', \theta^{N-\ell_N} \go}\, \ge \, 
\log\cZ_{\ell_N-(m-m'-1),1, \theta^{N-\ell_N} \go} + \sum_{j=1}^{m-m'-1} \log \cZ_{1,1, \theta^{N-j} \go} 
\\
\ge \, \log K (\ell_N-(m-m'-1))+ (m-m'-1)\log K(1) + \gb  \sum_{j=0}^{m-m'-1}\go_{N-j} 
\\ \ge -c \gep N + \gb  \sum_{j=0}^{m-m'-1}\go_{N-j} \, \ge \, 
-c \gep N - \gb  \sum_{j=0}^{\ell_N-1}\vert \go_{N-j}\vert 
\, ,
\end{multline}  
where in  the step before the last $c>0$ depends on $\rho$ and we have simply used that 
$m-m' \le \ell_N= O(\gep N)$, see \eqref{eq:forGsurg2} and \eqref{eq:forGsurg3}.
Note that $ \bbE \sum_{j=0}^{\ell_N-1}\vert \go_{N-j}\vert =\ell_N \bbE[\vert \go_1\vert] = O(\gep N)$ and 
that, by an elementary Large Deviation bound via exponential Markov inequality, we see that
$\bbP(  \sum_{j=0}^{\ell_N-1}\vert \go_{N-j}\vert \ge 2 \ell_N \bbE[\vert \go_1\vert] )
\le \exp(-c \ell_N)\le \exp(-c' \gep N)$ so that, by Borel-Cantelli, 
$\sum_{j=0}^{\ell_N-1}\vert \go_{N-j}\vert \le 2 \ell_N=O( \gep N)$ for $N$ sufficiently large, how large may depend on $\go$.

The upper bound is obtained exploiting the same idea: the first step is to observe that 
\begin{equation}
\log \cZ_{N,m, \go}\, \le \, \log  \tilde \cZ_{N+\ell_N, \go} (\rho) - \min_{m'\in \{m_+, m_-\}}
\log \cZ_{\ell_N, m'-m, \theta^N\go}\, ,
\end{equation} 
where $m_-= \lfloor \rho (N+\ell_N)\rfloor$ and $m_+= \lceil \rho (N+\ell_N)\rceil$.
Once again, the first term on the right-hand side is controlled using \eqref{eq:defGN0} and we need a lower bound on 
$\log \cZ_{\ell_N, m'-m, \theta^N\go}$, like before. With the same procedure we obtain 
\begin{equation}
 \min_{m'\in \{m_+, m_-\}}
\log \cZ_{\ell_N, m'-m, \theta^N\go} \, \ge \, -c \gep N - \gb  \sum_{j=0}^{\ell_N-1}\vert \go_{N-j}\vert \,.
\end{equation} 
This term can be bounded a.s.  precisely like for the  lower bound, and, by putting upper and lower bound together we obtain the bound for $\rho \in (0,1)$ on 
$D_{N, \gep, \rho}(\go)$, cf. \eqref{eq:Gsurg2}, claimed in Proposition~\ref{th:Gsurg}.
Note that this bound directly implies \eqref{eq:Gsurg1}.

\smallskip

For the case $\rho=0$ we can use the same trick as in \eqref{eq:roughboundbG} to get the lower bound 
\begin{equation}
\log \cZ_{N,m, \go}\, \ge \, \log K(N-m+1) + (m-1) \log K(1) - \gb \sum_{j=N-m+1}^N \vert \go_j\vert\, ,
\end{equation} 
and it is straightforward to see that $-\log \cZ_{N,m, \go}$ is bounded above by a constant time $\gep$
plus a random contribution that is also $O(\gep N)$ both in $\bbL^1$ and a.s..
So $(\log \cZ_{N,m, \go})_-$ is under control and it suffices to remark that
$\bbE \cZ_{N,m, \go} = \exp(\gl(\gb)m)\bbP (\tau_m=N)\le   \exp(\gl(\gb)m)$ which is in turn bounded by   $\exp(\gl(\gb)\gep N)$.
Therefore $\bbE \log \cZ_{N,m, \go} \le 2\gl(\gb)N \gep$ and Borel-Cantelli yields also that 
$ \log \cZ_{N,m, \go} \le 2\gl(\gb) m\le \gl(\gb)N \gep$ for $N$ larger than a constant that may depend on $\go$. 
This completes the proof for $\rho=0$. 

For the case $\rho=1$ we write
\begin{equation}
\label{eq:pre-fte7}
\log \cZ_{N,m, \go}\, =\, \log \bP \left( \tau_m =N\right)+\log
 \bE_{N, m} \left[ \exp\left( -\gb  \sum_{j=1}^N  \go_j (1-\gd_j) \right)\right]
 + \gb \sum_{j=1}^N  \go_j  \,,
\end{equation}
where  $\bQ_{N,m} (\cdot)= \bP(\cdot \vert  \tau_m=N)$, like in Section~\ref{sec:path-beta0}. By Proposition~\ref{th:propG}, notably \eqref{eq:lemG-0}, 
we have
 $ \vert  \log \bP \left( \tau_m =N\right)-\tg(1)\vert \le 
\tg(1-2\gep)$ for $N$ sufficiently large. The last term is also easily disposed of since by standard estimates for IID sequence of centered variables in $\bbL^p$ for every $p$ we have that this term is a.s. $O(N^{c})$, any $c>1/2$.

We are therefore left with controlling the second term in \eqref{eq:pre-fte7}.
By Jensen inequality have the lower bound
\begin{equation}
\label{eq:skdfte7}
\log
 \bE_{N, m} \left[ \exp\left(-\gb  \sum_{j=1}^N  \go_j (1-\gd_j) \right)\right]\, \ge 
 -\gb  \sum_{j=1}^N  \go_j u_{N,m}(j)
 \, ,
\end{equation}
with $ u_{N,m}(j)=(1- \bE_{N, m}  [\gd_j])$. The bound in $\bbL^1$ is obvious
because $\sum _j  u_{N,m}(j) = N-m \le \gep N$. For the a.s. bound  the
 Markov inequality yields 
\begin{multline}
\bbP \left(  \sum_{j=1}^N \go_j u_{N,m}(j) \, \ge \,  \gep N \right)
\, \le \, \exp\left(- t \gep N + \sum_{j} \gl(u_{N,m}(j)t) \right)
\\
 \le \, \exp\left(- t \gep N + \sum_{j} (u_{N,m}(j))^2t^2 \right)
 \, \le \, \exp\left(- t \gep N +\gep N t^2\right) \, \le \, \exp\left(- c \gep N \right)\, ,
\end{multline}
where we have used that $\gl(u) \sim u^2/2$ for $u$ small, so $\gl(u)\le u^2$ for $\vert u \vert \le u_0$, and $u_{N,m}(j)\le 1$
as well as $\sum _j  u_{N,m}(j) = N-m \le \gep N$. In the last step $c$ is the maximum of $t-t^2$ for $t \in [0, u_0]$.
The Borel-Cantelli Lemma warrants that the quantity in \eqref{eq:skdfte7} is bounded below by $-2\gb \gep N$ for $N$
larger than a random threshold.

For the upper bound it suffices to remark that
\begin{equation}
\label{eq:skdfte7-1}
\bbE \bE_{N, m} \left[ \exp\left( \gb  \sum_{j=1}^N  \go_j (1-\gd_j) \right)\right]\, =\, \exp( (N-m) \gl(\gb))\, 
\le \, \exp(\gl(\gb) \gep N )\, ,
\end{equation}
which, by using again the Markov inequality and Borel-Cantelli, yields the a.s. bound we are looking for.
\end{proof}

\medskip

\begin{proposition}
\label{th:Gconcave}
$\tg(\gb, \cdot)$ is concave on $[0,1]$ and it is continuous up to the boundary. Moreover 
\begin{equation}
\label{eq:Gboundary1}
\lim_{\rho \nearrow 1} \frac{\tg(\gb,1)-\tg(\gb, \rho)}{1-\rho}\, =\, -\infty\,.
\end{equation}
\end{proposition}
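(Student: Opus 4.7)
The plan is to prove the three claims in turn: concavity, continuity up to the boundary, and the infinite-slope relation \eqref{eq:Gboundary1}. For concavity, I would use a standard concatenation argument. Given $\rho_1, \rho_2 \in (0,1)$ and $\lambda \in (0,1)$, for each large $N$ I split $N = N_1 + N_2$ with $N_1/N \to \lambda$, and set $m_i := \lfloor \rho_i N_i \rfloor$; then $m_i/N_i \to \rho_i$ and $(m_1 + m_2)/N \to \lambda \rho_1 + (1-\lambda)\rho_2$. Restricting the expectation in the definition \eqref{eq:cZEG} of $\cZ_{N, m_1 + m_2, \go, \gb}$ to the event $\tau_{m_1} = N_1$ and using independence of the increments of $\tau$ yields
\[
\cZ_{N, m_1 + m_2, \go, \gb} \, \ge \, \cZ_{N_1, m_1, \go, \gb} \cdot \cZ_{N_2, m_2, \theta^{N_1}\go, \gb}\,.
\]
Taking logarithms, averaging over $\go$ (by stationarity of $\bbP$), dividing by $N$ and passing to the limit via the almost sure convergence in Proposition~\ref{th:Gsurg}, one obtains the concavity inequality $\tg(\gb, \lambda \rho_1 + (1-\lambda)\rho_2) \ge \lambda \tg(\gb, \rho_1) + (1-\lambda) \tg(\gb, \rho_2)$.

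For continuity up to the boundary, concavity already delivers continuity on $(0, 1)$ for free, so only the endpoints need attention. Here I would appeal directly to the uniform estimates in Proposition~\ref{th:Gsurg}: the bound $D_{N, \gep, 0} \le c_0 \gep$, after passage to the limit in $N$, gives $|\tg(\gb, \rho) - \tg(\gb, 0)| \le c_0 \gep$ for $0 < \rho \le \gep$, hence $\tg(\gb, \rho) \to 0 = \tg(\gb, 0)$ as $\rho \searrow 0$. In the same vein, $D_{N, \gep, 1} \le c_1 \gep + \tg(0, 1-2\gep) - \tg(0, 1)$, combined with the already-known continuity of $\tg(0, \cdot)$ at $1$ from Proposition~\ref{th:propG}(3), yields continuity of $\tg(\gb, \cdot)$ at $\rho = 1$.

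For the limit \eqref{eq:Gboundary1}, the strategy is to reduce to the $\gb = 0$ case through a Jensen comparison. Applying Jensen's inequality to the convex function $\exp$ under the conditional law $\bP(\cdot \mid \tau_m = N)$ gives, for every $\go$,
\[
\log \cZ_{N, m, \go, \gb} \, \ge \, \log \bP(\tau_m = N) + \gb \sum_{j=1}^N \go_j \, \bP(\gd_j = 1 \mid \tau_m = N)\,.
\]
Averaging over $\go$ (the second term vanishes because $\bbE \go_j = 0$), dividing by $N$ and taking the limit with $m/N \to \rho$, one obtains $\tg(\gb, \rho) \ge \tg(0, \rho)$ for every $\rho \in [0, 1]$. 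Since $\tg(\gb, 1) = \log K(1) = \tg(0, 1)$ by Proposition~\ref{th:Gsurg}, the chord slope satisfies
\[
\frac{\tg(\gb, 1) - \tg(\gb, \rho)}{1 - \rho} \, \le \, \frac{\tg(0, 1) - \tg(0, \rho)}{1 - \rho}\,,
\]
and the right-hand side tends to $-\infty$ as $\rho \nearrow 1$ by Proposition~\ref{th:propG}(3), giving \eqref{eq:Gboundary1}. The only possible obstacle is the integer-part bookkeeping in the concatenation step of the concavity argument, but this is routine since Proposition~\ref{th:Gsurg} provides convergence uniformly in $m$ subject only to $m/N \to \rho$.
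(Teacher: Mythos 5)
Your proposal is correct and follows essentially the same route as the paper: concatenation (restriction to $\tau_{m_1}=N_1$ plus independence of increments) for concavity, the endpoint estimates of Proposition~\ref{th:Gsurg} for boundary continuity, and a Jensen argument yielding $\tg(\gb,\rho)\ge\tg(0,\rho)$ with equality at $\rho=1$ for \eqref{eq:Gboundary1}. The only cosmetic difference is that you apply Jensen directly to $\exp\bigl(\gb\sum_j\go_j\gd_j\bigr)$ under $\bQ_{N,m}$, whereas the paper derives the identical lower bound $\bbE\log\cZ_{N,m,\go}\ge\log\bP(\tau_m=N)$ by taking expectations in \eqref{eq:pre-fte7} and \eqref{eq:skdfte7}.
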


\medskip

\begin{proof}
Choose $\gl \in (0,1)$ and  $\rho_1, \rho_2 \in [0,1]$ with $\rho_1< \rho_2$. We have
\begin{multline}
\frac 1N  \bbE \log \cZ_{N, \lceil \gl \rho_1+(1-\gl) \rho_2 N \rceil, \go } \, \ge 
\\
\frac 1N  \bbE \log \cZ_{\lceil \gl N \rceil, \lceil \gl \rho_1 N \rceil, \go }+
\frac 1N  \bbE \log \cZ_{N-\lceil \gl N \rceil, \lceil \gl \rho_1+(1-\gl) \rho_2 N \rceil- \lceil \gl \rho_1 N \rceil, \go }\,.
\end{multline}
By Proposition~\ref{th:Gsurg} we can pass to the limit $N \to \infty$ and we obtain 
\begin{equation}
\tg\left(\gb, \gl \rho_1 +(1-\gl) \rho_2\right) \, \ge \, \gl \tg(\gb, \rho_1) +(1-\gl) \tg(\gb, \rho_2)\, ,
\end{equation}
so $\tg(\cdot)$ is concave, hence continuous because it is bounded. Both the continuity at $0$ and $1$,
with $\tg(\gb, 0)=0$ and $\tg(\gb,1)=\log K(1)$,  follow directly from the estimates  in Proposition~\ref{th:Gsurg} 
(we observe that   the continuity in $(0,1)$ can be extracted directly from Proposition~\ref{th:Gsurg} as well).

For what concerns \eqref{eq:Gboundary1} we need an adequate lower bound on $\tg(\gb, \rho)$. 
This follows by taking the expectation of both sides of 
\eqref{eq:pre-fte7} and \eqref{eq:skdfte7}. This way we obtain 
\begin{equation}
\label{eq:newfte7}
\bbE \log \cZ_{N,m, \go}\, \ge \, \log \bP \left( \tau_m =N\right) \,,
\end{equation}
and the general bound $\tg(\gb, \rho) \ge \tg(0, \rho)$. This inequality becomes an equality at $\rho=1$ (this follows once again from \eqref{eq:pre-fte7}) and therefore 
$\tg(\gb,1)- \tg(\gb, \rho) \le \tg(0,1)- \tg(0, \rho)$, so that the claim follows from the analogous claim for the case $\gb=0$.
\end{proof}

\subsection{Proof Theorem~\ref{th:GF}}

\begin{proof}
Let us make the preliminary remark that it suffices to show a.s. convergence because 
$( \log Z_{N, \go, \gb, h}^\Psi /N)_{N=1,2, \ldots}$ is uniformly integrable and therefore we have also convergence in $\bbL^1$. 
Uniform integrability can be established 
by making  upper and lower bounds on $Z_{N, \go, \gb, h}^\Psi $ in the spirit of the repeated estimates we used in  the proof of Proposition~\ref{th:Gsurg} (but what suffices here is substantially rougher), so   
one readily  sees that there exists $C>0$ (that depends on $\Psi$, on $K(\cdot)$ and on $h$) such that
\begin{equation}
\frac 1N \left \vert  \log Z_{N, \go, \gb, h}^\Psi  \right\vert \, \le \, \frac \gb N \sum_{j=1}^N \vert \go_j \vert  + C\, .
\end{equation}
Since the expectation of the square  of the right-hand side is bounded uniformly in $N$,  uniform integrability is proven. 

We now proceed with proving a.s. convergence by suitable lower and upper bounds on the sequence. Remark that the expected limit 
$\sup_{\rho \in [0,1]} \left( h \rho +H(\rho) + \tg(\gb, \rho)\right)$
 is in fact a maximum which is uniquely achieved at $\rho_h \in [0, 1)$ by the assumptions on $H(\cdot)$ and by what we have proven on $\tg(\gb, \cdot)$ (that is, Proposition~\ref{th:propG} and Proposition~\ref{th:Gconcave}).

For the lower bound it suffices to remark that for every $\rho$ \eqref{eq:basic-decomp}  (with the notation \eqref{eq:cZEG}) yields
thanks to \eqref{eq:defPsi1} of Definition~\ref{def:Psi}
\begin{equation}
 \label{eq:basic-decomp.1}
 Z_{N, \go, \gb, h}^\Psi \,\ge \, 
   c(b) \exp\left(mh+NH(m/N)-bN\right)  
\cZ_{N,m, \go, \gb}\,,
\end{equation}  
With $b$ that can be chosen arbitrarily small. 
If  $\rho_h\in (0,1)$ it suffices to choose $\rho=\rho_h$ and pass to the limit: by Proposition~\ref{th:Gsurg} we obtain that a.s.
\begin{equation}
 \label{eq:frbasic-decomp.1}
 \liminftwo{N\to \infty:}{m/N\sim \rho_h} \frac 1N Z_{N, \go, \gb, h}^\Psi \,\ge \, 
   \rho_h h+H(\rho_h)+ \tg( \gb,\rho_h) -b\,.
\end{equation}  
Since $b>0$ is arbitrary, we are done for the case $\rho_h\in (0,1)$. If 
$\rho_h=0$ we can repeat the same argument for $\rho=\rho_j$, with $\rho_j \searrow 0$, and the lower bound analysis is complete.

For the upper bound we use \eqref{eq:defPsi3} of Definition~\ref{def:Psi}
\begin{equation}
 \label{eq:basic-decomp.2}
 Z_{N, \go, \gb, h}^\Psi \,\le \, c(b)\sum_{m=1}^N
   \exp\left(mh+NH(m/N)+ bN \right)  
\cZ_{N,m, \go, \gb}\,,
\end{equation}  
so
\begin{equation}
\limsup_N \frac 1N \log Z_{N, \go, \gb, h}^\Psi \,\le \, \limsup_N\max_{m=1, \ldots, N}  \left(
\frac m N h +H\left( \frac m N \right) + \frac 1N \log \cZ_{N,m, \go, \gb} +b
\right)  \, .
\end{equation}
To deal with the maximum we fix 
a small value positive value of $\tilde \rho$ and 
the grid of densities $\rho_j :=\tilde \rho+  j(1-2 \tilde \rho)/M$, $M$ a positive integer, for $j=0, 1, \ldots, M$: we can therefore group the maximum into 
$M+3$ blocks.
We can now apply   Proposition~\ref{th:Gsurg} in taking the limit $N \to \infty$, with $\gep=1/M$ for for the blocks corresponding to the densities 
$\rho_0= \tilde \rho, \rho_1, \ldots , \rho_M= 1- \tilde \rho$, and with $\gep= \tilde \rho$ fro the two boundary blocks.  
It is then a matter of sending first $M$ to $\infty$ and $ \tilde \rho $ to zero. 
Since also $b$ can be chosen arbirarily small,  we conclude that a.s.
\begin{equation}
\limsup_N \frac 1N \log Z_{N, \go, \gb, h}^\Psi \,\le \, \sup_{\rho \in [0,1]} \left( h \rho +H(\rho) + \tg(\gb, \rho)\right)  \, .
\end{equation}
This completes the proof of Theorem~\ref{th:GF}.
\end{proof}

\appendix

\section{Circular DNA models}
\label{sec:circDNA}

Three  models are considered  in \cite{cf:bar1,cf:bar2}:
two of them enter the  setting of our work, the third demands a 
$\Psi(m, N)$ that vanishes when $m/N<1/2$. 
A wider set-up that encompasses all models in in \cite{cf:bar1,cf:bar2} can be found in \cite[Ch.~4]{cf:thesis-BH} (to which he refer also for a more complete presentation on circular DNA and renewals).
\smallskip

The two models are:
\smallskip

\begin{enumerate}
\item \emph{Model with overtwist}: 
\begin{equation}
\Psi(m, N)\, =\, \exp\left(-\chi (N-m)^2/m\right)\, ,
\end{equation}
with $\chi>0$. In this case $Q(m, N)\equiv 1$ and $H(\rho)= \chi (1-\rho)^2/\rho$. 
\item \emph{Model with overtwist and supercoils}:
\begin{equation}
\Psi(m, N)\, =\, 
\sum_{n=0}^m 
\frac 1{2^m} \binom{m}{n}
\exp\left( n w - \chi\frac{(N-m-n)^2}m\right) 
\, ,
\end{equation}
with $\chi$ and $w$ positive constants. In this case 
\begin{equation}
H(\rho) =  \sup_{\zeta \in[0, \rho] } \psi (\zeta, \rho) = \psi (\zeta_0(\rho), \rho),
\end{equation}
where
\begin{equation}
\psi (\zeta, \rho) := \zeta w -\rho \log 2 - \chi \frac{(1 - \rho -\zeta)^2}{\rho} +  \rho \log \rho   -  \zeta\log( \zeta)     - (\rho -\zeta) \log(\rho -\zeta).
\end{equation}
$Q(m, N)$ is therefore (implicitly) defined and  one can derive the asymptotic behavior  for $N\to \infty$ and $m/N$ asymptotically constant:
\begin{equation}
Q(m,N) \sim q\left(\frac{m}{N}\right),\  \text{ with }\   q(\rho) := \sqrt{\frac{\rho}{ \zeta_0( \rho)(\rho- \zeta_0( \rho))|\partial_\zeta^2 \psi (\zeta_0( \rho),\rho)|} }.
\end{equation}
The function $\psi$ is concave on the convex domain $\{0 \le  \zeta \le  \rho \le  1\}$. Thus, $H$ is also concave. Moreover $H$ is analytic on $(0,1)$. 
\end{enumerate}

\smallskip

We remark that in both examples $H(0) = -\infty$ and (of course) $H'(0)= \infty$. So the circular DNA models are localized for all values of the parameters
and they display a big-jump transition if $\gb=0$ and $\ga >1$. The third model in \cite{cf:bar1,cf:bar2}, that we do not consider, is the $\chi=\infty$ limit of the second model.

\section{On the strict convexity of the disordered pinning free energy}
\label{sec:unifconvex}

\begin{theorem}
\label{th:unifconvex}
Consider the $\Psi\equiv 1$ model, that is the disordered pinning model. 
For every $\gb\ge 0$ and every $h >h_c(\gb)$  we have $\partial^2_h \tf (\gb, h) >0$. 
\end{theorem}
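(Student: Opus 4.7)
The plan is to reduce strict convexity of the free energy to a variance lower bound via the identity
\begin{equation*}
\partial_h^2 \log Z_{N, \go, \gb, h} \,=\, \mathrm{Var}_{\bP_{N, \go, \gb, h}}\!\left(\textstyle\sum_{j=1}^N \gd_j\right),
\end{equation*}
valid for each $N$ and each realization of $\go$.

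The central step is to establish that, for every $h_0 > h_c(\gb)$, there exist a neighborhood $U \ni h_0$ with $\overline U \subset (h_c(\gb), \infty)$, a constant $c = c(\gb, h_0) > 0$, and $N_0 < \infty$ such that, $\bbP(\dd \go)$-a.s., for $N \ge N_0$,
\begin{equation*}
\inf_{h \in U}\, \mathrm{Var}_{\bP_{N, \go, \gb, h}}\!\left(\textstyle\sum_{j=1}^N \gd_j\right) \,\ge\, c N.
\end{equation*}
I would prove this by exploiting the renewal structure of the pinning model in the localized phase, where $\tf(\gb, h) > 0$ on $U$. Lemma~\ref{th:verylong} (or the sharper \cite[Th.~2.5]{cf:GTalea}) ensures that the largest inter-contact spacing under $\bP_{N, \go, \gb, h}$ is at most $O(\log N)$ with exponentially-decaying tails, so on a partition of $[1, N]$ into $\lfloor N/L\rfloor$ blocks of a fixed large length $L$, each block contains at least one contact with quenched probability arbitrarily close to one. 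Conditioning on the position of the first contact in each block, the strong Markov property of $\tau$ renders the contact counts of different blocks conditionally independent; a within-block variance estimate, relying on the non-degeneracy of $K(\cdot)$ and the boundedness of the disorder's multiplicative contribution on each fixed-size block, gives a positive variance contribution $c_0 > 0$ per block, whence $\mathrm{Var}(\sum \gd_j) \ge c_0 \lfloor N/L\rfloor \ge c N$.

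Granted the variance lower bound, the conclusion is routine. Since $\tf(\gb, \cdot)$ is $C^\infty$ on $(h_c(\gb), \infty)$ (property \ref{P:smooth}) and the convex functions $h \mapsto \log Z_{N, \go, \gb, h}/N$ converge $\bbP$-a.s.\ to $\tf(\gb, h)$, their derivatives also converge: $\lim_N \partial_h \log Z_{N, \go, \gb, h}/N = \partial_h \tf(\gb, h)$ for each $h > h_c(\gb)$. Hence for $\gep > 0$ with $[h_0 - \gep, h_0 + \gep] \subset U$ and $N \ge N_0$,
\begin{equation*}
\frac{1}{N}\bigl(\partial_h \log Z_{N, \go, \gb, h_0+\gep} - \partial_h \log Z_{N, \go, \gb, h_0-\gep}\bigr) \,=\, \frac{1}{N}\int_{h_0-\gep}^{h_0+\gep} \mathrm{Var}_{\bP_{N, \go, \gb, h}}\!\left(\textstyle\sum \gd_j\right) dh \,\ge\, 2 c \gep,
\end{equation*}
and taking $N \to \infty$ yields $\partial_h \tf(\gb, h_0+\gep) - \partial_h \tf(\gb, h_0-\gep) \ge 2 c \gep$. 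Dividing by $2\gep$ and invoking continuity of $\partial_h^2 \tf(\gb, \cdot)$ gives $\partial_h^2 \tf(\gb, h_0) \ge c > 0$.

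The main obstacle is the variance lower bound: quenched in $\go$ and uniform for $h$ in a neighborhood of $h_0$. The block decomposition is classical for $\gb = 0$, but in the disordered setting the block partition functions are $\go$-dependent and the decoupling between blocks is only approximate; making the argument precise requires controlling ratios of block partition functions via concentration (e.g.\ via Proposition~\ref{th:Gsurg}) or adapting two-replica arguments in the spirit of the disorder-relevance analysis for pinning.
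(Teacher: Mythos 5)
Your overall strategy---lower-bounding the variance of $\sum_j\gd_j$ by decoupling into blocks and extracting a positive per-block contribution---is the right idea and parallels the paper's proof, but the crucial variance bound is left open, and two of the steps as sketched would fail. First, the ``boundedness of the disorder's multiplicative contribution on each fixed-size block'' is false: the $\go_j$ are unbounded, so the conditional contact probabilities (and hence the per-block variance) can degenerate on blocks hosting large disorder values. Any correct argument must truncate: the paper defines $\gs^2_\star(L):=\inf\{\gs^2(\go):|\go|\le L\}>0$ and pays a price of $\bbP(|\go_1|>L)$ per site, which is killed by taking $L$ large. Second, conditioning on ``the position of the first contact in each block'' does not give clean conditional independence across blocks, because those positions are random; the renewal Markov property decouples excursions between a \emph{deterministic} set of pinned sites. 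The argument can be rescued along these lines, but it is not automatic as written, and you correctly flag the need for a concentration or two-replica input.

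The paper resolves both difficulties more economically. It conditions on the $\gs$-algebra $\cF_e$ generated by the even-indexed $\gd_{2j}$: given $\cF_e$, the odd-indexed occupation variables split into conditionally independent pieces between consecutive pinned even sites, by the renewal Markov property. Among those pieces, each gap of length two contains a single conditional Bernoulli variable, whose variance is at least $\gs^2_\star(L)$ whenever the relevant $\go$ is bounded by $L$; the proof then reduces to showing a positive density of adjacent pinned even sites, which it gets from the marginal convergence in \cite[Theorem~2.2]{cf:GTalea} together with a surgery estimate. The paper also sidesteps your a.s., uniform-in-$h$ requirement: it uses the identity $\partial^2_h\tf(\gb,h)=\lim_N N^{-1}\bbE\,\var_{N,\go}(\sum_j\gd_j)$ from \cite[Proof of Theorem~2.1]{cf:GTalea}, so only a first-moment bound is needed. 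Your route (integrating the convex derivatives of $N^{-1}\log Z_{N,\go,\gb,\cdot}$ over an interval) is legitimate and would also work, but it forces you to prove a harder quenched estimate where the annealed one suffices.
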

\medskip

\begin{proof} Let us remark that for $\gb=0$ the result can be established by explicit computations, but the proof that we give here works for the $\gb=0$ case as well.
In this proof $\bP_{N, \go}= \bP_{N, \go, \gb, h}$ and $ \var_{N, \go}$ is the variance with respect to  $\bP_{N, \go}$.
We know from \cite[Proof of Theorem~2.1]{cf:GTalea} that for $h >h_c(\gb)$
\begin{equation}
\label{eq:here5.1}
\partial^2_h \tf (\gb, h) \, =\, \lim_{N \to \infty} \frac 1N \bbE\,  \var_{N, \go} \left(
\sum_{j=1}^{N-1} \gd_j
\right)\,.
\end{equation}
We are going to condition on even sites, so let us replace $N$ with $2N$ and let us denote by   $\cF_e$ the $\gs$-algebra generated by 
$\gd_j$ with $j$ even: $\var_{N, \go, e} (\cdot)$ is going to denote the variance with respect to $\bP_{N, \go}(\, \cdot \, \vert \cF_e)$.
By Jensen's inequality
\begin{equation}
\label{eq:here5.2}
 \var_{2N, \go} \left(
\sum_{j=1}^{2N-1} \gd_j
\right) \, \ge \, \bE_{2N, \go} \left[
 \var_{2N, \go, e} \left(
\sum_{j=1}^{N} \gd_{2j-1}
\right)
\right]\, .
\end{equation}
We know consider the conditional variance on the set $E_\gs:= \{ \tau : \, \gd_{2j} =\gs_j $ for $j =1,2, \ldots, N-1\}$
for every $\gs \in \{0,1\}^{\{1,\ldots, N-1\}}$. We set $n(\gs):= \sum_{j=1}^{N-1} \gs_j$ and $\ell_0:=0$ and we define
iteratively  $\ell_{j+1}= \min\{ \ell > \ell_j: \, \gs _\ell=1\}$ for $j \le n(\gs)-1$. We then redifine $\ell_j$ to be $2\ell_j$ and  set also $\ell_{n(\gs)+1}:=2N$. Therefore
$\ell_0, \ell_1, \ldots,  \ell_{n(\gs)+1}$ are the $n(\gs)$ pinned even sites, plus $0$ and $2N$ that are pinned from the start. 
Note that
\begin{equation}
\sum_{j=1}^N \gd_{2j-1} \, =\, \sum_{k=1}^{n(\gs)+1} \sum_{j=1+\ell_{k-1}/2}^{\ell_k/2} \gd_{2j-1}\, ,
\end{equation}
and remark that, under $\bP_{N, \go}(\, \cdot \, \vert \cF_e)(\tau)$ with $\tau \in E_\gs$, 
the random variables 
\begin{equation}
\left(  \sum_{j=1+\ell_{k-1}/2}^{\ell_k/2} \gd_{2j-1}
\right)_{k=1, \ldots, n(\gs)+1}\, ,
\end{equation}
are independent. Therefore on $E_\gs$
\begin{multline}
\var_{2N, \go, e} \left(
\sum_{j=1}^{N} \gd_{2j-1}
\right) \, =\, 
\\
\sum_{k=1}^{n(\gs)+1} 
\var_{2N, \go, e} \left( \sum_{j=1+\ell_{k-1}/2}^{\ell_k/2} \gd_{2j-1}
\right) 
\, \ge\, 
\sumtwo{k=1, \ldots, n(\gs)+1}{\ell_k -\ell_{k-1}=2}
\var_{2N, \go, e} \left(\gd_{\ell_k-1} \right)\,. 
\end{multline}
Since $\gd_{\ell_k-1}$, under the conditional measure we are considering, is just a Bernoulli random variable 
with parameter (we use the short-cut notation $\go=\go_{\ell_k-1}$)
\begin{equation}
p( \go)\, :=\, \frac{K(1)^2 \exp(h + \gb \go  )}{K(1)^2 \exp(h + \gb\go  ) +K(2)}\, , 
\end{equation}
we see that for $k$ such that $\ell_k-\ell_{k-1}=2$
\begin{equation} 
\var_{2N, \go, e} \left(\gd_{\ell_k-1} \right)\, =\, p\left( \go_{\ell_k-1} \right)
\left( 1- p\left( \go_{\ell_k-1} \right)\right)\, =: \, \gs ^2\left( \go_{\ell_k-1} \right) ,
\end{equation}
and therefore  
\begin{equation}
\var_{2N, \go, e} \left(
\sum_{j=1}^{N} \gd_{2j-1}
\right) \, \ge \, \bE_{2N, \go} \left[ \sum_{k=0}^{N-1} \gd_{2k} \gd_{2k+2}\gs^2\left( \go_{\ell_k-1} \right) \right]\, .
\end{equation}
Now we set $\gs_\star^2(L):= \inf \{ \gs^2(\go):\, \vert \go\vert \le L\}>0$. We remark that $\gs_\star^2(L)>0$   for every  $L>0$, but in what follows  we are forced to work with $L$ such that  $\bbP (\vert \go \vert < L)>0$, that is for  $L$ above a threshold.
With this notation
\begin{equation}
\begin{split}
\bbE \var_{2N, \go, e} \left(
\sum_{j=1}^{N} \gd_{2j-1}
\right) \, &\ge \, \gs_\star^2(L) \bbE \bE_{2N, \go} \left[ \sum_{k=0}^{N-1} \gd_{2k}\gd_{2k+2} \ind_{\vert \go_{2k+1}\vert \le L} \right]
\\
& \ge \,  \gs_\star^2(L)
\left(  \bbE \bE_{2N, \go} \left[ \sum_{k=0}^{N-1} \gd_{2k}\gd_{2k+2} \right]
- N \bbP\left( \vert \go \vert >L\right)\right) 
\, ,
\end{split}
\end{equation}
and we are left with showing that  
\begin{equation}
\label{eq:leftwith}
q\, :=\, \liminf_N \frac 1N \bbE \bE_{2N, \go} \left[ \sum_{k=0}^{N-1} \gd_{2k}\gd_{2k+2} \right] \, >0\, ,
\end{equation} 
because it suffices to choose $L$ so that $\bbP\left( \vert \go \vert >L\right) \le q/2$
to obtain, see  \eqref{eq:here5.1}-\eqref{eq:here5.2},  that $\partial_h^2 \tf (\gb, h)\ge \gs_\star^2(L) q/4>0$.

In order to  establish \eqref{eq:leftwith} we want to show that the quantity under analysis is bounded below by 
$\lim_NN^{-1} \bbE \bE_{2N, \go}[\sum_{j=1}^{2N} \gd_j]$, which is equal to  $2 \partial \tf (\gb, h)>0$, times a positive constant.
This can be done by explicit estimates, but for sake of conciseness we use that, for any choice of a sequence $(b_N)_{N \in \bbN}$
of positive integer numbers satisfying  $\lim_N b_N= \infty$ and $\lim_N{b_N}/N=0$, by \cite[Theorem~2.2]{cf:GTalea} we have 
uniformly on $k\in [b_N, N-b_n]\cap \bbN$
\begin{equation}
\label{eq:uniformA}
\lim_{N \to \infty} \bbE   \bE_{2N, \go}[\gd_k] \, =\, \partial_h\tf (\gb, h)
\ \ \ \ \text{ and } \ \ \ 
\lim_{N \to \infty} \bbE  \bE_{2N, \go}[\gd_k\gd_{k+2}]\, =:\, \tl(\gb,h)\, ,
\end{equation}
where the second statement is just the existence of the limit and \eqref{eq:leftwith} follows once $ \tl(\gb,h)>0$ is shown.
For this we write $\bE_{2N, \go}[\gd_k\gd_{k+2}]=  \bE_{2N, \go}[\gd_k]   \bE_{2N, \go}[\gd_{k+2}\vert \gd_{k}=1 ]$
and 
\begin{equation}
 \bE_{2N, \go}[\gd_{k+2}\vert \gd_{k}=1 ]
\, =\,  \frac{Z_{2, \theta^{k}\go, \gb ,h}  Z_{N-k-2, \theta^{k+2}\go, \gb ,h} }
{Z_{N-k, \theta^{k}\go, \gb ,h} }\, \ge \, C \exp\left(- \gb \left(\vert \go_{k+1}\vert + \vert \go_{k+2}\vert \right) \right)\, ,
\end{equation}
where the constant $C>0$ depends on $h$ and on $K(\cdot)$: this estimate is a standard surgery procedure (\cite[Ch.~2]{cf:GB}, full details can be found 
in  \cite[Sec.~5.5]{cf:thesis-BH}) for which one uses notably the regularly varying character of $K(\cdot)$. 
Therefore 
\begin{equation}
  \bE_{2N, \go}[\gd_k\gd_{k+2}]\, \ge Ce^{-2 \gb L}  \bE_{2N, \go}[\gd_k] \left(1-\ind_{\vert \go_{k+1}\vert + \vert \go_{k+2}\vert > L }
  \right)\, ,
\end{equation}
and, in turn, we have
\begin{equation}
  \bbE\bE_{2N, \go}[\gd_k\gd_{k+2}]\, \ge\,  Ce^{-2 \gb L}  \left(\bbE\bE_{2N, \go}[\gd_k]- \bbP
  \left( \vert \go_{1}\vert > L \right)\right)\,  .
  \end{equation}
  It suffices now to choose $L$ so that  $\bbP
  \left( \vert \go_{1}\vert > L \right) \le \partial_h\tf(\gb, h)/2$ to obtain that, uniformly in $k$ like in \eqref{eq:uniformA}, we have
 \begin{equation}
 \liminf_N
  \bbE\bE_{2N, \go}\left[\gd_k\gd_{k+2}\right]\, \ge\, \frac 12 Ce^{-2 \gb L}  \partial_h\tf(\gb, h) \, >\, 0\, ,
  \end{equation}
  and we are done.
\end{proof}

\medskip

We include here 
 the result proved under restrictive conditions in Remark~\ref{rem:de-loc}.

\medskip

\begin{proposition}
\label{th:de-loc2}
For every $\gb \ge 0$ and every $h$ 
\begin{equation}
\tf_H(\gb, h)\,  \ge\,  H(0)\, .
\end{equation}
\end{proposition}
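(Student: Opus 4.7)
The cleanest route is to invoke the variational formula already established in Theorem~\ref{th:GF}, namely
$$\tf_H(\gb, h) \, =\, \sup_{\rho \in [0,1]} \left( h\rho + H(\rho) + \tg(\gb, \rho) \right).$$
If $H(0) = -\infty$ the claim is trivial, so I may assume $H(0) \in \bbR$. By Definition~\ref{def:Psi}(1) the function $H$ is continuous up to the boundary, and by Proposition~\ref{th:Gconcave} combined with $\tg(\gb, 0) = 0$ from Proposition~\ref{th:Gsurg}, the function $\tg(\gb, \cdot)$ is continuous on $[0,1]$ with $\tg(\gb, 0) = 0$. Consequently
$$\lim_{\rho \searrow 0} \left( h\rho + H(\rho) + \tg(\gb, \rho) \right) \, =\, H(0),$$
and since the supremum dominates any limit along $\rho \searrow 0$, we obtain $\tf_H(\gb, h) \geq H(0)$.

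An alternative direct argument bypasses the variational formula and tracks more closely the flavor of Remark~\ref{rem:de-loc}. Starting from the decomposition \eqref{eq:basic-decomp}, fix $\rho \in (0, 1)$ small, let $m_N := \lfloor \rho N \rfloor$, and keep only the single term $m = m_N$. The lower bound \eqref{eq:defPsi1} of Definition~\ref{def:Psi} gives $Q(m_N, N) \geq c(b) \exp(-b N)$ for an arbitrary $b > 0$, while Proposition~\ref{th:Gsurg} yields $(1/N) \log \cZ_{N, m_N, \go, \gb} \to \tg(\gb, \rho)$ $\bbP(\dd\go)$-almost surely. Combining these bounds,
$$\liminf_{N \to \infty} \frac{1}{N} \log Z^\Psi_{N, \go, \gb, h} \, \geq \, h\rho + H(\rho) + \tg(\gb, \rho) - b.$$
Letting first $b \searrow 0$ and then $\rho \searrow 0$, the continuity of $H$ and of $\tg(\gb, \cdot)$ at the origin, together with $\tg(\gb, 0) = 0$, give the conclusion.

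There is no real obstacle here. The sole reason Remark~\ref{rem:de-loc} had to impose the extra assumption $\liminf_N (\log Q(1, N))/N \geq 0$ is that the sketch there restricts to the single-contact configuration $\tau_1 = N$, which corresponds to a contact density $1/N$ that lies outside the range $m/N \in [u, v]$ controlled by \eqref{eq:defPsi1}. By working instead at a small but macroscopically positive contact fraction $\rho > 0$, we remain squarely within the hypotheses of Definition~\ref{def:Psi}, and the desired bound is recovered by passing to the limit $\rho \searrow 0$ only at the end.
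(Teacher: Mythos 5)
Your proof is correct, and the first route is genuinely different from the paper's. Once Theorem~\ref{th:GF} is in hand, evaluating the supremand at $\rho = 0$ yields $h\cdot 0 + H(0) + \tg(\gb,0) = H(0)$ directly (no limiting argument is even needed, since $\rho = 0$ lies in the domain $[0,1]$ of the supremum), and the case $H(0)=-\infty$ is trivial as you note. The paper instead proves Proposition~\ref{th:de-loc2} from scratch in Appendix~\ref{sec:unifconvex}, without invoking the variational formula: it restricts the expectation defining $Z^\Psi_{N,\go,\gb,h}$ to the event $E_{N,b}$ of renewals having $\lfloor bN\rfloor$ contacts all at distance at least $bN$ from the boundary, applies Jensen's inequality under $\bP(\,\cdot\mid E_{N,b})$ to dispose of the exponential weight (using $\bbE[\go_j]=0$), and then bounds $\frac 1N\log\bP(E_{N,b})$ from below via Proposition~\ref{th:propG} before sending $b\searrow 0$. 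What the paper's hands-on argument buys is logical economy: it uses only the $\gb=0$ asymptotics of $\tg(0,\cdot)$ plus Jensen, rather than the full machinery needed for the disordered variational formula. Your second route (restricting to a single $m=\lfloor\rho N\rfloor$ and invoking Proposition~\ref{th:Gsurg}) is close in spirit to the paper's, though it leans on the $\bbP(\dd\go)$-a.s.\ convergence statement, which is heavier than what the paper actually requires here. Your diagnosis of why Remark~\ref{rem:de-loc} needed the additional hypothesis $\liminf_N(\log Q(1,N))/N\ge 0$ --- namely that a single contact gives density $1/N$, outside the range $[u,v]$ controlled by \eqref{eq:defPsi1} --- is exactly right and matches the reason the paper works at the shrinking macroscopic density $b$ rather than at $\tau_1=N$.
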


\medskip

\begin{proof}
We can  assume $H(0)> -\infty$ and, 
with $b>0$ and $u$, $v$ and $c(b)$ like in \eqref{eq:defPsi1} of Definition~\ref{def:Psi}, we obtain that 
\begin{multline}
 Z^\Psi_{N, \go \gb ,h} \, \ge \, c(b) \exp\left(N\min_ {\rho \in (0, b]} H(\rho) - bN\right)
 \\ \bE\left[
 \exp\left(\gb \sum_{j=1}^N (\gb \go_j +h) \gd_j\right); \, \tau_{\lfloor b N \rfloor} =N, \, 
 \frac{\tau} N\cap\left( (0, b)\cup (1-b, 1)\right) = \emptyset\right]
 \\ 
 \, , 
 \end{multline}
 a bound that is obtained simply by restricting the partition function to the renewals with $\lfloor b N \rfloor$ contacts 
 and all at distance at least $bN$ from the boundary.
 Therefore 
 \begin{equation}
 \liminf_{N \to \infty}  \frac 1N \bbE \log Z^\Psi_{N, \go \gb ,h} \, \ge \, H(0) + \liminf_{N \to \infty}  \frac 1N \bbE \log 
 \bE\left[
 \exp\left( \sum_{j=1}^N (\gb \go_j +h) \gd_j\right); \,    E_{N, b} \right] \, ,
 \end{equation} 
 with $E_{N, b}:= \{ \tau_{\lfloor b N \rfloor} =N, \, 
 (\tau/N)\cap\left( (0, b)\cup (1-b, 1)\right) = \emptyset\}$. Using $\bP'(\cdot):= \bP(\cdot \vert E_{N, b})$ we see that by Jensen's inequality the quantity of which we take 
  inferior limit in the right-hand side  of the last expression is bounded below by 
  \begin{equation}
     \frac 1N  
 \bbE \left[ \sum_{j=1}^N (\gb \go_j +h) \bE'[\gd_j] \right] 
 +    \frac 1N   \log \bP\left(   E_{N, b}\right)
 \, =\,  h \frac{ \lfloor bN\rfloor} N+   \frac 1N   \log \bP\left(   E_{N, b}\right).
  \end{equation} 
  Of course the limit of the first term is $hb$, which can be made arbitrarily small by choosing $b$ small. The remaining term 
   is bounded below, for $N \to \infty$,  by a (negative) quantity that vanishes as 
  $b \searrow 0$ because 
 $ \bP(E_{N, b})$ is bounded below by  $K(\lceil bN\rceil)^2$ times 
 $\bP ( \tau_{\lfloor b N \rfloor-2}= N- 2\lceil bN\rceil)$, so, by Proposition~\ref{th:propG},  $\liminf_N (1/N) \log  \bP(E_{N, b})=0$ for $\ga>1$ and it vanishes as $b \searrow 0$ for $\ga \in (0,1]$.
  \end{proof}

 \section*{Acknowledgements}
We are grateful to Quentin Berger and Hubert Lacoin for several exchanges. We thank in particular  Hubert Lacoin
for  suggesting the argument of proof of Theorem~\ref{th:unifconvex}.
G.G. also acknowledges support from grant ANR-15-CE40-0020.

\end{document}